\newtheorem{theorem}{Theorem}
\newtheorem{definition}{Definition}
\newtheorem{lemma}{Lemma}
\newtheorem{corollary}{Corollary}
\newcommand{\ket}[1]{\left|#1\right\rangle}
\newcommand{\bra}[1]{\left\langle #1\right|}
\newcommand{\F}{\mathbb{F}_2}
\newcommand{\N}{\mathbb{N}}
\newcommand{\C}{\mathbb{C}}
\newcommand{\orderedpairs}[1]{Pairs \left(#1\right)}
\newcommand{\twobytwo}[4]{\begin{pmatrix} #1 & #2 \\ #3 & #4\end{pmatrix}}
\newcommand{\threebyone}[3]{\begin{pmatrix} #1 \\ #2 \\ #3\end{pmatrix}}
\newcommand{\id}[1]{I_{#1}}
\newcommand{\revdiag}[1]{\Omega_{#1}}
\newcommand{\e}[1]{e_{#1}}
\newcommand{\g}[1]{G_{#1}}
\renewcommand{\L}[2]{\mathcal{L}_{#1}\left(#2\right)}
\newcommand{\B}[2]{\mathcal{B}_{#1}\left(#2\right)}
\newcommand{\tl}[1]{T^{Left}\left(#1\right)}
\newcommand{\tr}[1]{T^{Right}\left(#1\right)}
\newcommand{\tm}[1]{T^{moves}\left(#1\right)}
\newcommand{\tmr}[1]{T^{rev}\left(#1\right)}
\newcommand{\ttcr}[1]{T^{tcr}\left(#1\right)}
\renewcommand{\v}[1]{\mathcal{V}_{#1}}
\newcommand{\x}[1]{X_{#1}}
\newcommand{\z}[1]{Z_{#1}}
\newcommand{\h}[1]{H_{#1}}
\renewcommand{\c}[1]{W_{#1}}
\newcommand{\s}[1]{S_{#1}}
\newcommand{\q}[1]{Q_{#1}}
\newcommand{\qubit}[1]{q_{#1}}
\newcommand{\pauligroup}[1]{\mathcal{P}_{#1}}
\newcommand{\cliffordgroup}[1]{\mathcal{C}_{#1}}
\newcommand{\symplecticgroup}[1]{\mathcal{SP}_{#1}}
\newcommand{\paulihomomorphism}[1]{\Theta_{#1}}
\newcommand{\cliffordhomomorphism}[1]{\Psi_{#1}}
\newcommand{\paci}[2]{\left\langle #1,#2 \right\rangle}
\newcommand{\erasurechannel}[2]{\mathcal{E}_{#1}\left(#2\right)}
\newcommand{\depolarizingchannel}[2]{\mathcal{D}_{#1}\left(#2\right)}
\newcommand{\prob}[1]{\mathbb{P}\left(#1\right)}
\newcommand{\expect}[1]{\mathbb{E}\left(#1\right)}
\newcommand{\indicator}[1]{\mathbb{I}\left(#1\right)}
\newcommand{\orderfunction}{\xi}
\newcommand{\bincdf}[1]{\mathcal{F}\left(#1\right)}
\newcommand{\bincdfinv}[1]{\mathcal{F}^{-1}\left(#1\right)}
\newcommand{\normalcdf}[1]{\Phi\left(#1\right)}
\newcommand{\normalcdfinv}[1]{\Phi^{-1}\left(#1\right)}
\newcommand{\floor}[1]{\left\lfloor #1 \right\rfloor}
\newcommand{\ceiling}[1]{\left\lceil #1 \right\rceil}
\newcommand{\parenth}[1]{\left( #1 \right)}
\newcommand{\curlybr}[1]{\left\{ #1 \right\}}
\newcommand{\absval}[1]{\left| #1 \right|}
\newcommand{\fracexp}[3]{\parenth{\frac{#1}{#2}}^{#3}}
\newcommand{\logpar}[1]{\log_2\left(#1\right)}
\newcommand{\relent}[2]{D\left(#1\middle\|#2\right)}
\newcommand{\paulich}[1]{\mathcal{N}_{#1}}
\newcommand{\cosetrate}[1]{R_{coset}\left(#1\right)}
\newcommand{\errorguessrate}[1]{R_{errorguess}\left(#1\right)}
\newcommand{\errorguessrateachievability}[1]{R_{errorguess}^{achievability}\left(#1\right)}
\newcommand{\errorguessrateconverse}[1]{R_{errorguess}^{converse}\left(#1\right)}
\newcommand{\coseterrorprob}[1]{\varepsilon_{coset}\left(#1\right)}
\newcommand{\errorguesserrorprob}[1]{\varepsilon_{errorguess}\left(#1\right)}
\newcommand{\errorguesserrorprobconverse}[1]{\varepsilon_{errorguess}^{converse}\left(#1\right)}
\newcommand{\errorguesserrorprobachievability}[1]{\varepsilon_{errorguess}^{achievability}\left(#1\right)}
\newcommand{\simplex}[1]{\Delta(#1)}
\newcommand{\discretesimplex}[2]{\Delta_{#2}(#1)}
\newcommand{\type}[1]{\mathbf{p}_{#1}}
\newcommand{\typeclass}[1]{\mathbf{T}(#1)}
\newcommand{\supportu}{S}
\newcommand{\supportv}{T}
\begin{document}
\title{Canonical Form and Finite Blocklength Bounds for Stabilizer Codes}
\author{Dimiter Ostrev\thanks{Interdisciplinary Centre for Security, Reliability and Trust, University of Luxembourg, L-4364 Esch-sur-Alzette, Luxembourg}}
\date{}
\maketitle

\begin{abstract}
First, a canonical form for stabilizer parity check matrices of arbitrary size and rank is derived. Next, it is shown that the closely related canonical form of the Clifford group can be computed in time $O(n^3)$ for $n$ qubits, which improves upon the previously known time $O(n^6)$. Finally, the related problem of finite blocklength bounds for stabilizer codes and Pauli noise is studied. A finite blocklength refinement of the hashing bound is derived, and it is shown that no argument that uses guessing the error as a substitute for guessing the coset can lead to a significantly better achievability bound. 
\end{abstract}

\section{Introduction}

Stabilizer codes \cite{gottesman1996class,calderbank1997quantum}, quantum analogues of classical linear codes, are a widely studied method for protecting quantum states from noise. This article considers canonical forms and finite blocklength bounds for stabilizer codes. 

Matrix decompositions and canonical forms have long been important in mathematics. In quantum error correction, various decompositions of Clifford group elements have been proposed, with applications ranging from compilation of quantum circuits to generation of uniformly random Clifford group elements; see for example \cite{duncan2020graphtheoretic,bravyi2021hadamardfree,khesin2023graphical,bassler2023synthesisof} and references therein for more details. 

In another recent work \cite{ostrev2024quantum}, a joint matrix decomposition of a pair of orthogonal matrices plays an important role in the design and analysis of the proposed family of quantum Calderbank-Shor-Steane codes. The codes in \cite{ostrev2024quantum} have a set of linearly dependent stabilizer measurements, and therefore can be used to correct errors on the qubits and errors in the syndrome simultaneously.  The matrix decomposition in \cite{ostrev2024quantum} gives information about both the quantum CSS code and the classical linear code that protects against errors in the syndrome. 

It would be desirable to generalize the matrix decomposition of \cite{ostrev2024quantum} from CSS to arbitrary stabilizer codes. This is achieved in the present article. Theorem \ref{thm:canonical_form_for_stabilizer_pcm} below gives a canonical form for stabilizer parity check matrices of arbitrary size and rank. 

The main idea in the proof of existence of the matrix decomposition, both in \cite{ostrev2024quantum} and in Theorem \ref{thm:canonical_form_for_stabilizer_pcm} in the present article, is to use a variant of Gaussian elimination with both row and column operations, with the additional restriction that the column operations are chosen so as to preserve the symplectic products of the rows at each step. Thus, the algorithm for computing the matrix decomposition is a close relative of the ``disentangling'' algorithms for the Clifford group, such as the one described in \cite[Theorem 10.6 and Exercise 10.40]{nielsen2012quantum} and more recently in \cite[Section II]{bravyi2021hadamardfree}. 

To prove uniqueness of the matrix decomposition for stabilizer codes, the present article extends and simplifies the methods used previously in \cite{bravyi2021hadamardfree} to establish a canonical form for the Clifford group.  \cite{bravyi2021hadamardfree} describes a collection of subgroups of the Borel group, parametrized by elements of the Weyl group.  This collection of subgroups is then used to describe the canonical form and prove its uniqueness. See also \cite[Section V]{maslov2018shorterstabilizer} for a more detailed description of the Borel subgroup, the Weyl subgroup, and the Bruhat decomposition of the symplectic group.

The approach in \cite{bravyi2021hadamardfree} is not directly applicable to stabilizer parity check matrices; the obstacle is roughly speaking that the latter are not invertible and need not even have linearly indepedent rows. To overcome the difficulty, the present article defines a more general family of groups of lower triangular matrices. These groups are shown to arise naturally during Gaussian elimination, and are used to describe and prove uniqueness of the canonical form for stabilizer parity check matrices. 

An approach centered on Gaussian elimination has various theoretical and practical advantages. Besides yielding a canonical form for stabilizer parity check matrices, it also gives new insights into the canonical form for the Clifford group. 

For example, \cite[Section II]{bravyi2021hadamardfree} gave a two-stage algorithm to compute the canonical form of the Clifford group. The first stage, a disentangling algorithm closely related to Gaussian elimination, takes time $O(n^3)$ for $n$ qubits. However, the second stage takes time $O(n^6)$, so the overall runtime to compute the canonical form is $O(n^6)$. 

The present article gives a more precise analysis of the Gaussian elimination stage. In Theorem \ref{thm:clifford_group_canonical_form} below, it is shown that Gaussian elimination produces the canonical form directly, without the need for post-processing. Thus, the overall runtime is reduced from $O(n^6)$ to $O(n^3)$, which roughly squares the number of qubits for which it is practically feasible to compute the canonical form. 

Moreover, the present article gives a simplified description of the relevant subgroups and self-contained proofs. The exposition here relies only on the structure of Gaussian elimination, rather than on more advanced topics such as Bruhat decomposition and $(B,N)$-pairs. 

Insights in the structure of stabilizer codes may be applied to further problems. For example, the canonical form in Theorem \ref{thm:canonical_form_for_stabilizer_pcm} gives an encoding circuit for a given stablizer parity check matrix. The canonical form can also be used to generate a uniformly random stabilizer parity check matrix of a given size and rank by an algorithm that uses the optimal number of random bits. 

These applications are not considered in detail here. The argument for the encoding circuit is similar to \cite[Section III]{ostrev2024quantum}, where a decomposition of CSS parity check matrices is used to derive the encoding circuit. The algorithm for generation of uniformly random stabilizer parity check matrices of given size and rank is similar to the algorithm in \cite[Section III]{bravyi2021hadamardfree} for the generation of uniformly random Clifford group elements. 

Instead, this article considers the following question: given Pauli noise affecting a finite number $n$ of qubits, what is the highest rate of a stabilizer code that keeps the error probability below a target $\epsilon$, and what is the lowest error probability of a stabilizer code of a given rate $r$? 

Theorem \ref{thm:general_bounds} below gives a finite blocklength refinement of the hashing bound, applicable to arbitrary $n$-qubit Pauli noise. Then, Theorems \ref{thm:erasure_channel_bounds} and \ref{thm:depolarizing_channel_bounds} show how the general bound can be computed efficiently in the important special cases of $n$ independent qubit erasure channels and $n$ independent qubit depolarizing channels. Theorem \ref{thm:computing_the_bounds} shows that the bounds can be computed in polynomial time in the case of $n$ independent identical Pauli channels. 

Given an achievability bound, it is natural to ask whether it is in some sense optimal. In classical error correction, there is a well developed theory of finite blocklength bounds \cite{polyanskiy2010channel}. In particular, in the special cases of the classical erasure and binary symmetric channels, the finite blocklength achievability bounds come with nearly matching finite blocklength converse bounds. 

Unfortunately, the corresponding problem in quantum error correction is less well understood. \cite{ashikhmin2014fidelitylowerbounds} focuses on achievability bounds for stabilizer and CSS codes on the depolarizing channel, but does not give any converse bounds. \cite{tomamichel2016quantum} derives both achievability and converse bounds applicable to general quantum error correcting codes, not just stabilizer codes. However, the achievability and converse bounds of \cite{tomamichel2016quantum} nearly match only in the case of the qubit dephasing channel, which is closely related to the classical binary symmetric channel, and in the case of the quantum erasure channel with classical post processing, which is closely related to the classical erasure channel. On the other hand, for the depolarizing channel, there is a constant size gap between the achievability and converse bounds of \cite{tomamichel2016quantum}, which corresponds to the gap between the hashing lower and dephasing upper bounds on the capacity of the depolarizing channel. 

In this context, the present article makes the following contribution: it shows that the finite blocklength refinement of the hashing bound in Theorems \ref{thm:erasure_channel_bounds} and \ref{thm:depolarizing_channel_bounds} is nearly optimal among a certain class of arguments: those that use guessing the most likely error from the syndrome as a substitute for guessing the most likely stabilizer coset from the syndrome. 

Such a result is important because the relaxation from guessing the coset to guessing the error is widely used in the literature on stabilizer codes. In particular, the previous finite blocklength achievability bound of \cite{ashikhmin2014fidelitylowerbounds} for stabilizer codes on the depolarizing channel also uses this relaxation. 

The rest of the article is structured as follows: Section \ref{sec:preliminaries_and_notation} covers preliminaries and some background material. Section \ref{sec:groups_of_lower_triangular_matrices} introduces a family of groups of lower triangular matrices, and Section \ref{sec:canonical_forms} uses these groups to establish canonical forms for unrestricted and stabilizer parity check matrices and for the symplectic and Clifford group. Section \ref{sec:finite_blocklength_bounds} gives the finite blocklength refinement of the hashing bound and the proof that it is nearly optimal among the class of arguments that substitute guessing the error for guessing the coset. Section \ref{sec:conclusion} concludes the article and gives some directions for future work. 

\section{Preliminaries and notation}\label{sec:preliminaries_and_notation}

This section introduces notation that will be used later on, and covers some background material on quantum stabilizer codes and Pauli channels. Some prior acquaintance with quantum computation and quantum information is assumed; see, for example, \cite{nielsen2012quantum}. 

After some basic notation (subsection \ref{sec:basic_notation}), the Pauli group (subsection \ref{sec:pauli_group}) and Clifford group (subsection \ref{sec:clifford_group}) are introduced, along with the associated homomorphisms to vectors and matrices over the field with two elements. Next, the class of Pauli channels is described in subsection \ref{sec:pauli_channels}; these channels apply a random Pauli error and possibly give some classical side information correlated with the error. This class of channels has a number of useful properties: it is closed under conjugation by the Clifford group (subsection \ref{sec:conjugation_by_clifford}), closed under preparation and measurement of a subset of the qubits in the computational basis (subsection \ref{sec:preparation_and_measurement}), and closed under Pauli corrections conditional on the side information (subsection \ref{sec:conditional_correction}). Moreover, the diamond distance of a Pauli channel from the identity can be easily computed (subsection \ref{sec:diamond_distance}). The preceding observations show that the search for the optimal stabilizer code to correct Pauli noise admits a convenient combinatorial description (subsection \ref{sec:coset_decoding}). A relaxation of this combinatorial problem is given in the final subsection \ref{sec:guessing_the_error}. 

\subsection{Some notation}\label{sec:basic_notation}

Let $[n]$ denote the finite set $\{1,\dots,n\}$. Let $\F$ denote the field with two elements and $\F^n$ denote the space of column vectors with $n$ components from $\F$. For $i \in [n]$, let $\e{n,i}$ denote the $i$-th standard basis vector of $\F^n$. Let $\F^{m \times n}$ denote the space of $m \times n$ matrices over $\F$. Let $\id{n}$ denote the $n \times n$ identity, and let $\revdiag{n}$ denote the $n\times n$ reverse diagonal matrix
\begin{equation}
\revdiag{n}=\sum_{i=1}^n \e{n,i}\e{n,n+1-i}^{\top}
\end{equation} 

\subsection{The Pauli group}\label{sec:pauli_group}

Consider a quantum system with $n$ qubits. Let $\x{n,i},\z{n,i}$ denote the Pauli X and Z operations acting on the $i$-th qubit, and let
\begin{equation}
\q{2n}=(\x{n,1},\dots,\x{n,n},\z{n,n},\dots,\z{n,1})
\end{equation}
be the $2n$-tuple of single qubit Pauli X and Z operations ordered in a particular way. The reason for this particular ordering will become clear later. Individual components of $\q{2n}$ will be denoted by $\q{2n,i}$; for example, $\q{2n,1}=\x{n,1}$ and $\q{2n,2n}=\z{n,1}$ are the Pauli X and Z operations on the first out of $n$ qubits. 

The Pauli group on $n$ qubits $\pauligroup{n}$ is generated by the components of $\q{2n}$ and by $iI_{2^n}$. Two elements $P,P'$ of the Pauli group either commute or anti-commute; this will be denoted as follows:
\begin{equation}
\paci{P}{P'}=\begin{cases} 1 & \text{if } PP'+P'P=0 \\ 0 & \text{if } PP'-P'P=0 \end{cases}
\end{equation}
The operation $\paci{}{}$ has the following properties
\begin{align}
\paci{P}{P'}&=\paci{P'}{P}\\
\paci{PP'}{P''}&=\paci{P}{P''}\oplus\paci{P'}{P''}\\
\paci{P}{P'}&=\paci{UPU^{-1}}{UP'U^{-1}}
\end{align}
for any $P,P',P''\in\pauligroup{n}$ and any unitary $U$. 

Finally, the map $\paulihomomorphism{n}:\pauligroup{n}\rightarrow\F^{2n}$ given by 
\begin{equation}
\paulihomomorphism{n}(P)=\threebyone{\paci{\q{2n,2n}}{P}}{\vdots}{\paci{\q{2n,1}}{P}}
\end{equation}
is a surjective group homomorphism with kernel $\{\pm I, \pm i I\}$. $\paulihomomorphism{n}$ sends $\q{2n,i}$ to $\e{2n,i}$. Additinally, the identity
\begin{equation}
\forall P,P' \in \pauligroup{n},\paci{P}{P'}=\paulihomomorphism{n}(P)^{\top} \revdiag{2n}\paulihomomorphism{n}(P')
\end{equation}
holds. 

\subsection{The Clifford group}\label{sec:clifford_group}

The Clifford group on $n$ qubits consists of unitary matrices that map Pauli group elements to Pauli group elements under conjugation:
\begin{equation}
\cliffordgroup{n}=\{W \in \C^{2^n \times 2^n}:WW^\dagger=I \text{ and }\forall P\in\pauligroup{n}, WPW^{-1} \in \pauligroup{n}\}
\end{equation}
The symplectic group consists of matrices in $\F^{2n \times 2n}$ that preserve the symplectic form $x^{\top}\revdiag{2n}y$:
\begin{equation}
\symplecticgroup{2n}=\left\{C\in\F^{2n\times 2n}:C^{\top}\revdiag{2n}C=\revdiag{2n}\right\}
\end{equation}
The map $\cliffordhomomorphism{n}:\cliffordgroup{n}\rightarrow\symplecticgroup{2n}$ defined by
\begin{equation}
\forall i,j\in[2n],\e{2n,i}^{\top} \cliffordhomomorphism{n}(W) \e{2n,j} = \paci{\q{2n,2n+1-i}}{W\q{2n,j}W^{-1}}
\end{equation}
is a surjective group homomorphism with kernel $\{cP:c\in\C, P\in\pauligroup{n}\}$. Additional useful properties of $\cliffordhomomorphism{n}$ are 
\begin{align}
\cliffordhomomorphism{n}(W^{-1})=\revdiag{2n}\cliffordhomomorphism{n}(W)^{\top}\revdiag{2n} \\
\forall P \in \pauligroup{n},\paulihomomorphism{n}(WPW^{-1})=\cliffordhomomorphism{n}(W)\paulihomomorphism{n}(P)
\end{align}

\subsection{Pauli channels}\label{sec:pauli_channels}

\begin{definition}
An $n$ qubit Pauli channel with side information (or just Pauli channel) is specified by the joint distribution $p_{UV}$ of a random variable $U$ taking values in $\F^{2n}$ and another discrete random variable $V$. The channel specified by the joint distribution $p$ will be denoted by $\paulich{p}$; it transforms $n$ qubit states as follows:  
\begin{equation}\label{eq:generic_pauli_channel}
\rho \mapsto \paulich{p}(\rho)=\sum_{u,v} p_{UV}(u,v) \q{2n}^u\rho\parenth{\q{2n}^u}^\dagger \otimes \ket{v}\bra{v}
\end{equation}
where $\q{2n}^u=\q{2n,1}^{u_1}\q{2n,2}^{u_2}\dots \q{2n,2n}^{u_{2n}}$. 
\end{definition}

Examples of Pauli channels are the qubit erasure channel and the qubit depolarizing channel. The qubit erasure channel with parameter $\delta$ transforms the state $\rho$ of a single qubit to 
\begin{equation}\label{eq:erasure_channel}
\erasurechannel{\delta}{\rho}=(1-\delta)\rho \otimes \ket{0}\bra{0}+\frac{\delta}{4}\left(\rho + \x{}\rho\x{}+\z{}\rho\z{}+\x{}\z{}\rho\z{}\x{}\right)\otimes\ket{1}\bra{1}
\end{equation}
The side information indicates whether an erasure has occurred or not. The qubit depolarizing channel with parameter $\delta$ transforms the state $\rho$ of a single qubit to
\begin{equation}\label{eq:depolarizing_channel}
\depolarizingchannel{\delta}{\rho}=(1-\delta) \rho + \frac{\delta}{3}\left(\x{}\rho\x{}+\z{}\rho{}\z{}+\x{}\z{}\rho\z{}\x{}\right)
\end{equation}
In this case, there is no side information. 

\subsection{Conjugation of a Pauli channel by a Clifford group element}\label{sec:conjugation_by_clifford}

Conjugating a Pauli channel by a Clifford unitary gives another Pauli channel. Specifically, if \eqref{eq:generic_pauli_channel} is conjugated by $W\in\cliffordgroup{n}$ the resulting channel transforms $n$-qubit states as follows:   
\begin{multline}\label{eq:clifford_conjugated_pauli_channel}
\rho \mapsto \sum_{u,v} p_{UV}(u,v) W \q{2n}^u W^{-1} \rho W \parenth{\q{2n}^u}^\dagger W^{-1} \otimes \ket{v}\bra{v} \\
=\sum_{u,v} p_{UV}(u,v)  \q{2n}^{\cliffordhomomorphism{n}(W)u}  \rho  \parenth{\q{2n}^{\cliffordhomomorphism{n}(W)u}}^\dagger  \otimes \ket{v}\bra{v} 
\end{multline}

\subsection{Preparation and measurement}\label{sec:preparation_and_measurement}

Consider now the $n$-qubit Pauli channel specified by \eqref{eq:clifford_conjugated_pauli_channel}. Suppose the $n$ qubits are divided in two registers: the first $m$ qubits, and the remaining $k=n-m$ qubits. Suppose further that the first $m$ qubits are prepared in the zero state, then the channel is applied, then the first $m$ qubits are measured in the computational basis. These operations transform the $n$-qubit Pauli channel into a $k$-qubit Pauli channel acting on the last $k$ qubits. The channel on the last $k$ qubits transforms $k$-qubit states as follows:  
\begin{multline}\label{eq:effective_logical_channel}
\rho\mapsto\sum_{u,v} p_{UV}(u,v) \q{2k}^{\sum_{i=1}^{2k} \e{2k,i}\e{2n,m+i}^{\top} \cliffordhomomorphism{n}(W)u} \rho \parenth{\q{2k}^{\sum_{i=1}^{2k}\e{2k,i}\e{2n,m+i}^{\top}\cliffordhomomorphism{n}(W)u}}^\dagger \\ \otimes \ket{v,\sum_{i=1}^m \e{m,i}\e{2n,i}^{\top}\cliffordhomomorphism{n}(W)u}\bra{v,\sum_{i=1}^m \e{m,i}\e{2n,i}^{\top}\cliffordhomomorphism{n}(W)u}
\end{multline}

\subsection{Conditional correction}\label{sec:conditional_correction}

Finally, a Pauli correction is applied on the remaining qubits to transform \eqref{eq:effective_logical_channel} into a channel that is close to the identity. To simplify notation, rewrite \eqref{eq:effective_logical_channel} as
\begin{equation}
\rho\mapsto\sum_{u',v'}p'_{U'V'}(u',v')\q{2k}^{u'}\rho\parenth{\q{2k}^{u'}}^\dagger \otimes \ket{v'}\bra{v'}
\end{equation}
A conditional correction specified by some conditional probabilities $\hat{p}_{\hat{U}'|V'}$, followed by tracing out the side information result in the channel
\begin{equation}\label{eq:channel_after_correction}
\rho\mapsto\sum_{u',v',\hat{u}'}p'_{U'V'}(u',v')\hat{p}_{\hat{U}'|V'}(\hat{u}'|v')\q{2k}^{u'+\hat{u}'}\rho\parenth{\q{2k}^{u'+\hat{u}'}}^\dagger 
\end{equation}

\subsection{Diamond distance from the identity}\label{sec:diamond_distance}

The diamond distance \cite[Section 9.1.6]{wilde2011from} between the channel in equation \eqref{eq:channel_after_correction} and the identity channel is $2\prob{\hat{U}'\neq U'}$. Indeed, the triangle inequality shows that the diamond distance is at most $2\prob{\hat{U}'\neq U'}$, and using a maximally entangled state in the maximization problem for the diamond distance \cite[Theorem 9.1.1]{wilde2011from} shows that it is at least $2\prob{\hat{U}'\neq U'}$. 

\subsection{Optimal stabilizer codes and coset decoding}\label{sec:coset_decoding}

Given a finite blocklength $n$ and an $n$-qubit Pauli channel $\paulich{p}$, it is desirable to understand the highest rate of a stabilizer code that can correct the noise $\paulich{p}$ while keeping the probability of error below a given $\epsilon$, and the lowest probability of error of a stabilizer code of a given rate $r$. The preceeding discussion shows that these can be defined combinatorially as follows:

\begin{definition}\label{def:coset_decoding}
Let $p_{UV}$ be a joint probability distribution of a random vector $U$ in $\F^{2n}$ and another discrete random variable $V$. 

For $C\in\symplecticgroup{2n}$, partition $n=m+k$ of the qubits and decoding algorithm $D$, the probability of error is  
\begin{equation}\label{eq:prob_of_error_for_coset_decoding}
\prob{D\parenth{V,\sum_{i=1}^m \e{m,i}\e{2n,i}^{\top}CU}\neq \sum_{i=1}^{2k} \e{2k,i}\e{2n,m+i}^{\top} CU}
\end{equation}

For $\epsilon>0$, let $\cosetrate{p_{UV},\epsilon}$ be the maximum value of $k/n$ such that there exist $C,D$ with probability of error at most $\epsilon$. 

For $r=k/n$, let $\coseterrorprob{p_{UV},r}$ be the lowest error probability of a pair $C,D$ that uses $m=n-k$ bits of syndrome. 
\end{definition}

Note that $\sum_{i=1}^m \e{m,i}\e{2n,2n+1-i}^{\top}CU$ is irrelevant to the distance from the identity channel in section \ref{sec:conditional_correction}, and therefore it is also not mentioned in the combinatorial formulation in Definition \ref{def:coset_decoding}. Thus, the Pauli error $U$, about which $m$ bits are observed as the syndrome and $2k$ bits are guessed by the decoder, is determined only up to a coset of a vector space of dimension $m$. 

\subsection{Guessing the error from the syndrome}\label{sec:guessing_the_error}

Analyzing the optimal rate and error probability for coset decoding presents considerable difficulties. Therefore, it is worthwhile to introduce a relaxation of this problem, and to consider the optimal rate and error probability for the related task of guessing the error from the syndrome. 

\begin{definition}
Consider a probability distribution $p_{UV}$ as before. 

For $C\in\symplecticgroup{2n}$, partition $n=m+k$ of the qubits and decoding algorithm $D$, the probability of incorrect guess of $U$ from the syndrome is 
\begin{equation}
\prob{D\parenth{V,\sum_{i=1}^m \e{m,i}\e{2n,i}^{\top} CU}\neq U}
\end{equation}

For $\epsilon>0$, let $\errorguessrate{p_{UV},\epsilon}$ be the maximum value of $k/n$ such that there is a pair $C,D$ with the probability of guessing $U$ incorrectly at most $\epsilon$. 

For $r=k/n$, let $\errorguesserrorprob{p_{UV},r}$ be the lowest probability of guessing $U$ incorrectly of a pair $C,D$ that uses $m=n-k$ bits of syndrome. 
\end{definition}

Since a pair $C,D$ that can guess $U$ can be adapted to coset decoding with equal or lower error probability, the optimal rates and error probabilities satisfy
\begin{align}
\errorguessrate{p_{UV},\epsilon} &\leq \cosetrate{p_{UV},\epsilon} \\
\coseterrorprob{p_{UV},r} & \leq \errorguesserrorprob{p_{UV},r}
\end{align}

\section{Groups of lower triangular matrices}\label{sec:groups_of_lower_triangular_matrices}

Subsection \ref{sec:family_of_groups} introduces a family of groups of lower triangular matrices. Subsection \ref{sec:generators_unrestricted_case} shows that each of these groups is generated by a suitable subset of the Gaussian move matrices. Subsection \ref{sec:intersection_with_symplectic_group} considers the intersection of these groups with the symplectic group and Subsection \ref{sec:generators_symplectic_case} shows that these intersections are generated by suitable symplectic analogues of the Gaussian moves.

\subsection{A family of groups from transitive sets of pairs}\label{sec:family_of_groups}

\begin{definition}
For $n \in \N$, let 
\begin{equation}
\orderedpairs{n}=\left\{(i,j):i,j\in[n],i>j\right\}
\end{equation}
\end{definition}

\begin{definition}
A subset $T$ of $\orderedpairs{n}$ is called transitive if
\begin{equation}
(i,j)\in T,(j,k)\in T \implies (i,k) \in T
\end{equation}
\end{definition}

\begin{definition}
To a transitive subset $T$, associate the set of lower triangular matrices
\begin{equation}
\L{n}{T}=\id{n}+span\left\{\e{n,i}\e{n,j}^{\top}:(i,j)\in T\right\}
\end{equation}
\end{definition}

\begin{lemma}
$\L{n}{T}$ is closed under matrix multiplication and matrix inverse. 
\end{lemma}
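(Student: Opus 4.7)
The plan is to write an arbitrary element of $\L{n}{T}$ as $\id{n}+M$ where
\begin{equation}
M = \sum_{(i,j) \in T} a_{ij}\, \e{n,i}\e{n,j}^{\top}, \qquad a_{ij}\in\F,
\end{equation}
and to verify both closure properties by direct computation, using only the transitivity of $T$ together with the fact that $i > j$ for every $(i,j) \in T$ (so such $M$ is strictly lower triangular, hence nilpotent with $M^n = 0$).

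For closure under multiplication, I would take two elements $\id{n}+M$ and $\id{n}+N$ of $\L{n}{T}$ and expand
\begin{equation}
(\id{n} + M)(\id{n} + N) = \id{n} + M + N + MN.
\end{equation}
The sum $M+N$ already lies in the span. For the cross term, the identity $\e{n,i}\e{n,j}^{\top}\e{n,k}\e{n,l}^{\top} = \delta_{jk}\,\e{n,i}\e{n,l}^{\top}$ shows that every nonzero monomial of $MN$ has the form $a_{ij}b_{jl}\,\e{n,i}\e{n,l}^{\top}$ with $(i,j),(j,l)\in T$; transitivity of $T$ then supplies $(i,l)\in T$. Hence $M+N+MN$ is a linear combination of the prescribed basis matrices, so the product lies in $\L{n}{T}$.

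For closure under inverse, I would first note that $\id{n}+M$ is unit lower triangular, hence invertible with determinant $1$. Because $M^n=0$, working over $\F$ the geometric series truncates:
\begin{equation}
(\id{n} + M)^{-1} = \sum_{k=0}^{n-1} M^k = \id{n} + \sum_{k=1}^{n-1} M^k.
\end{equation}
An easy induction on $k$, using the same computation and transitivity argument as for $MN$ above, shows that each $M^k$ with $k\ge 1$ is a linear combination of matrices $\e{n,i}\e{n,l}^{\top}$ with $(i,l)\in T$. Thus $(\id{n}+M)^{-1}\in\L{n}{T}$.

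The essential content of the lemma is that transitivity of $T$ is precisely tailored so that matrix multiplication keeps nonzero entries inside $T$, and that nilpotency of strict lower-triangular matrices reduces inversion to a finite matrix polynomial and therefore to multiplication. Consequently, there is no real obstacle beyond bookkeeping; the only step deserving attention is the transitivity argument for the cross term $MN$, and the same argument extends to all higher powers $M^k$ needed for the inverse.
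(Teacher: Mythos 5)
Your proof is correct and follows essentially the same route as the paper: transitivity of $T$ keeps the cross term $MN$ (and all higher powers) inside the span, and nilpotency of the strictly lower triangular part reduces inversion to a finite geometric series. The only cosmetic difference is that the paper packages the inverse as the telescoping product $(I+A)(I+A^2)\cdots(I+A^{2^d})$ of elements already known to lie in the group, whereas you verify the powers $M^k$ term by term; both are valid.
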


\begin{proof}
Take any 
\begin{equation}
A=\sum_{(i,j)\in T} a_{i,j}\e{n,i}\e{n,j}^{\top}, \;\;\; B=\sum_{(k,l)\in T} b_{k,l} \e{n,k}\e{n,l}^{\top}
\end{equation}
From the transitive property, deduce $AB \in span\left\{\e{n,i}\e{n,j}^{\top}:(i,j)\in T\right\}$. Then, 
\begin{equation}
(\id{n}+A)(\id{n}+B)=\id{n}+A+B+AB \in \L{n}{T}
\end{equation}
Moreover, $A$ is nilpotent. Let $d$ be the largest integer such that $A^{2^d} \neq 0$. Then, 
\begin{equation}
(\id{n}-A)^{-1}=(I+A)(I+A^2)(I+A^4)\dots (I+A^{2^d})
\end{equation}
is also an element of $\L{n}{T}$. 
\end{proof}

\emph{Some examples:}

\emph{The lower triangular group} is $\L{n}{\orderedpairs{n}}$. 

\emph{Matrices with off-diagonal entries only in certain rows and columns:} take subsets $R,C$ of $[n]$. Then, $T=\orderedpairs{n} \cap (R\times C)$ is transitive. $\L{n}{T}$ is a group of lower triangular matrices whose non-zero off-diagonal entries appear only in rows in $R$ and columns in $C$.

\emph{The inversions and non-inversions of a permutation:} let $\pi$ be a permutation of $[n]$. Let
\begin{align}
inv(\pi)=\left\{(i,j):i>j,\pi(i)<\pi(j)\right\}\\
ninv(\pi)=\left\{(i,j):i>j,\pi(i)>\pi(j)\right\}
\end{align}
Both $inv(\pi)$ and $ninv(\pi)$ are transitive. The associated groups $\L{n}{inv(\pi)}$ and $\L{n}{ninv(\pi)}$ played a role in the canonical form for invertible $n \times n$ matrices \cite{bravyi2021hadamardfree}. 

\subsection{Generators and canonical form of the groups $\L{n}{T}$}\label{sec:generators_unrestricted_case}

For $i\neq j \in [n]$, let 
\begin{equation}
\g{n,i,j}=\id{n}+\e{n,i}\e{n,j}^{\top} \in \F^{n\times n}
\end{equation}
denote the corresponding Gaussian move matrix over $\F$. Left multiplication by $\g{n,i,j}$ adds row $j$ to row $i$. Right multiplication by $\g{n,i,j}$ adds column $i$ to column $j$. $\g{n,i,j}\g{n,j,i}\g{n,i,j}$ acts on the left as a row swap and on the right as a column swap.  

Consider the identity:
\begin{lemma}\label{lemma:ClassicalLowerTriangularIdentity}
Let $\left\{a_{i,j}:(i,j)\in\orderedpairs{n}\right\}$ be any collection of elements of $\F$ indexed by $\orderedpairs{n}$. Then,
\begin{equation}
\id{n}+\sum_{(i,j)\in\orderedpairs{n}} a_{i,j}\e{n,i}\e{n,j}^{\top} = \prod_{(i,j)\in\orderedpairs{n}} \g{n,i,j}^{a_{i,j}}
\end{equation} 
where in the product, the order of terms from left to right is 
\begin{equation}\label{eq:OrderOfLowerTriangularGaussianMoves}
(2,1), \dots, (n,1), (3,2), \dots, (n,2), \dots, (n,n-1)
\end{equation}
\end{lemma}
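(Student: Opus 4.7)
My plan is to expand both sides and observe that, in the ordering prescribed by \eqref{eq:OrderOfLowerTriangularGaussianMoves}, every cross-term that arises when multiplying the Gaussian moves cancels, so the product adds up telescopically. Over $\F$, each factor on the right-hand side can be written as $\g{n,i,j}^{a_{i,j}} = \id{n} + a_{i,j}\e{n,i}\e{n,j}^{\top}$. The prescribed ordering groups all pairs with a common second index $j$ consecutively, so the full product factors as $B_1 B_2 \cdots B_{n-1}$, where
\begin{equation*}
B_j = \prod_{i=j+1}^{n}\bigl(\id{n} + a_{i,j}\e{n,i}\e{n,j}^{\top}\bigr),
\end{equation*}
with the inner factors ordered by increasing $i$. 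Writing $M_j = \sum_{i=j+1}^{n} a_{i,j}\e{n,i}\e{n,j}^{\top}$, the identity to prove becomes $B_1 \cdots B_{n-1} = \id{n} + \sum_{j=1}^{n-1} M_j$.

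The engine of the proof is the rank-one identity $(\e{n,i}\e{n,j}^{\top})(\e{n,k}\e{n,l}^{\top}) = (\e{n,j}^{\top}\e{n,k})\,\e{n,i}\e{n,l}^{\top}$, which vanishes unless $j=k$. First I would apply it inside a single $B_j$: every rank-one summand there has second index $j$, while the first index of any other factor is strictly greater than $j$, so every quadratic and higher cross-term in the expansion vanishes and $B_j = \id{n} + M_j$. Next I would apply it between distinct blocks: for $j \leq j'$, every rank-one piece of $M_{j'}$ has first index $>j' \geq j$, hence distinct from $j$, so $M_j M_{j'} = 0$; in particular each $M_j$ is nilpotent with $M_j^2 = 0$.

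The final assembly is a short reverse induction. Assuming $B_j B_{j+1}\cdots B_{n-1} = \id{n} + \sum_{\ell=j}^{n-1} M_\ell$, left-multiplication by $B_{j-1} = \id{n} + M_{j-1}$ yields $\id{n} + M_{j-1} + \sum_{\ell \geq j} M_\ell + M_{j-1}\sum_{\ell \geq j} M_\ell$, and the last term vanishes by the block-cancellation fact. Setting $j=1$ gives the claim. There is no genuine obstacle here; the one subtlety is recognising that the ordering in \eqref{eq:OrderOfLowerTriangularGaussianMoves} is designed precisely so that at every multiplication step the second index of the left factor is strictly smaller than the first index of the right factor, which triggers the cancellation $\e{n,\cdot}^{\top}\e{n,\cdot} = 0$; with a different ordering some cross-terms would survive.
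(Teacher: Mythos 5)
Your proof is correct and is essentially the paper's own argument, fleshed out: the paper's one-line proof ("think of the product as a polynomial in the standard basis vectors...; the ordering is such that all terms of degree 4 and higher vanish") is exactly the cross-term cancellation $\e{n,j}^{\top}\e{n,i'}=0$ that you make explicit via the blocks $B_j$ and the nilpotency relations $M_jM_{j'}=0$ for $j\le j'$. No gaps.
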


\begin{proof}
Think of the product as a polynomial in the standard basis vectors and their transposes. The ordering is such that all terms of degree 4 and higher vanish, and only the terms of degree 2 and the identity remain. 
\end{proof}

This identity gives a set of generators and a canonical form for all the groups $\L{n}{T}$. 

\begin{corollary}
Take $n \in \N$ and transitive $T \subset \orderedpairs{n}$. The group $\L{n}{T}$ is generated by
\begin{equation}
\left\{\g{n,i,j}:(i,j)\in T\right\}
\end{equation}
Each element of $\L{n}{T}$ can be uniquely written as a product 
\begin{equation}
\prod_{(i,j)\in T} \g{n,i,j}^{a_{i,j}}
\end{equation}
where the product respects the order \eqref{eq:OrderOfLowerTriangularGaussianMoves}
\end{corollary}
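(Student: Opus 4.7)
The plan is to derive the corollary as a direct consequence of Lemma~\ref{lemma:ClassicalLowerTriangularIdentity} by extending coefficients by zero on $\orderedpairs{n}\setminus T$. First I would take an arbitrary element $M\in\L{n}{T}$, which by definition of $\L{n}{T}$ has the form $M=\id{n}+\sum_{(i,j)\in T} a_{i,j}\e{n,i}\e{n,j}^{\top}$ for some uniquely determined coefficients $a_{i,j}\in\F$ (uniqueness here being just the linear independence of the rank-one matrices $\e{n,i}\e{n,j}^{\top}$). Defining $a_{i,j}=0$ for $(i,j)\in\orderedpairs{n}\setminus T$ and applying Lemma~\ref{lemma:ClassicalLowerTriangularIdentity} to the extended collection yields
\begin{equation}
M=\prod_{(i,j)\in\orderedpairs{n}} \g{n,i,j}^{a_{i,j}}
\end{equation}
in the order \eqref{eq:OrderOfLowerTriangularGaussianMoves}. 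Since $\g{n,i,j}^{0}=\id{n}$, every factor with $(i,j)\notin T$ is trivial and may be dropped, leaving
\begin{equation}
M=\prod_{(i,j)\in T}\g{n,i,j}^{a_{i,j}}
\end{equation}
in the induced order. This simultaneously proves that $\{\g{n,i,j}:(i,j)\in T\}\subset\L{n}{T}$ generates the group (existence of the canonical product) and produces the claimed canonical form.

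For uniqueness of the exponents, I would suppose $\prod_{(i,j)\in T}\g{n,i,j}^{a_{i,j}}=\prod_{(i,j)\in T}\g{n,i,j}^{b_{i,j}}$ as ordered products. Extending both collections to $\orderedpairs{n}$ by zero and applying Lemma~\ref{lemma:ClassicalLowerTriangularIdentity} to each side converts the equation into
\begin{equation}
\id{n}+\sum_{(i,j)\in T} a_{i,j}\e{n,i}\e{n,j}^{\top}=\id{n}+\sum_{(i,j)\in T} b_{i,j}\e{n,i}\e{n,j}^{\top}.
\end{equation}
Linear independence of the matrices $\{\e{n,i}\e{n,j}^{\top}:(i,j)\in\orderedpairs{n}\}$ then forces $a_{i,j}=b_{i,j}$ for every $(i,j)\in T$, completing uniqueness.

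There is really no serious obstacle here once Lemma~\ref{lemma:ClassicalLowerTriangularIdentity} is in hand; the one point deserving care is the reduction from the full ordered product over $\orderedpairs{n}$ to the ordered product over $T$. This is immediate because inserting or removing factors equal to $\id{n}$ at any position preserves both the value of the product and the relative order of the surviving factors, so the order \eqref{eq:OrderOfLowerTriangularGaussianMoves} restricts consistently to the induced order on $T$. Transitivity of $T$ is not actually needed for this corollary per se (it was needed earlier to ensure $\L{n}{T}$ is a group), so the argument goes through for any $T\subset\orderedpairs{n}$ once one knows the closure under multiplication.
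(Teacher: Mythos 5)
Your proposal is correct and is exactly the argument the paper intends: the corollary is stated as an immediate consequence of Lemma~\ref{lemma:ClassicalLowerTriangularIdentity} (the paper gives no further proof), obtained by extending the coefficients by zero outside $T$, dropping the trivial factors, and reading off uniqueness from the linear independence of the matrices $\e{n,i}\e{n,j}^{\top}$. Your side remark that transitivity enters only through the closure of $\L{n}{T}$ under multiplication, not through the product identity itself, is also accurate.
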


\subsection{Intersection of the groups $\L{2n}{T}$ with the symplectic group}\label{sec:intersection_with_symplectic_group}

\begin{definition}
A subset $T$ of $\orderedpairs{2n}$ will be called closed under reversal if
\begin{equation}
(i,j) \in T \iff (2n+1-j,2n+1-i) \in T
\end{equation}
\end{definition}

\begin{definition}
To a subset $T$ that is both transitive and closed under reversal, associate the group 
\begin{equation}
\B{2n}{T}=\L{2n}{T} \cap \symplecticgroup{2n}
\end{equation}
\end{definition}

\subsection{Generators and canonical form of the groups $\B{2n}{T}$}\label{sec:generators_symplectic_case}

\subsubsection{The Clifford Gaussian moves}

For $i \neq j \in \{1, \dots, 2n\}$ let
\begin{equation}
\c{2n,i,j}=\begin{cases}\sqrt{\q{2n,i}}& \text{if } i+j=2n+1\\ \frac{1}{2}\left(I+ \q{2n,2n+1-j} + \q{2n,i} - \q{2n,2n+1-j} \q{2n,i}\right) & \text{otherwise}\end{cases}
\end{equation}
These unitaries will be called the Clifford Gaussian moves, because they play a role analogous to the classical Gaussian moves $\g{n,i,j}$. 

The Clifford gaussian moves are also related to the standard Clifford one and two qubit gates:
\begin{enumerate}
\item For $i\in [n]$, $\c{2n,i,2n+1-i}$ is a phase gate conjugated by a Hadamard gate on qubit $i$. 
\item For $i\in [n]$, $\c{2n,2n+1-i,i}$ is a phase gate on qubit $i$. 
\item For $i\in [n]$, $\c{2n,2n+1-i,i}\c{2n,i,2n+1-i}\c{2n,2n+1-i,i}=\sqrt{\z{n,i}}\sqrt{\x{n,i}}\sqrt{\z{n,i}}=\frac{1+i}{\sqrt{2}}\h{n,i}$ is a Hadamard gate on qubit $i$. 
\item For $i \neq j \in [n]$, $\c{2n,i,j}=\c{2n,(2n+1-j),(2n+1-i)}$ is a CNOT gate with control qubit $j$ and target qubit $i$.
\item For $i \neq j \in [n]$, $\c{2n,i,j}\c{2n,j,i}\c{2n,i,j}$ is a SWAP of qubits $i,j$. 
\item For $i > n \geq j,i+j\neq2n+1$, $\c{2n,i,j}=\c{2n,2n+1-j,2n+1-i}$ is a CZ gate on qubits $2n+1-i,j$. 
\item For $i\leq n<j$, $\c{2n,i,j}=\c{2n,2n+1-j,2n+1-i}$ is a CZ gate conjugated by Hadamard gates on qubits $i,2n+1-j$. 
\end{enumerate}

\subsubsection{The symplectic Gaussian moves} 

The symplectic Gaussian moves are the images of the Clifford Gaussian moves under the homomorphism $\cliffordhomomorphism{n}$: for $i \neq j \in \{1, \dots, 2n\}$ let
\begin{equation}
\s{2n,i,j}=\cliffordhomomorphism{n}(\c{2n,i,j})
=
\begin{cases}
\id{2n}+\e{2n,i}\e{2n,j}^{\top} & \text{if } i+j=2n+1\\
\id{2n}+\e{2n,i}\e{2n,j}^{\top}+\e{2n,2n+1-j}\e{2n,2n+1-i}^{\top} & \text{otherwise}
\end{cases}
\end{equation}

\subsubsection{Clearing entire rows and columns}

It is well-known that there are simple combinations of classical Gaussian moves that can be used to clear entire rows and columns. It is less obvious that there are combinations of the more complicated Clifford and symplectic Gaussian moves that have similar properties. To illustrate the similarities and differences of the classical and symplectic case, both are now given explicitly. 

\emph{With classical Gaussian moves:} Take $i\in[n]$. Take a vector 
\begin{equation}
v=\sum_{j=1}^n v_j \e{n,j} \in \F^n
\end{equation}
such that the $i$-th component is 0 (i.e. $v_i=0$). Let
\begin{equation}
\g{n,v,i}=\id{n}+v\e{n,i}^{\top}=\prod_{j \neq i}\g{n,j,i}^{v_j}
\end{equation}
and let $\g{n,i,v^{\top}}=\g{n,v,i}^{\top}$.

\emph{With symplectic Gaussian moves:} Take $i\in[2n]$. Take a vector
\begin{equation}
v=\sum_{j=1}^{2n} v_j \e{2n,j} \in \F^{2n}
\end{equation}
such that the $i$-th component is 0 (i.e. $v_i=0$). Let
\begin{multline}\label{eq:SymplecticGaussianWholeColumnMoves}
\s{2n,v,i}=\id{2n}+v\e{2n,i}^{\top}+\revdiag{2n}\e{2n,i}v^{\top}\revdiag{2n}+v_{2n+1-i}\e{2n,2n+1-i}\e{2n,i}^{\top}\\
=\s{2n,2n+1-i,i}^{\sum_{j=1}^n v_jv_{2n+1-j}}\prod_{j\neq i}\s{2n,j,i}^{v_j}
\end{multline}
and let $\s{2n,i,v^{\top}}=\s{2n,v,i}^{\top}$. Similar combinations of Clifford gates appeared previously in \cite[Lemma 4]{bravyi2021hadamardfree}. 

Some relevant properties of $\s{2n,v,i}$ are:
\begin{lemma}\label{lemma:PropertiesOfSymplecticMoves}
\begin{enumerate}
\item For fixed $i$ the symplectic Gaussian moves $\{\s{2n,j,i}:j \in[2n]\backslash \{i\}\}$ commute.
\item The two different expressions for $\s{2n,v,i}$  in equation \eqref{eq:SymplecticGaussianWholeColumnMoves}, one with the identity and outer products, the other with the symplectic Gaussian moves, are equal.  
\item $\s{2n,v,i}^2=\id{2n}$.
\item $\s{2n,v,i}$ is supported only on the main diagonal, column $i$ and row $2n+1-i$. 
\item $\s{2n,v,i} \e{2n,i} = \e{2n,i} + v$
\item If $\e{2n,i}^{\top}u=v^{\top}\revdiag{2n}u=0$, then $\s{2n,v,i}u=u$. 
\end{enumerate}
\end{lemma}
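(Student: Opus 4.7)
The plan is to handle the six claims roughly in the order \emph{(1), (2), (3), (4), (5), (6)}, because the later parts lean on the explicit expressions and on the commutativity established early on. For (1), I would write $\s{2n,j,i}$ in the explicit rank-one form (splitting the cases $j=2n+1-i$ and $j\ne 2n+1-i$) and compute the products $\s{2n,j,i}\s{2n,k,i}$ for $j\ne k$ directly. Each cross term is of the shape $\e{2n,a}\e{2n,b}^\top\e{2n,c}\e{2n,d}^\top$, and the middle inner product $\e{2n,b}^\top\e{2n,c}$ is always $0$ because the right-hand indices of the rank-one pieces of $\s{2n,j,i}$ are either $i$ or $2n+1-j$, while the left-hand indices of the pieces of $\s{2n,k,i}$ are either $k$ or $2n+1-i$; all of these are distinct from the required matches when $j\ne k$. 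Hence the two factors commute.

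For (2), I would exploit (1) and compute $\prod_{j\ne i}\s{2n,j,i}^{v_j}$ by grouping the factors into pairs $\{j,2n+1-j\}$. Each such pair, with both $v_j=v_{2n+1-j}=1$, produces one "extra" rank-one term $\e{2n,2n+1-i}\e{2n,i}^\top$ coming from the nontrivial product of one piece of $\s{2n,j,i}$ with one piece of $\s{2n,2n+1-j,i}$; this is the only surviving quadratic contribution in the expansion. Summing over pairs gives a total extra coefficient of $\sum_{j=1}^n v_j v_{2n+1-j}$ (using $v_i=0$ to drop the singular pair), and left-multiplying by $\s{2n,2n+1-i,i}^{\sum_j v_jv_{2n+1-j}}$ flips this extra contribution back off in $\F_2$. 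Matching the leftover linear terms against the closed form $\id{2n}+v\e{2n,i}^\top+\revdiag{2n}\e{2n,i}v^\top\revdiag{2n}+v_{2n+1-i}\e{2n,2n+1-i}\e{2n,i}^\top$ is a routine index-tracking exercise, and this bookkeeping is the main obstacle of the whole lemma. For (3), I would use the product form from (2): the factors commute by (1), each $\s{2n,j,i}$ is an involution by direct computation, so $\s{2n,v,i}^2=\id{2n}$ follows by squaring each factor independently.

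For (4), (5), (6) I would rely on the first (closed) form of $\s{2n,v,i}$ shown equal to the product in (2). Writing it as $\id{2n}+v\e{2n,i}^\top+\e{2n,2n+1-i}v^\top\revdiag{2n}+v_{2n+1-i}\e{2n,2n+1-i}\e{2n,i}^\top$, I observe that the three non-identity terms are supported, respectively, on column $i$, on row $2n+1-i$, and on the single entry $(2n+1-i,i)$, yielding (4). For (5), applying the expression to $\e{2n,i}$ uses $\e{2n,i}^\top\e{2n,i}=1$ and $v^\top\revdiag{2n}\e{2n,i}=v^\top\e{2n,2n+1-i}=v_{2n+1-i}$, so the two copies of $v_{2n+1-i}\e{2n,2n+1-i}$ that appear cancel in $\F_2$ and only $\e{2n,i}+v$ survives. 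For (6), the hypotheses $\e{2n,i}^\top u=0$ and $v^\top\revdiag{2n}u=0$ kill the column-$i$ piece and the row-$2n+1-i$ piece of $\s{2n,v,i}-\id{2n}$ when applied to $u$, leaving $\s{2n,v,i}u=u$.
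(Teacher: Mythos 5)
Your overall structure is workable and parts (3)--(6) are correct; for part (2) you expand the product of symplectic moves directly, whereas the paper instead evaluates both expressions on the standard basis vectors (using that $\s{2n,j,i}$ fixes every basis vector except $\e{2n,i}$ and $\e{2n,2n+1-j}$), but both routes lead to the same bookkeeping. However, your argument for part (1) contains a genuine error. You claim that in $\s{2n,j,i}\s{2n,k,i}$ every cross term $\e{2n,a}\e{2n,b}^{\top}\e{2n,c}\e{2n,d}^{\top}$ vanishes because the right indices $b\in\{i,2n+1-j\}$ of the pieces of the first factor never meet the left indices $c\in\{k,2n+1-i\}$ of the pieces of the second factor when $j\neq k$. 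But $j\neq k$ does not exclude $k=2n+1-j$: in that case (which forces $j,k\neq 2n+1-i$, so both factors have two off-diagonal pieces) the product
\begin{equation}
\parenth{\e{2n,2n+1-i}\e{2n,2n+1-j}^{\top}}\parenth{\e{2n,k}\e{2n,i}^{\top}}=\e{2n,2n+1-i}\e{2n,i}^{\top}
\end{equation}
survives. Commutativity is still true, because the reversed product $\s{2n,k,i}\s{2n,j,i}$ produces the identical surviving term $\e{2n,2n+1-i}\e{2n,i}^{\top}$, but your stated justification is false as written.

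This is not a cosmetic slip, because it contradicts your own part (2): there you correctly attribute the extra $\e{2n,2n+1-i}\e{2n,i}^{\top}$ per pair $\{j,2n+1-j\}$ --- the very contribution that the correction factor $\s{2n,2n+1-i,i}^{\sum_{j}v_jv_{2n+1-j}}$ in equation \eqref{eq:SymplecticGaussianWholeColumnMoves} exists to cancel --- to exactly this nonvanishing cross term between one piece of $\s{2n,j,i}$ and one piece of $\s{2n,2n+1-j,i}$. If your part (1) reasoning were sound, no correction factor would be needed and part (2) would fail. The repair is local: in part (1), enumerate the four possible matches $b=c$, note that the only one compatible with $j\neq k$ is $2n+1-j=k$, and check that the resulting term is symmetric under interchanging the two factors; for part (2), also note that no higher-degree terms arise because no admissible piece has left index $i$ or right index $2n+1-i$. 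With that correction the remainder of your argument goes through.
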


\begin{proof}
From the definitions using simple calculations. The most work is required to show that the two expressions in equation \eqref{eq:SymplecticGaussianWholeColumnMoves} are equal. To do this, consider the action of the two expressions on the standard basis vectors, exploiting the commutativity from part 1 and the fact that $\s{2n,i,j}$ acts as the identity on all standard basis vectors except $\e{2n,j},\e{2n,2n+1-i}$. 
\end{proof}

\subsubsection{An identity reveals the generators and canonical form of the groups $\B{2n}{T}$}

In the classical case, the generators and canonical form of the groups $\L{n}{T}$ were obtained from the identity in Lemma \ref{lemma:ClassicalLowerTriangularIdentity}. A symplectic analogue of that identity will now be given. 

Because of the symplectic constraint, the off-diagonal entries of matrices in $\L{2n}{\orderedpairs{2n}}\cap\symplecticgroup{2n}$ cannot all be chosen freely. The following linear function focuses on a set of entries that can be chosen freely, and that together determine the rest. 

\begin{definition}
For $n\in\N$, let $\v{n}:\F^{2n\times 2n}\rightarrow \F^{2n \times n}$ be the linear function
\begin{equation}
\v{n}(B) = \sum_{j=1}^n \sum_{i=j+1}^{2n+1-j} \e{2n,i}\e{2n,i}^{\top} B \e{2n,j}\e{n,j}^{\top}
\end{equation}
\end{definition}

Now, consider the identity:
\begin{theorem}\label{thm:SymplecticLowerTriangularIdentity}
For every $B \in \B{2n}{\orderedpairs{2n}}$, 
\begin{equation}
B=\s{2n,\v{n}(B)\e{n,1},1}\s{2n,\v{n}(B)\e{n,2},2}\dots\s{2n,\v{n}(B)\e{n,n},n}
\end{equation}
\end{theorem}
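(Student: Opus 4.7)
The plan is to prove the identity by induction on $n$. The base case $n=1$ is a direct computation: $v^{(1)}=B_{2,1}\e{2,2}$, and substitution into the definition of $\s{2n,v,i}$ gives $\s{2,v^{(1)},1}=B$.

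For the inductive step I will peel off the leftmost factor $\s{2n,v^{(1)},1}$. Since $v^{(1)}=\v{n}(B)\e{n,1}$ has support in $[2,2n]$ and agrees there with the off-diagonal entries of the first column of $B$, one has $B\e{2n,1}=\e{2n,1}+v^{(1)}$. Evaluating the $(1,1)$-entry of the symplectic relation $B^{\top}\revdiag{2n}B=\revdiag{2n}$ and expanding yields the cancellation $\parenth{v^{(1)}}^{\top}\revdiag{2n}v^{(1)}=0$. Combined with $v^{(1)}_1=0$, properties 5 and 6 of Lemma \ref{lemma:PropertiesOfSymplecticMoves} then give $\s{2n,v^{(1)},1}\parenth{\e{2n,1}+v^{(1)}}=\parenth{\e{2n,1}+v^{(1)}}+v^{(1)}=\e{2n,1}$, so $B':=\s{2n,v^{(1)},1}B\in\B{2n}{\orderedpairs{2n}}$ fixes $\e{2n,1}$. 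Using $B'^{-1}=\revdiag{2n}B'^{\top}\revdiag{2n}$ then forces the last row of $B'$ to be $\e{2n,2n}^{\top}$.

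A trivial first column and last row, together with lower triangularity, give $B'$ the block form with $1$'s at positions $(1,1)$ and $(2n,2n)$ and a $(2n-2)\times(2n-2)$ block $B''$ in the middle; writing $\revdiag{2n}$ in the matching $(1,2n-2,1)$ block form shows $B''\in\B{2(n-1)}{\orderedpairs{2(n-1)}}$. Since $\s{2n,v^{(1)},1}$ only modifies column $1$ and row $2n$, direct inspection gives $\v{n}(B')\e{n,j}=\v{n}(B)\e{n,j}=v^{(j)}$ for $j\geq 2$, and for such $j$ the vector $v^{(j)}$ has support in $[3,2n-1]$ and coincides with $\v{n-1}(B'')\e{n-1,j-1}$ under the index shift $i\mapsto i-1$. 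By the same support argument, for $j\geq 2$ the factor $\s{2n,v^{(j)},j}$ is block-diagonal with middle block $\s{2(n-1),\v{n-1}(B'')\e{n-1,j-1},j-1}$, so the inductive hypothesis applied to $B''$ gives $\prod_{j=2}^{n}\s{2n,v^{(j)},j}=B'$. Multiplying on the left by the self-inverse $\s{2n,v^{(1)},1}$ then produces the desired decomposition of $B$.

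The main obstacle is the symplectic cancellation $\parenth{v^{(1)}}^{\top}\revdiag{2n}v^{(1)}=0$: this is where the assumption $B\in\symplecticgroup{2n}$ (rather than merely $B\in\L{2n}{\orderedpairs{2n}}$) is essentially used, and it is what makes $B'$ genuinely block-decompose. The remaining steps, and in particular the identification of the middle block as an element of $\B{2(n-1)}{\orderedpairs{2(n-1)}}$ with the appropriate $\v{n-1}$ parameters, are routine bookkeeping about the supports of the symplectic Gaussian moves.
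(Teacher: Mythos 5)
Your proof is correct and rests on the same mechanism as the paper's: applying $\s{2n,\v{n}(B)\e{n,k},k}$ clears column $k$, and the symplectic relation then forces row $2n+1-k$ to clear as well, so the only real difference is organizational — you induct on $n$ via a block decomposition of the reduced matrix, whereas the paper runs a single induction over the $n$ clearing steps inside the fixed group $\B{2n}{\orderedpairs{2n}}$, which avoids your (correct but slightly laborious) bookkeeping identifying the middle blocks of the remaining moves. One small remark: the cancellation $\parenth{v^{(1)}}^{\top}\revdiag{2n}v^{(1)}=0$, which you present as the essential use of the symplectic hypothesis, in fact holds for every vector in $\F^{2n}$, since $x^{\top}\revdiag{2n}x=2\sum_{i\leq n}x_ix_{2n+1-i}=0$ in characteristic $2$; the symplectic hypothesis is genuinely needed only where you (and the paper) use $B'^{-1}=\revdiag{2n}B'^{\top}\revdiag{2n}$ to clear the conjugate row.
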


\begin{proof}
Define a sequence of matrices by 
\begin{equation}
B_0=B, B_k = \s{2n,\v{n}(B)\e{n,k},k} B_{k-1}, k=1, \dots, n
\end{equation}
It will be shown by induction on $k$ that the matrices $B_k$ satisfy:
\begin{equation}
\e{2n,i}^{\top} B_k \e{2n,j} = \begin{cases}
\e{2n,i}^{\top}B\e{2n,j} & \text{if } i,j \in \{k+1,\dots, 2n-k\} \\
\e{2n,i}^{\top} \e{2n,j} & \text{otherwise}
\end{cases}
\end{equation}
For $B_0=B$, the claim holds. Suppose the claim holds for $B_{k-1}$. Note that 
\begin{multline}
B_k=\s{2n,\v{n}(B)\e{n,k},k}B_{k-1}\\
=B_{k-1}+\v{n}(B)\e{n,k}\e{2n,k}^{\top} B_{k-1} + \e{2n,2n+1-k}\e{n,k}^{\top}\v{n}(B)^{\top}\revdiag{2n}B_{k-1} \\+ \e{2n,2n+1-k}\e{2n,2n+1-k}^{\top}B\e{2n,k}\e{2n,k}^{\top} B_{k-1}
\end{multline}
Since $\e{2n,k}^{\top}B_{k-1}=\e{2n,k}^{\top}$, the second term is supported only on the $k$-th column. The third and fourth terms are supported only on row $2n+1-k$. Thus, $B_k-B_{k-1}$ is supported only on column $k$ and row $2n+1-k$. Moreover, column $k$ of $B_{k-1}$ is $\e{2n,k}+\v{n}(B)\e{n,k}$, so column $k$ of $B_k$ is just $\e{2n,k}$. Finally, row $2n+1-k$ of $B_k$ is determined by column $k$ via the symplectic constraint:
\begin{equation}
\e{2n,2n+1-k}^{\top} B_k=\e{2n,k}^{\top} \revdiag{2n} B_k = \e{2n,k}^{\top} B_k^{\top} \revdiag{2n} B_k = \e{2n,k}^{\top} \revdiag{2n} = \e{2n,2n+1-k}^{\top}
\end{equation}
This completes the induction. Then, $B_n=\id{2n}$, which proves the theorem. 
\end{proof}

The following Theorem gives properties of the groups $\B{2n}{T}$ related to generators and a canonical form:

\begin{theorem}
Let $T \subset \orderedpairs{2n}$ be both transitive and closed under reversal. Then, 
\begin{enumerate}
\item $(i,j) \in T$ implies $\s{2n,i,j} \in \B{2n}{T}$. 
\item If $i \in [2n]$ and $v\in \F^{2n}$ are such that $v$ is supported on positions $\{j: (j,i)\in T\}$, then $\s{2n,v,i} \in \B{2n}{T}$. 
\item $\{\s{2n,i,j}: j \in [n], j+1 \leq i \leq 2n+1-j, (i,j)\in T\}$ generates $\B{2n}{T}$.
\item The identity in Theorem \ref{thm:SymplecticLowerTriangularIdentity} provides a canonical expression of each $B \in \B{2n}{T}$ in terms of the generators. 
\end{enumerate}
\end{theorem}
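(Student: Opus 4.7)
The plan is to treat the four claims in order, leveraging the identity in Theorem \ref{thm:SymplecticLowerTriangularIdentity} together with the decomposition formula for $\s{2n,v,i}$ in equation \eqref{eq:SymplecticGaussianWholeColumnMoves}.

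For part 1, I argue from the explicit formula for $\s{2n,i,j}$: when $i+j=2n+1$, the matrix equals $\id{2n}+\e{2n,i}\e{2n,j}^{\top}$, which lies in $\L{2n}{T}$ directly from $(i,j)\in T$; when $i+j\neq 2n+1$, the extra rank-one term $\e{2n,2n+1-j}\e{2n,2n+1-i}^{\top}$ is handled by closure under reversal, which yields $(2n+1-j,2n+1-i)\in T$. Symplecticity is automatic since $\s{2n,i,j}$ is the image of $\c{2n,i,j}$ under $\cliffordhomomorphism{n}$. For part 2, I use the product form of $\s{2n,v,i}$, namely $\s{2n,2n+1-i,i}^{\sum_{j=1}^n v_j v_{2n+1-j}}\prod_{j\neq i}\s{2n,j,i}^{v_j}$. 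Every factor $\s{2n,j,i}^{v_j}$ with $v_j=1$ satisfies $(j,i)\in T$ by hypothesis and lies in $\B{2n}{T}$ by part 1. The subtle case is the correction factor: when its exponent equals $1$, there exists $j$ with both $v_j$ and $v_{2n+1-j}$ nonzero, so both $(j,i)$ and $(2n+1-j,i)$ lie in $T$; applying closure under reversal to $(j,i)$ gives $(2n+1-i,2n+1-j)\in T$, and transitivity with $(2n+1-j,i)$ yields $(2n+1-i,i)\in T$, so the correction factor also lies in $\B{2n}{T}$.

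For parts 3 and 4, I take $B\in\B{2n}{T}$ and apply Theorem \ref{thm:SymplecticLowerTriangularIdentity} to write $B=\s{2n,\v{n}(B)\e{n,1},1}\cdots\s{2n,\v{n}(B)\e{n,n},n}$. The nonzero entries of $\v{n}(B)\e{n,k}$ are entries of $B$ at row indices in $\{k+1,\dots,2n+1-k\}$, and because $B\in\L{2n}{T}$ they appear only at indices $j$ with $(j,k)\in T$. Thus part 2 applies to each factor, and expanding each factor via the decomposition formula expresses $B$ as a product of generators from the set listed in part 3, proving generation. For uniqueness, each exponent in the final expression is a fixed function of the entries of $B$: the exponent of $\s{2n,j,k}$ for $k+1\leq j\leq 2n+1-k$ with $j\neq 2n+1-k$ equals $\e{2n,j}^{\top}B\e{2n,k}$, and the exponent of $\s{2n,2n+1-k,k}$ is $\e{2n,2n+1-k}^{\top}B\e{2n,k}+\sum_{j=1}^{n}(\e{2n,j}^{\top}B\e{2n,k})(\e{2n,2n+1-j}^{\top}B\e{2n,k})$. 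These scalars are determined by $B$, so the expression is unique. The main obstacle I anticipate is the transitivity step in part 2, which must derive $(2n+1-i,i)\in T$ through an intermediate index chain even though this pair is not directly hypothesized to lie in $T$; once this step is handled, parts 3 and 4 follow routinely from Theorem \ref{thm:SymplecticLowerTriangularIdentity} and the decomposition formula.
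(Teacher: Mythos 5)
Your proposal is correct and follows essentially the same route as the paper: part 1 from the explicit formula plus closure under reversal, part 2 from part 1, the product form of $\s{2n,v,i}$, and the key deduction $(j,i)\in T \wedge (2n+1-j,i)\in T \implies (2n+1-i,i)\in T$ via reversal then transitivity (which is exactly the paper's "Claim"), and parts 3 and 4 from part 2 together with Theorem \ref{thm:SymplecticLowerTriangularIdentity}. Your explicit identification of the exponents as functions of the entries of $B$ simply spells out why the expression is canonical, which the paper leaves implicit.
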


\begin{proof}
Part 1 follows directly from the definitions. 

Part 2 follows from part 1, equation \eqref{eq:SymplecticGaussianWholeColumnMoves} and the following claim, which is used to deal with the correction term $\s{2n,2n+1-i,i}^{\sum_{j=1}^n v_jv_{2n+1-j}}$ in equation \eqref{eq:SymplecticGaussianWholeColumnMoves}:
\begin{equation}
\text{Claim: }(j,i) \in T \wedge (2n+1-j,i) \in T \implies (2n+1-i,i) \in T
\end{equation}
Indeed, closure under reversal implies $(2n+1-i,2n+1-j) \in T$ and then transitivity implies $(2n+1-i,i)\in T$. 

Parts 3 and 4 follow from part 2 and Theorem \ref{thm:SymplecticLowerTriangularIdentity}. 
\end{proof}

\section{Canonical forms for unrestricted and stabilizer parity check matrices, and for the symplectic and Clifford groups}\label{sec:canonical_forms}

The preparatory subsection \ref{sec:CanonicalFormForUnrestrictedMatrices} introduces the main ideas in the simpler case of unrestricted matrices. It also establishes notation and some Lemmas. Next, subsection \ref{sec:canonical_form_for_stabilizer_pcm} gives the canonical form for stabilizer parity check matrices. Finally, subsection \ref{sec:canonical_form_for_symplectic_group} considers the canonical form for the symplectic and Clifford groups. 

\subsection{Gaussian elimination produces a canonical form for matrices of a given size}\label{sec:CanonicalFormForUnrestrictedMatrices}

An application of the groups $\L{n}{T}$ will now be described. Groups of this type will be seen to arise naturally during Gaussian elimination. They will be used to show that the associated matrix decomposition is in fact a canonical form. 

One of the many variants of Gaussian elimination will now be recalled in some detail, with special attention on which row and column operations can potentially be used, as a function of the pivot positions. 

The search for pivots must proceed in some definite order; different orders result in different but analogous matrix decompositions \cite{strang2015algebra}. The present article uses the order "left and down": the search for pivots starts in the first row from the last element to the first, then in the second row from the last element to the first, etc. This choice produces decompositions with lower triangular matrices both on the left and on the right. 

The positions of the pivots can be described using a pair of functions. Let $\alpha:[r]\rightarrow [m]$ be increasing, and let  $\beta:[r] \rightarrow [n]$ be injective; to these functions correspond the $r$ pivot positions $(\alpha(i), \beta(i)),i=1,\dots,r$. These $r$ pivot positions can also be summarized in a matrix: let
\begin{equation}
\Pi(\alpha,\beta)=\sum_{i=1}^r\e{m,\alpha(i)}\e{n,\beta(i)}^{\top}
\end{equation}
$\Pi(\alpha,\beta)$ may be called an incomplete permutation matrix, because it has at most one 1 in each row and column, but may be non-square and may have zero rows and columns. 

During Gaussian elimination, the pivot positions are not revealed all at once, but step-by-step. Take $s>r$, and suppose that increasing $\alpha':[s]\rightarrow [m]$ and injective $\beta':[s]\rightarrow [n]$ agree with $\alpha,\beta$ on $[r]$. Then, call $(\alpha',\beta')$ an $s$-extension of $(\alpha,\beta)$, and call $(\alpha,\beta)$ the $r$-predecessor of $(\alpha',\beta')$. A given collection of pivot positions can have many possible extensions to a given higher rank, but it has a unique predecessor at a given lower rank. 

The algorithm progresses by taking a partially reduced matrix and simplifying it further. For $\alpha,\beta$ as above, call $A \in \F^{m \times n}$ $(\alpha,\beta)$-partially reduced if rows $1, \dots, \alpha(r)$ and columns $\beta(1),\dots,\beta(r)$ of $A$ coincide with the corresponding rows and columns of $\Pi(\alpha,\beta)$. If in addition $A = \Pi(\alpha,\beta)$, then $A$ is fully reduced, and the algorithm ends. 

If $A$ is partially but not fully reduced, then extend $(\alpha,\beta)$ to $(\alpha',\beta')$ so that $\alpha'(r+1),\beta'(r+1)$ is the position of the next pivot in the left and down order. Then, transform $A$ to a matrix $A'$ that is $(\alpha',\beta')$-partially reduced:
\begin{equation}\label{eq:GaussianEliminationStep}
A'=\left( \prod_{i=\alpha'(r+1)+1}^m \g{m,i,\alpha'(r+1)}^{\e{m,i}^{\top} A \e{n,\beta'(r+1)}} \right) A 
 \left( \prod_{j=1}^{\beta'(r+1)-1} \g{n,\beta'(r+1),j}^{\e{m,\alpha'(r+1)}^{\top} A \e{n,j}} \right)
\end{equation}

Now, observe that on the left, only $\g{m,i,\alpha'(r+1)}$ with $i>\alpha'(r+1)$ are used. Similarly, on the right, only $\g{n,\beta'(r+1),j}$ with $j<\beta'(r+1)$ are used. However, on the right, there is an additional restriction: since columns $\beta(1), \dots, \beta(r)$ of $A$ are already reduced, $\g{n,\beta'(r+1),j}$ with $j \in Im(\beta)$ are not used.\footnote{If the search for pivots proceeded along columns rather than along rows, i.e. in the order last column top to bottom, then second last column top to bottom, etc., the additional restriction would be on the left rather than on the right.} 

These observations prompt the following definitions. 
\begin{definition}
For increasing $\alpha:[r]\rightarrow [m]$, let 
\begin{equation}
\tl{\alpha}=\{(i,j):j \in Im(\alpha),i>j\}\subset\orderedpairs{m}
\end{equation}
\end{definition}

\begin{definition}
For injective $\beta:[r] \rightarrow [n]$, let 
\begin{equation}
\tr{\beta}=\left\{(i,j):i \in Im(\beta),j<i,j\notin\left\{\beta(1),\dots,\beta\left(\beta^{-1}(i)-1\right)\right\}\right\} \subset\orderedpairs{n}
\end{equation}
\end{definition}

Note that these sets are transitive and behave naturally under extensions: for $\alpha'$ an extension of $\alpha$, $\tl{\alpha} \subset \tl{\alpha'}$, and for $\beta'$ an extension of $\beta$, $\tr{\beta} \subset \tr{\beta'}$. 

Putting everything so far together gives the following matrix decomposition:

\begin{theorem}\label{thm:GaussianElimination}
For all $A \in \F^{m \times n}$ there exist $r \leq \min(m,n)$, increasing $\alpha:[r]\rightarrow[m]$, injective $\beta:[r]\rightarrow[n]$, and matrices $L \in \L{m}{\tl{\alpha}}$, $R\in\L{n}{\tr{\beta}}$ such that
\begin{equation}
A=L \Pi(\alpha,\beta) R
\end{equation}
$(r,\alpha,\beta,L,R)$ can be computed from $A$ in time $O(mnr)$ by Gaussian elimination.\footnote{Assuming $L,R$ are output as lists of off-diagonal non-zero positions; otherwise, outputting the full $L,R$ would take time $O(\max(m^2,n^2))$.} 
\end{theorem}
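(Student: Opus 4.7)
The plan is to run the Gaussian elimination procedure described by equation \eqref{eq:GaussianEliminationStep} and show that what it produces is precisely the claimed decomposition. Initialize $A_0 = A$ and take $(\alpha_0,\beta_0)$ to be the empty pair. At step $k \geq 1$, inspect $A_{k-1}$: if it already equals $\Pi(\alpha_{k-1},\beta_{k-1})$ then halt with $r = k-1$; otherwise scan along the left-and-down order to locate the next pivot at $(\alpha_k(k),\beta_k(k))$, extending $(\alpha_{k-1},\beta_{k-1})$ by one position. Let $L^{(k)}$ and $R^{(k)}$ denote the left and right products appearing in \eqref{eq:GaussianEliminationStep} for this pivot, and set $A_k = L^{(k)} A_{k-1} R^{(k)}$.

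I would then prove two invariants by induction on $k$: (a) $A_k$ is $(\alpha_k,\beta_k)$-partially reduced, and (b) $L^{(k)} \in \L{m}{\tl{\alpha_k}}$ and $R^{(k)} \in \L{n}{\tr{\beta_k}}$. Invariant (a) follows by direct computation from the partial reduction of $A_{k-1}$: the left product zeroes the entries below the pivot in column $\beta_k(k)$, the right product zeroes the entries to the left of the pivot in row $\alpha_k(k)$, and neither product disturbs the already-reduced rows $1,\dots,\alpha_{k-1}(k-1)$ or columns $\beta_{k-1}(1),\dots,\beta_{k-1}(k-1)$. Invariant (b) is the structural heart of the proof. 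For the left factor, only $\g{m,i,\alpha_k(k)}$ with $i > \alpha_k(k)$ occur, and each pair $(i,\alpha_k(k))$ belongs to $\tl{\alpha_k}$ because $\alpha_k(k) \in Im(\alpha_k)$. For the right factor, the exponent $\e{m,\alpha_k(k)}^{\top} A_{k-1} \e{n,j}$ vanishes whenever $j \in Im(\beta_{k-1})$ by invariant (a), so the surviving factors all satisfy $(\beta_k(k),j) \in \tr{\beta_k}$; here the restriction $j \notin Im(\beta_{k-1})$ built into the definition of $\tr{\beta_k}$ is exactly what is needed.

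Termination happens after at most $\min(m,n)$ steps because the pivots lie in distinct rows and columns; let $r$ be the final step count and set $\alpha = \alpha_r$, $\beta = \beta_r$, so $A_r = \Pi(\alpha,\beta)$. Unrolling the updates,
\begin{equation}
\Pi(\alpha,\beta) = L^{(r)} \cdots L^{(1)}\, A\, R^{(1)} \cdots R^{(r)},
\end{equation}
and inverting gives $A = L\, \Pi(\alpha,\beta)\, R$ with $L = (L^{(r)} \cdots L^{(1)})^{-1}$ and $R = (R^{(1)} \cdots R^{(r)})^{-1}$. Monotonicity $\tl{\alpha_k} \subseteq \tl{\alpha}$ (and analogously for $\beta$) promotes each step multiplier into the common group $\L{m}{\tl{\alpha}}$, which is closed under products and inverses by the group lemma in Section \ref{sec:family_of_groups}, so $L \in \L{m}{\tl{\alpha}}$ and $R \in \L{n}{\tr{\beta}}$. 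For the complexity, each step performs an $O(mn)$ scan to locate the pivot and $O(mn)$ bit operations to apply the two step multipliers in place to $A_{k-1}$, and there are $r$ steps, giving $O(mnr)$ in total. The main obstacle I anticipate is the bookkeeping in invariant (a) — making explicit which entries the step operations preserve and which they clear — but this is routine once the left-and-down search order is fixed, and it is precisely what ensures that the inductive step never disturbs the group memberships established in previous rounds.
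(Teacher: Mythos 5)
Your proposal is correct and follows essentially the same route as the paper: the paper justifies this theorem by the discussion surrounding equation \eqref{eq:GaussianEliminationStep} (the step preserves partial reduction, the left moves used lie in $\tl{\alpha'}$, the right moves lie in $\tr{\beta'}$ because already-reduced columns force the exponents for $j\in Im(\beta)$ to vanish, and monotonicity under extension plus closure of $\L{m}{T}$ under products and inverses yields the final factors). Your induction with invariants (a) and (b) is just the explicit formalization of that argument, and the complexity accounting matches.
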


\begin{proof}
Encode the evolution of Gaussian elimination in a sequence: take $A_0=A$, and for $i\geq 1$ take
\begin{enumerate}
\item $(\alpha(i),\beta(i))$ to be the position of the $i$-th pivot in the left and down order. 
\item $u_i=A_{i-1} \e{n,\beta(i)} - \e{m,\alpha(i)}$ to be the column of $A_{i-1}$ below the $i$-th pivot.
\item $v_i^\top = \e{m,\alpha(i)}^\top A_{i-1} - \e{n,\beta(i)}^\top$ to be the row of $A_{i-1}$ to the left of the $i$-th pivot. 
\item $A_i = \g{m,u_i,\alpha(i)} A_{i-1} \g{n,\beta(i),v_i^\top}$ to be the remaining matrix after row $\alpha(i)$ and column $\beta(i)$ have been cleared. 
\end{enumerate}
Induction shows that for each $i$, rows $1, \dots, \alpha(i)$ and columns $\beta(1),\dots, \beta(i)$ of $A_i$ match the corresponding rows and columns of $\sum_{k=1}^i \e{m,\alpha(k)}\e{n,\beta(k)}^\top$. Then, the sequence must terminate; let $A_r=\Pi(\alpha,\beta)$ be the last term. Then, $A=L\Pi(\alpha,\beta)R$ with
\begin{equation}
L=\g{m,u_1, \alpha(1)} \g{m,u_2, \alpha(2)} \dots \g{m,u_r,\alpha(r)} \in \L{m}{\tl{\alpha}}
\end{equation}
and 
\begin{equation}
R=\g{n,\beta{r},v_r^\top}\g{n,\beta(r-1),v_{r-1}^\top}\dots\g{n,\beta(1),v_1^\top} \in \L{n}{\tr{\beta}}
\end{equation}

Now, consider the complexity of the algorithm. For each $i$, $(\alpha(i),\beta(i),u_i,v_i,A_i)$ can be computed from $(A_{i-1}, \alpha(i-1))$ in time $O(mn)$. Then, the time to reach step $r$ is $O(mnr)$. 

It remains to consider the time to compute $L$ and $R$. It is shown below that the non-zero off-diagonal positions can be computed in time $O(mr^2)$ for $L$ and in time $O(nr^2)$ for $R$. 

Consider the sequence $L_1 = \g{m,u_1,\alpha(1)}$ and $L_i = L_{i-1}\g{m,u_i,\alpha(i)}$ for $i=2,\dots,r$. Let $\tilde{L}_i$ be such that $L_i = \id{m} + \tilde{L}_i$. Then, 
\begin{equation}
\tilde{L}_i = \tilde{L}_{i-1} + u_i \e{m,\alpha(i)}^\top + \tilde{L}_{i-1} u_i \e{m,\alpha(i)}^\top
\end{equation}
Note that $\tilde{L}_{i-1}, \tilde{L}_i$ are supported on some subsets of $\tl{\alpha}$, so they have at most $r$ non-zero columns. Then, $\tilde{L}_i$ can be computed from $(\tilde{L}_{i-1},u_i, \alpha(i))$ in time $O(mr)$, so $\tilde{L}=\tilde{L}_r$ can be computed from $(u_1,\dots,u_r,\alpha(1),\dots,\alpha(r))$ in time $O(mr^2)$. 

Similarly, consider the sequence $R_1=\g{n,\beta(1),v_1^\top}$ and $R_i=\g{n,\beta(i),v_i^\top} R_{i-1}$ for $i=2,\dots,r$. Let $\tilde{R}_i$ be such that $R_i = \id{n} + \tilde{R}_i$. Then, 
\begin{equation}
\tilde{R}_i = \tilde{R}_{i-1} + \e{n,\beta(i)}v_i^\top + \e{n,\beta(i)} v_i^\top \tilde{R}_{i-1}
\end{equation}
Note that $\tilde{R}_{i-1},\tilde{R}_i$ are supported on some subsets of $\tr{\beta}$, so they have at most $r$ non-zero rows. Then, $\tilde{R}_i$ can be computed from $(\tilde{R}_{i-1},\beta(i),v_i^\top)$ in time $O(nr)$, so $\tilde{R} = \tilde{R}_r$ can be computed from $(\beta(1), \dots, \beta(r), v_1^\top, \dots, v_r^\top)$ in time $O(nr^2)$. 
\end{proof}

The given matrix decomposition is in fact unique; thus, it is a canonical form for $m \times n$ matrices. This will now be proved in detail. 
For convenience, say that a quintuple $(r,\alpha,\beta,L,R)$ is $(m,n)$-allowed if $r \leq \min(m,n)$, $\alpha:[r]\rightarrow[m]$ is increasing, $\beta:[r]\rightarrow[n]$ is injective, $L\in\L{m}{\tl{\alpha}}$ and $R\in\L{n}{\tr{\beta}}$. 

\begin{theorem}\label{thm:UniquenessOfDecomposition}
Let $(r,\alpha,\beta,L,R)$, $(r',\alpha',\beta',L',R')$ be $(m,n)$-allowed quintuples such that
\begin{equation}
L\Pi(\alpha,\beta)R=L'\Pi(\alpha',\beta')R'
\end{equation}
Then, 
\begin{equation}
(r,\alpha,\beta,L,R)=(r',\alpha',\beta',L',R')
\end{equation}
\end{theorem}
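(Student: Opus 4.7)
My plan is to recover $(r, \alpha, \beta, L, R)$ from $A$ in three stages, at each stage exploiting the support restrictions imposed by $L \in \L{m}{\tl{\alpha}}$ and $R \in \L{n}{\tr{\beta}}$ to peel off more information.

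For the first stage, I recover $(r, \alpha)$. Since $L$ is invertible and the non-zero rows of $\Pi(\alpha, \beta) R$ sit exactly at positions $\alpha(1), \ldots, \alpha(r)$ (where row $\alpha(k)$ equals row $\beta(k)$ of $R$), the row span of the first $s$ rows of $A$ has dimension $|\{j : \alpha(j) \leq s\}|$. This determines $r = \text{rank}(A)$ and $\alpha(i)$ as the smallest $s$ for which that dimension reaches $i$.

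For the second stage, let $V_i$ be the row span of the first $\alpha(i)$ rows of $A$. The definition of $\tr{\beta}$ together with the lower triangularity of $R$ force row $\beta(j)$ of $R$ to vanish at all columns in $\{\beta(1), \ldots, \beta(j-1)\}$ and at all columns strictly greater than $\beta(j)$. So for any non-zero $v = \sum_{j \leq i} c_j \cdot (\text{row } \beta(j) \text{ of } R)$ in $V_i$, the rightmost non-zero column of $v$ is exactly $\beta(j^{*})$ with $j^{*} = \max\{j : c_j \neq 0\}$, because no other term can contribute at that column. The set of such rightmost non-zero columns over $v \in V_i \setminus \{0\}$ is therefore $\{\beta(1), \ldots, \beta(i)\}$, and $\beta(i)$ is read off inductively as the new element when passing from $V_{i-1}$ to $V_i$.

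For the third stage, once $(\alpha', \beta') = (\alpha, \beta)$, set $M = L^{-1} L' \in \L{m}{\tl{\alpha}}$ and $N = R R'^{-1} \in \L{n}{\tr{\beta}}$, so that $M \Pi = \Pi N$ with $\Pi = \Pi(\alpha, \beta)$. Reading column $c$ on both sides gives $M \e{m, \alpha(k)} = \sum_{\ell : \beta(\ell) \geq \beta(k)} N_{\beta(\ell), \beta(k)}\, \e{m, \alpha(\ell)}$ when $c = \beta(k) \in \text{Im}(\beta)$, and $\sum_{\ell : \beta(\ell) \geq c} N_{\beta(\ell), c}\, \e{m, \alpha(\ell)} = 0$ when $c \notin \text{Im}(\beta)$. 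The second family, combined with $N \in \L{n}{\tr{\beta}}$, forces each column of $N$ outside $\text{Im}(\beta)$ to be standard. For the first family, $N_{\beta(\ell), \beta(k)}$ with $\ell > k$ already vanishes by the support structure of $\tr{\beta}$ (the required condition $\beta(k) \notin \{\beta(1), \ldots, \beta(\ell-1)\}$ fails), which forces the linked entry $M_{\alpha(\ell), \alpha(k)}$ to be zero too; entries with $\ell < k$ and $\beta(\ell) > \beta(k)$ are linked instead to $M_{\alpha(\ell), \alpha(k)}$ at $\alpha(\ell) < \alpha(k)$, which vanishes by the lower triangularity of $M$, so $N_{\beta(\ell), \beta(k)} = 0$ as well. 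Combined with $M \in \L{m}{\tl{\alpha}}$, these vanishings yield $M = I_m$ and $N = I_n$, hence $(L, R) = (L', R')$. The main obstacle is exactly this bookkeeping: the column identities link entries of $M$ at $(\alpha(\ell), \alpha(k))$ to entries of $N$ at $(\beta(\ell), \beta(k))$, but the permitted supports $\tl{\alpha}$ and $\tr{\beta}$ do not match under this link, which is why the two-case split on $\ell \gtrless k$ (using lower triangularity of $M$ in one case and the support structure of $\tr{\beta}$ in the other) is essential.
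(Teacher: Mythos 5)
Your proof is correct in outline and reaches the conclusion, but it diverges from the paper's argument in the first two stages and contains one false intermediate claim in stage 2 that needs repair. The paper establishes $\Pi(\alpha,\beta)=\Pi(\alpha',\beta')$ by a rank-of-minors argument (Lemma \ref{lemma:UniquenessOfIncompletePermutation}: for every block partition, the upper-right blocks have equal rank), whereas you recover $(r,\alpha,\beta)$ directly from $A$: the dimension of the span of the first $s$ rows of $A$ equals $\absval{\{k:\alpha(k)\leq s\}}$, which pins down $r$ and $\alpha$, and the rightmost-nonzero-column positions occurring in $V_i$ pin down $\beta$. This is a legitimate, arguably more constructive alternative. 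Your stage 3 is essentially the paper's Lemmas \ref{lemma:UniquenessOfRightFactor} and \ref{lemma:UniquenessOfLeftFactor} run simultaneously: your case $\ell>k$ (entry killed by the support condition of $\tr{\beta}$) and case $\ell<k$ (entry killed by lower triangularity of $M$ plus $\alpha$ increasing) are exactly the paper's Cases 4 and 5.

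The flaw: in stage 2 you assert that the rightmost nonzero column of $v=\sum_{j\leq i}c_j\cdot(\text{row }\beta(j)\text{ of }R)$ is $\beta(j^{*})$ with $j^{*}=\max\{j:c_j\neq 0\}$, ``because no other term can contribute at that column.'' This is false when $\beta$ is not increasing, and $\beta$ is only injective here (the left-and-down pivot order typically produces a non-increasing $\beta$). For $j<j^{*}$ the vanishing you established for row $\beta(j)$ covers the columns in $\{\beta(1),\dots,\beta(j-1)\}$ and the columns strictly greater than $\beta(j)$; the column $\beta(j^{*})$ lies in neither set whenever $\beta(j^{*})<\beta(j)$, so row $\beta(j)$ can contribute there. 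Concretely, for $n=2$, $\beta(1)=2$, $\beta(2)=1$, $R=\id{2}+\e{2,2}\e{2,1}^{\top}\in\L{2}{\tr{\beta}}$, the sum of rows $\beta(1)$ and $\beta(2)$ of $R$ is $(0\;\;1)$, whose rightmost nonzero column is $2=\beta(1)$, not $\beta(2)$. The correct statement is that the rightmost nonzero column of $v$ is $\max\{\beta(j):c_j\neq 0\}$: the rows $\beta(1),\dots,\beta(i)$ of $R$ have distinct leading (rightmost-nonzero) positions $\beta(1),\dots,\beta(i)$ by lower triangularity alone, so the leading position of any nonzero combination is the maximum of the leading positions of the terms present. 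The set of leading positions over $V_i\setminus\{0\}$ is therefore still $\{\beta(1),\dots,\beta(i)\}$, so your inductive read-off of $\beta(i)$ survives; only the pointwise identification was wrong.
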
 

\begin{proof}
The first step is to establish uniqueness of the incomplete permutation matrix. In \cite[Lemma 14]{bravyi2021hadamardfree}, an argument for the case of full permutation matrices is given; unfortunately, it is not clear how to extend this approach to the case of incomplete permutation matrices. On the other hand, a rank-of-minors argument is given in \cite{strang2015algebra}, also for the case of full permutation matrices, and this proof extends without problem to the case of incomplete permutation matrices. 

\begin{lemma}\label{lemma:UniquenessOfIncompletePermutation}
Let $\Pi(\alpha,\beta)$, $\Pi(\alpha',\beta')$ be two $m \times n$ incomplete permutation matrices. If there exist lower triangular $L \in \L{m}{\orderedpairs{m}}$, $R \in \L{n}{\orderedpairs{n}}$ such that 
\begin{equation}
L \Pi(\alpha,\beta)=\Pi(\alpha',\beta')R
\end{equation}
then $\Pi(\alpha,\beta)=\Pi(\alpha',\beta')$. 
\end{lemma}

\begin{proof}
Take any $k\in[m],l\in[n]$, and divide the matrices into blocks with $(k,m-k)$ and $(l,n-l)$ rows/columns:
\begin{equation}
\twobytwo{L_{11}}{0}{L_{21}}{L_{22}} \twobytwo{\Pi(\alpha,\beta)_{11}}{\Pi(\alpha,\beta)_{12}}{\Pi(\alpha,\beta)_{21}}{\Pi(\alpha,\beta)_{22}} = \twobytwo{\Pi(\alpha',\beta')_{11}}{\Pi(\alpha',\beta')_{12}}{\Pi(\alpha',\beta')_{21}}{\Pi(\alpha',\beta')_{22}} \twobytwo{R_{11}}{0}{R_{21}}{R_{22}}
\end{equation}
Then, the upper right blocks $\Pi(\alpha,\beta)_{12},\Pi(\alpha',\beta')_{12}$ have the same rank. This holds for any $k,l$, so $\Pi(\alpha,\beta)=\Pi(\alpha',\beta')$. 
\end{proof}

The next step is to establish uniqueness of the right factor. For the case of full permutation matrices, both \cite{strang2015algebra,bravyi2021hadamardfree} argue by considering whether conjugating a lower triangular matrix by a permutation leads to another lower triangular matrix. Unfortunately, this approach breaks down for incomplete permutation matrices, and a different approach is needed. 

\begin{lemma}\label{lemma:UniquenessOfRightFactor}
Let $\beta:[r]\rightarrow[n]$ be injective, and let $R$ be in $\L{n}{\tr{\beta}}$. If there exist $m$, lower triangular $L \in \L{m}{\orderedpairs{m}}$, and increasing $\alpha:[r]\rightarrow[m]$ such that $L\Pi(\alpha,\beta)=\Pi(\alpha,\beta)R$, then $R=I$. 
\end{lemma}

\begin{proof}
Take any $i,j\in[n]$. The goal is to show $\e{n,i}^{\top} R \e{n,j} = \e{n,i}^{\top}\e{n,j}$. Consider cases. 

\emph{Case 1: $i \leq j$.} Then, $\e{n,i}^{\top} R \e{n,j} = \e{n,i}^{\top}\e{n,j}$ because $R$ is lower triangular. 

\emph{Case 2: $i>j$, $i \notin Im(\beta)$.} Then, $\e{n,i}^{\top} R\e{n,j}=0$ because $R \in \L{n}{\tr{\beta}}$. 

\emph{Case 3: $i>j$, $\exists k:i=\beta(k)$, $j \notin Im(\beta)$.} Then,
\begin{equation}
\e{n,i}^{\top} R \e{n,j} = \e{m,\alpha(k)}^{\top} \Pi(\alpha,\beta) R \e{n,j} = \e{m,\alpha(k)}^{\top} L \Pi(\alpha,\beta) \e{n,j} = 0
\end{equation}

\emph{Case 4: $i>j$, $\exists k,l: i=\beta(k),j=\beta(l),l<k$.} Then, $\e{n,i}^{\top} R \e{n,j}=0$ because $R \in \L{n}{\tr{\beta}}$. 

\emph{Case 5: $i>j$, $\exists k,l: i=\beta(k),j=\beta(l),k<l$.} Then,
\begin{equation}
\e{n,i}^{\top} R \e{n,j} = \e{m,\alpha(k)}^{\top} \Pi(\alpha,\beta) R \e{n,j}= \e{m,\alpha(k)}^{\top} L \Pi(\alpha,\beta)\e{n,j} = \e{m,\alpha(k)}^{\top} L \e{m,\alpha(l)} = 0
\end{equation}
because $\alpha$ is increasing and $L$ is lower triangular. 
\end{proof}

Finally, note that since incomplete permutation matrices are not necessarily invertible, uniqueness of the left factor does not follow automatically from uniqueness of the incomplete permutation matrix and of the right factor. Fortunately, the following Lemma holds:

\begin{lemma}\label{lemma:UniquenessOfLeftFactor}
Let $\alpha:[r]\rightarrow[m]$ be increasing, and let $L$ be in $\L{m}{\tl{\alpha}}$. If there exist $n$ and injective $\beta:[r]\rightarrow[n]$ such that $L\Pi(\alpha,\beta)=\Pi(\alpha,\beta)$, then $L=I$. 
\end{lemma}

\begin{proof}
Take any $i \in [m]$. If $i\notin Im(\alpha)$, then $L\e{m,i}=\e{m,i}$ by the definition of $\L{m}{\tl{\alpha}}$. If $i \in Im(\alpha)$, then
\begin{equation}
L\e{m,i}=L\Pi(\alpha,\beta)\e{n,\beta(\alpha^{-1}(i))}=\Pi(\alpha,\beta)\e{n,\beta(\alpha^{-1}(i))}=\e{m,i}
\end{equation}  
Thus, $L\e{m,i}=\e{m,i}$ holds for all $i\in[m]$.  
\end{proof}

Lemmas \ref{lemma:UniquenessOfIncompletePermutation}, \ref{lemma:UniquenessOfRightFactor}, \ref{lemma:UniquenessOfLeftFactor} imply Theorem \ref{thm:UniquenessOfDecomposition}. 
\end{proof}

\subsection{Canonical form for $m \times 2n$ stabilizer parity check matrices}\label{sec:canonical_form_for_stabilizer_pcm}

Now, an application of the groups $\B{2n}{T}$ will be described. Recall from section \ref{sec:CanonicalFormForUnrestrictedMatrices} that Gaussian elimination produces a canonical form for unrestricted matrices in $\F^{m \times n}$. Now, consider $m \times 2n$ stabilizer parity check matrices. This section shows that Gaussian elimination modified to use symplectic Gaussian moves on the right produces a canonical form for such matrices, and the right factor belongs to a group $\B{2n}{T}$ for $T$ a suitable function of the pivot positions. 

First, some definitions:

\begin{definition}
$A \in \F^{m \times 2n}$ is a stabilizer parity check matrix if $A \revdiag{2n} A^{\top}=0$. 
\end{definition}

When Gaussian elimination is performed on a stabilizer parity check matrix with symplectic moves on the right, the column indices of the pivots have special structure:

\begin{definition}
Let $\qubit{n}:[2n]\rightarrow[n]$ be the function
\begin{equation}
\qubit{n}(i) = \min(i,2n+1-i)
\end{equation}
that maps each $i\in [2n]$ to the qubit on which Pauli $\q{2n,i}$ acts.
\end{definition}

\begin{definition}
A function $\beta:[r]\rightarrow[2n]$ will be called qubit-injective if $\qubit{n} \circ\beta:[r]\rightarrow[n]$ is injective.  
\end{definition}

As in the classical case, there is a set of off-diagonal positions that are associated to a collection of column indices:

\begin{definition}
To qubit-injective $\beta:[r]\rightarrow[2n]$ associate the sets
\begin{align}
&\tm{\beta}=\left\{(i,j):i\in Im(\beta),j<i,\qubit{n}(j) \notin \left\{\qubit{n}(\beta(1)),\dots,\qubit{n}(\beta(\beta^{-1}(i)-1)) \right\}\right\}\\
&\tmr{\beta}=\left\{(i,j):(2n+1-j,2n+1-i)\in\tm{\beta}\right\}\\
&\ttcr{\beta}=\tm{\beta}\cup\tmr{\beta}
\end{align}
\end{definition}

$\tm{\beta}$ is the set of index pairs $(i,j)$ such that $\s{2n,i,j}$ can potentially be used during the elimination algorithm described below when the column indices of the pivots are given by $\beta$. $\tmr{\beta}$ is the elementwise reversal of $\tm{\beta}$; since $\s{2n,i,j}=\s{2n,2n+1-j,2n+1-i}$, it is natural to consider also this set. $\ttcr{\beta}$ is the transitive, closed under reversal set that defines the associated group $\B{2n}{\ttcr{\beta}}$. Relevant properties of $\ttcr{\beta}$ are shown below:

\begin{lemma}
If $\beta:[r]\rightarrow[2n]$ is qubit-injective, then $\ttcr{\beta}$ is transitive and closed under reversal. 

If $\beta'$ is qubit-injective and extends $\beta$, then $\ttcr{\beta}\subset\ttcr{\beta'}$. 
\end{lemma}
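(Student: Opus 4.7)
The plan is to dispatch closure under reversal and the extension property directly from the definitions, and then devote the main effort to transitivity, where a short case analysis together with qubit-injectivity does the work.

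For closure under reversal, note that the map $\sigma:(i,j)\mapsto(2n+1-j,2n+1-i)$ is an involution on $\orderedpairs{2n}$, and by definition $\tmr{\beta}=\sigma(\tm{\beta})$. Hence $\ttcr{\beta}=\tm{\beta}\cup\sigma(\tm{\beta})$ is $\sigma$-invariant, which is exactly the closure-under-reversal condition. For the extension property, if $\beta'$ extends $\beta$ and $i\in Im(\beta)$, then $\beta'^{-1}(i)=\beta^{-1}(i)\in[r]$ and $\beta'(s)=\beta(s)$ for all $s<\beta^{-1}(i)$; hence the defining condition of $(i,j)\in\tm{\beta}$ matches that of $(i,j)\in\tm{\beta'}$ symbol for symbol, giving $\tm{\beta}\subset\tm{\beta'}$, and then $\tmr{\beta}\subset\tmr{\beta'}$ by applying $\sigma$.

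For transitivity, suppose $(i,j),(j,k)\in\ttcr{\beta}$ and split into four cases according to the $\tm{\beta}$/$\tmr{\beta}$ membership of each pair. In Case 1 (both in $\tm{\beta}$), setting $l=\beta^{-1}(j)$, qubit-injectivity and the first hypothesis force $l\geq\beta^{-1}(i)$, so the forbidden-qubit set for $(i,k)$ is contained in that for $(j,k)$, yielding $(i,k)\in\tm{\beta}$. Case 4 (both in $\tmr{\beta}$) reduces to Case 1 by applying $\sigma$. Case 3 ($(i,j)\in\tmr{\beta}$ and $(j,k)\in\tm{\beta}$) would require both $j$ and $2n+1-j$ to lie in $Im(\beta)$; but $\qubit{n}(j)=\qubit{n}(2n+1-j)$, so qubit-injectivity forces $j=2n+1-j$, which is impossible. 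Thus Case 3 is vacuous. Case 2 ($(i,j)\in\tm{\beta}$ and $(j,k)\in\tmr{\beta}$) is the delicate one: here both $i$ and $2n+1-k$ lie in $Im(\beta)$, and I compare their $\beta$-preimages $l=\beta^{-1}(i)$ and $m=\beta^{-1}(2n+1-k)$. Qubit-injectivity gives a trichotomy: if $m\geq l$ (which subsumes the degenerate $m=l$, forcing $k=2n+1-i$), the forbidden-qubit condition for $(i,k)\in\tm{\beta}$ is verified; if $m<l$, the symmetric argument shows $(2n+1-k,2n+1-i)\in\tm{\beta}$, so $(i,k)\in\tmr{\beta}$.

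The main obstacle is Case 2: unlike the classical lower triangular situation, the conclusion can land in either half of $\ttcr{\beta}$ depending on which of $l,m$ is smaller, so one must branch on that comparison and verify each branch separately. The recurring tool throughout is qubit-injectivity of $\beta$: it collapses the possible ``collisions'' of $\qubit{n}(j)$ with earlier $\qubit{n}(\beta(s))$, kills Case 3 entirely, and makes the index arithmetic in Cases 1 and 2 go through. Once Case 2 is in hand, the rest is bookkeeping, and the extension property is essentially a free corollary of the fact that the defining conditions only look at $\beta$ through indices in $[r]$.
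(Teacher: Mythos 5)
Your proof is correct and follows essentially the same route as the paper's: the same four-way case split on $\tm{\beta}$/$\tmr{\beta}$ membership, the same use of qubit-injectivity to kill the case where both $j$ and $2n+1-j$ would lie in $Im(\beta)$, and the same subcase comparison of $\beta^{-1}(i)$ with $\beta^{-1}(2n+1-k)$ in the mixed case (your cases are merely numbered differently). The reversal-closure and extension claims are handled just as in the paper, directly from the definitions.
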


\begin{proof}
$\ttcr{\beta}$ is closed under reversal by construction. 

Now, take any $(i,j),(j,k)\in\ttcr{\beta}$ and consider cases:

\emph{Case 1: $(i,j),(j,k)\in\tm{\beta}$.} Then, $i,j\in Im(\beta)$ with $\beta^{-1}(j)>\beta^{-1}(i)$. Moreover, $i>j>k$ and
\begin{equation}
\qubit{n}(k)\notin \{\qubit{n}(\beta(1)), \dots, \qubit{n}(\beta(\beta^{-1}(j)-1))\} \supset \{\qubit{n}(\beta(1)),\dots,\qubit{n}(\beta(\beta^{-1}(i)-1))\}
\end{equation}
Then, $(i,k) \in \tm{\beta}$. 

\emph{Case 2: $(i,j),(j,k)\in\tmr{\beta}$.} Then, $(2n+1-k,2n+1-j),(2n+1-j,2n+1-i)\in\tm{\beta}$, so $(2n+1-k,2n+1-i)\in\tm{\beta}$, so $(i,k)\in\tmr{\beta}$. 

\emph{Case 3: $(i,j)\in\tm{\beta},(j,k)\in\tmr{\beta}$.} Then $i>j>k$ and $i,2n+1-k \in Im(\beta)$. Consider subcases: 
\begin{enumerate}
\item If $\beta^{-1}(2n+1-k) \geq \beta^{-1}(i)$, then $(i,k) \in \tm{\beta}$.
\item If $\beta^{-1}(2n+1-k) < \beta^{-1}(i)$, then $(2n+1-k,2n+1-i) \in \tm{\beta}$, so $(i,k)\in \tmr{\beta}$. 
\end{enumerate}

\emph{Case 4: $(i,j)\in\tmr{\beta},(j,k)\in\tm{\beta}$.} Then, $j,2n+1-j \in Im(\beta)$. This is a contradiction, because $\beta$ is qubit injective. This case does not occur. 

Combining the cases implies that $\ttcr{\beta}$ is transitive. 

Finally, if $\beta'$ extends $\beta$, then $\tm{\beta'}\supset\tm{\beta}$, so $\ttcr{\beta'}\supset\ttcr{\beta}$. 
\end{proof}

Now, consider Gaussian elimination using symplectic moves on the right. The typical step of this algorithm is:
\begin{lemma}\label{lemma:RightSymplecticEliminationStep}
Let $\alpha:[r]\rightarrow[m]$ be increasing and $\beta:[r]\rightarrow[2n]$ be qubit-injective. Let $A \in \F^{m \times 2n}$ be a stabilizer parity check matrix that is $(\alpha,\beta)$ partially reduced, but not fully reduced. Let $(\alpha',\beta')$ extend $(\alpha,\beta)$ so that $(\alpha'(r+1),\beta'(r+1))$ is the position of the next pivot in the left and down order. Let $u \in \F^m$, $v\in\F^{2n}$ be such that 
\begin{align}
A \e{2n,\beta'(r+1)} &= \e{m,\alpha'(r+1)} + u \\
\e{m,\alpha'(r+1)}^{\top} A &= \e{2n,\beta'(r+1)}^{\top} + v^{\top}
\end{align}
and let
\begin{equation}
A' = \g{m,u,\alpha'(r+1)} A \s{2n,\beta'(r+1),v^{\top}}
\end{equation}
Then, 
\begin{enumerate}
\item $\g{m,u,\alpha'(r+1)} \in \L{m}{\tl{\alpha'}}$.
\item $\beta'$ is qubit injective.
\item $\s{2n,\beta'(r+1),v^{\top}} \in \B{2n}{\ttcr{\beta'}}$. 
\item $A'$ is a stabilizer parity check matrix and is $(\alpha',\beta')$ partially reduced. 
\item $(\alpha',\beta',u,v,A')$ can be computed from $(A,\alpha,\beta)$ in time $O(mn)$. 
\end{enumerate}
\end{lemma}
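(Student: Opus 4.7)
The plan is to prove the five claims in order, with Claims 1, 2, and 5 following relatively directly from the structure of partial reduction, the stabilizer constraint, and flop counting, and with Claims 3 and 4 requiring a more delicate analysis of the symplectic Gaussian moves. The preliminary step is to identify the supports of $u$ and $v$. Since $A$ is $(\alpha,\beta)$-partially reduced and $\beta'(r+1) \notin Im(\beta)$, all entries of column $\beta'(r+1)$ in rows $1,\dots,\alpha(r)$ vanish; the left-and-down pivot order additionally forces every row $i$ with $\alpha(r) < i < \alpha'(r+1)$ of $A$ to be identically zero, else an earlier pivot would have been selected. Thus $u$ is supported strictly above $\alpha'(r+1)$, and combined with $\alpha'(r+1) \in Im(\alpha')$ this yields Claim 1 from the definition of $\L{m}{\tl{\alpha'}}$. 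Similarly, $v$ is supported strictly to the left of $\beta'(r+1)$, since entries of row $\alpha'(r+1)$ to the right of the pivot would have produced an earlier pivot.

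For Claim 2, suppose $\qubit{n}(\beta'(r+1)) = \qubit{n}(\beta(k))$ for some $k \leq r$. Injectivity of $\beta'$ rules out $\beta'(r+1) = \beta(k)$, so $\beta'(r+1) = 2n+1-\beta(k)$. Partial reduction gives that row $\alpha(k)$ of $A$ equals $\e{2n,\beta(k)}^{\top}$; pairing it with row $\alpha'(r+1)$ under $A\revdiag{2n}A^{\top}=0$ yields $A_{\alpha'(r+1),2n+1-\beta(k)} = A_{\alpha'(r+1),\beta'(r+1)} = 0$, contradicting the pivot value of $1$.

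Claim 3 is the main technical step and the principal obstacle. Write $\s{2n,\beta'(r+1),v^{\top}} = \s{2n,v,\beta'(r+1)}^{\top}$ and expand via equation \eqref{eq:SymplecticGaussianWholeColumnMoves}; since $\s{2n,j,i}^{\top}=\s{2n,i,j}$, the transpose becomes a product of factors $\s{2n,\beta'(r+1),j}^{v_j}$ for $j \neq \beta'(r+1)$ together with a correction $\s{2n,\beta'(r+1),2n+1-\beta'(r+1)}^{\sum_j v_j v_{2n+1-j}}$. The core task is to verify that every non-trivial factor lies in $\B{2n}{\ttcr{\beta'}}$. For the main factors, this reduces to showing $v_j\neq 0 \implies (\beta'(r+1),j)\in \tm{\beta'}$: the condition $j < \beta'(r+1)$ is already known, and if $\qubit{n}(j)\in \{\qubit{n}(\beta(k)):k\leq r\}$ then either $j=\beta(k)$ (forcing $v_j = A_{\alpha'(r+1),\beta(k)} = 0$ by partial reduction of column $\beta(k)$) or $j=2n+1-\beta(k)$ (forcing $v_j=0$ by the same stabilizer argument as in Claim 2). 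The correction factor is handled by cases: if $\beta'(r+1) \leq n$ then $v_j \neq 0 \implies j < \beta'(r+1) \leq n \implies v_{2n+1-j} = 0$, so the exponent vanishes; if $\beta'(r+1) > n$ then qubit-injectivity of $\beta'$ places $(\beta'(r+1),2n+1-\beta'(r+1))$ in $\tm{\beta'}$.

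Claim 4 then follows in two parts. The stabilizer property is preserved because $\s{2n,\beta'(r+1),v^{\top}}$ is symplectic (being a product of images under $\cliffordhomomorphism{n}$ of Clifford unitaries) and the left Gaussian move is invertible. The $(\alpha',\beta')$-partial reduction of $A'$ is verified by direct calculation using Lemma \ref{lemma:PropertiesOfSymplecticMoves}: column $\beta'(r+1)$ of $A'$ equals $\e{m,\alpha'(r+1)}$, row $\alpha'(r+1)$ of $A'$ equals $\e{2n,\beta'(r+1)}^{\top}$ (with the key cancellation being $v^{\top}\revdiag{2n}v = 0$ over $\F$), and the previously reduced rows $\alpha(k)$ and columns $\beta(k)$ remain unchanged because $v_{\beta(k)} = v_{2n+1-\beta(k)} = 0$. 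Claim 5 is routine: extracting $u,v$ takes $O(m+n)$ time and the row and column updates each cost $O(mn)$ via the explicit forms of $\g{m,u,\alpha'(r+1)}$ and $\s{2n,\beta'(r+1),v^{\top}}$.
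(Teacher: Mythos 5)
Your proposal is correct and follows essentially the same route as the paper: determine the supports of $u$ and $v$ from partial reduction, the left-and-down pivot order, and the constraint $A\revdiag{2n}A^{\top}=0$; deduce the group memberships; verify the stabilizer property and $(\alpha',\beta')$-reduction of $A'$ by direct computation (including the cancellation $v^{\top}\revdiag{2n}v=0$); and count operations using sparsity. The only cosmetic difference is that for Part 3 you re-expand $\s{2n,\beta'(r+1),v^{\top}}$ into generator factors, whereas the paper characterizes the support of $v$ and invokes the earlier membership criterion for $\B{2n}{\ttcr{\beta'}}$ — these are equivalent.
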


\begin{proof}
\emph{Part 1:} 
Since $A$ is $(\alpha,\beta)$ reduced, $u$ is supported on a subset of $\{j:j>\alpha'(r+1)\}$.  

\emph{Parts 2 and 3:} 
Since $A$ is $(\alpha,\beta)$ reduced, and since $A \revdiag{2n}A^{\top}=0$, the following holds for the columns of $A$:
\begin{equation}
\forall i \in [r]: \;\; A \e{2n,\beta(i)}= \e{m,\alpha(i)} \;\; \wedge \;\; A\e{2n,2n+1-\beta(i)}=0
\end{equation}
Therefore, $\qubit{n}(\beta'(r+1)) \notin \{\qubit{n}(\beta(1)),\dots\qubit{n}(\beta(r)\}$, and $v$ is supported on a subset of $\{i:i<\beta'(r+1),\qubit{n}(i)\notin\{\qubit{n}(\beta(1)), \dots, \qubit{n}(\beta(r))\}\}$. 

\emph{Part 4:} First,  
\begin{multline}
A' \revdiag{2n}(A')^{\top} =\g{m,u,\alpha'(r+1)}A\s{2n,\beta'(r+1),v^{\top}}\revdiag{2n}\s{2n,\beta'(r+1),v^{\top}}^{\top}A^{\top}\g{m,u,\alpha'(r+1)}^{\top}\\
=\g{m,u,\alpha'(r+1)}A\revdiag{2n}A^{\top}\g{m,u,\alpha'(r+1)}^{\top}=0
\end{multline}
Next, check the rows: for $i\leq \alpha'(r+1)$,
\begin{multline}
\e{m,i}^{\top} A' = \e{m,i}^{\top}\g{m,u,\alpha'(r+1)}A\s{2n,\beta'(r+1),v^{\top}}=\e{m,i}^{\top}A\s{2n,\beta'(r+1),v^{\top}}\\
=\begin{cases}
0 & \text{if } i \notin Im(\alpha') \\
\e{m,\alpha(k)}^{\top}A\s{2n,\beta'(r+1),v^{\top}}= \e{2n,\beta(k)}^{\top} \s{2n,\beta'(r+1),v^{\top}} = \e{2n,\beta(k)}^{\top}& \text{if } i=\alpha(k), k \in [r]\\
(\e{2n,\beta'(r+1)}^{\top}+v^{\top})\s{2n,\beta'(r+1),v^{\top}} = \e{2n,\beta'(r+1)}^{\top}& \text{if } i = \alpha'(r+1)
\end{cases}
\end{multline}
Next, check the columns: for $i \in [r+1]$
\begin{multline}
A' \e{2n,\beta'(i)}=\g{m,u,\alpha'(r+1)}A\s{2n,\beta'(r+1),v^{\top}}\e{2n,\beta'(i)}=\g{m,u,\alpha'(r+1)}A\e{2n,\beta'(i)}\\
=\begin{cases}
\g{m,u,\alpha'(r+1)}(\e{m,\alpha'(r+1)}+u) = \e{m,\alpha'(r+1)} & \text{if } i=r+1\\
\g{m,u,\alpha'(r+1)}\e{m,\alpha(i)}=\e{m,\alpha(i)} & \text{if } i \leq r
\end{cases}
\end{multline}

\emph{Part 5:} Computation of $(\alpha',\beta',u,v)$ can clearly be done in time $O(mn)$. For the computation of $A'$, use the sparsity of $\g{m,u,\alpha'(r+1)}, \s{2n,\beta'(r+1),v^{\top}}$. For any $x \in \F^m$, $\g{m,u,\alpha'(r+1)}x$ can be computed in time $O(m)$; since $A$ has $2n$ columns, this gives time $O(mn)$ to compute $\g{m,u,\alpha'(r+1)}A$. Similarly, for any $y \in \F^{2n}$, $y^{\top}\s{2n,\beta'(r+1),v^{\top}}$ can be computed in time $O(n)$; for $m$ rows, this gives time $O(mn)$ to compute $\g{m,u,\alpha'(r+1)}A\s{2n,\beta'(r+1),v^{\top}}$. 
\end{proof}

Repeated application of the elimination step in Lemma \ref{lemma:RightSymplecticEliminationStep} produces a matrix decomposition for stabilizer parity check matrices that is in fact a canonical form. To prove uniqueness of the decomposition, a modification of Lemma \ref{lemma:UniquenessOfRightFactor} is needed:
\begin{lemma}\label{lemma:UniquenessOfSymplecticRightFactor}
Let $\beta:[r]\rightarrow[2n]$ be qubit-injective and let $R \in \B{2n}{\ttcr{\beta}}$. If there exist lower triangular $L\in\L{m}{\orderedpairs{m}}$ and increasing $\alpha:[r]\rightarrow [m]$ such that $L\Pi(\alpha,\beta)=\Pi(\alpha,\beta)R$, then $R=I$.  
\end{lemma}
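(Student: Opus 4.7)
The plan is to mirror Lemma \ref{lemma:UniquenessOfRightFactor} as closely as possible, and then add one new ingredient based on the symplectic constraint. Since $R \in \B{2n}{\ttcr{\beta}} \subset \L{2n}{\orderedpairs{2n}}$, every off-diagonal entry of $R$ at a position outside $\ttcr{\beta}$ is automatically zero and the diagonal of $R$ is the identity. Hence it suffices to show $\e{2n,i}^\top R \e{2n,j} = 0$ for every $(i,j) \in \ttcr{\beta} = \tm{\beta} \cup \tmr{\beta}$.

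For $(i,j)\in\tm{\beta}$, I would recycle the calculation from Lemma \ref{lemma:UniquenessOfRightFactor}. Writing $i=\beta(k)$, the hypothesis gives
\begin{equation}
\e{2n,i}^\top R \e{2n,j} = \e{m,\alpha(k)}^\top \Pi(\alpha,\beta) R \e{2n,j} = \e{m,\alpha(k)}^\top L \Pi(\alpha,\beta) \e{2n,j}.
\end{equation}
If $j\notin\mathrm{Im}(\beta)$, then $\Pi(\alpha,\beta)\e{2n,j}=0$ and the entry vanishes. Otherwise $j=\beta(l)$; the defining condition of $\tm{\beta}$ together with qubit-injectivity of $\beta$ forces $l\geq k$, and $j<i$ rules out $l=k$, so $l>k$. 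Then $\alpha(l)>\alpha(k)$ by monotonicity of $\alpha$, and $\e{m,\alpha(k)}^\top L \e{m,\alpha(l)}=0$ by lower triangularity of $L$.

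The genuinely new case, and what I expect to be the main obstacle, is $(i,j)\in\tmr{\beta}$: the computation above only touches entries of $R$ whose row index lies in $\mathrm{Im}(\beta)$, and for $(i,j)\in\tmr{\beta}\setminus\tm{\beta}$ this is typically not the case. Here I plan to leverage the symplectic constraint $R^\top\revdiag{2n}R=\revdiag{2n}$, which rearranges to $R=\revdiag{2n}R^{-\top}\revdiag{2n}$ and yields the bridge identity
\begin{equation}
\e{2n,i}^\top R \e{2n,j} = \e{2n,2n+1-j}^\top R^{-1} \e{2n,2n+1-i}.
\end{equation}
The hypothesis can be rewritten as $L^{-1}\Pi(\alpha,\beta)=\Pi(\alpha,\beta)R^{-1}$, with $L^{-1}$ still lower triangular and $R^{-1}\in\B{2n}{\ttcr{\beta}}$ since this set is a group. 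Because $(i,j)\in\tmr{\beta}$ is by definition equivalent to $(2n+1-j,2n+1-i)\in\tm{\beta}$, the previous paragraph applied to the pair $(L^{-1},R^{-1})$ yields $\e{2n,2n+1-j}^\top R^{-1} \e{2n,2n+1-i}=0$, closing the argument.

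The key conceptual point, and the reason the argument works at all, is that the classical computation of Lemma \ref{lemma:UniquenessOfRightFactor} depends on $L$ only through lower triangularity and is therefore symmetric under $(L,R)\mapsto(L^{-1},R^{-1})$, while the symplectic relation provides exactly the bridge between entries of $R$ at indices in $\tmr{\beta}$ and entries of $R^{-1}$ at indices in $\tm{\beta}$; no further case analysis beyond the classical one should then be required.
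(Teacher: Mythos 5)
Your proposal is correct and follows essentially the same route as the paper: the $\tm{\beta}$ case is handled by the same computation as in Lemma \ref{lemma:UniquenessOfRightFactor}, and the $\tmr{\beta}$ case uses the same symplectic bridge identity $\e{2n,i}^{\top}R\e{2n,j}=\e{2n,2n+1-j}^{\top}R^{-1}\e{2n,2n+1-i}$ together with $L^{-1}\Pi(\alpha,\beta)=\Pi(\alpha,\beta)R^{-1}$. The justification that $l>k$ via qubit-injectivity is also exactly the point the paper relies on.
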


\begin{proof}
Take any $(i,j) \in \ttcr{\beta}$. The goal is to show $\e{2n,i}^{\top} R\e{2n,j}=0$. Consider cases. 

\emph{Case 1: $(i,j)\in\tm{\beta}$.} Then, $i=\beta(k)$ for some $k\in[r]$ and
\begin{equation}
\e{2n,i}^{\top} R \e{2n,j}=\e{m,\alpha(k)}^{\top}\Pi(\alpha,\beta)R\e{2n,j}=\e{m,\alpha(k)}^{\top} L \Pi(\alpha,\beta)\e{2n,j}
\end{equation}
Consider subcases:
\begin{enumerate}
\item If $j \notin Im(\beta)$, then $\Pi(\alpha,\beta)\e{2n,j}=0$.
\item If $j=\beta(l)$ for some $l\in[r]$, then $l>k$ and $\e{m,\alpha(k)}L\e{m,\alpha(l)}=0$ because $\alpha$ is increasing and $L$ is lower triangular.
\end{enumerate} 

\emph{Case 2: $(i,j) \in \tmr{\beta}$.} Then, $(2n+1-j,2n+1-i) \in \tm{\beta}$ and $2n+1-j=\beta(k)$ for some $k \in [r]$. Apply the identity $R^{\top} = \revdiag{2n} R^{-1} \revdiag{2n}$, which holds for any element of $\symplecticgroup{2n}$:
\begin{multline}
\e{2n,i}^{\top} R \e{2n,j} = \e{2n,j}^{\top} R^{\top} \e{2n,i} = \e{2n,j}^{\top} \revdiag{2n} R^{-1} \revdiag{2n} \e{2n,i} = \e{2n,2n+1-j}^{\top} R^{-1} \e{2n,2n+1-i} \\ = \e{m,\alpha(k)}\Pi(\alpha,\beta) R^{-1} \e{2n,2n+1-i} = \e{m,\alpha(k)} L^{-1} \Pi(\alpha,\beta) \e{2n,2n+1-i}
\end{multline} 
Now, proceed as in case 1 to conclude that this is zero. 
\end{proof}

Now, all the ingredients for a canonical form for stabilizer parity check matrices are in place. To state the theorem, it is convenient to introduce the following terminology: 
\begin{definition}
Say that $(r,\alpha,\beta,L,R)$ is $(m,2n)$ stabilizer allowed if $r \leq \min(m,n)$, $\alpha:[r]\rightarrow[m]$ is increasing, $\beta:[r]\rightarrow[2n]$ is qubit-injective, $L\in \L{m}{\tl{\alpha}}$, $R \in \B{2n}{\ttcr{\beta}}$. 
\end{definition}

\begin{theorem}\label{thm:canonical_form_for_stabilizer_pcm}
\begin{enumerate}
\item For every stabilizer parity check matrix $A \in \F^{m \times 2n}$, Gaussian elimination with symplectic moves on the right produces $(m,2n)$ stabilizer allowed $(r,\alpha,\beta,L,R)$ such that 
\begin{equation}
A=L\Pi(\alpha,\beta)R
\end{equation}
Moreover, the algorithm runs in time $O(mnr)$ (assuming $L,R$ are output as lists of off-diagonal non-zero positions; otherwise, outputting the full $L,R$ would take time $O(\max(m^2,n^2))$).  
\item If $(r,\alpha,\beta, L,R)$ and $(r',\alpha',\beta',L',R')$ are $(m,2n)$ stabilizer allowed and if 
\begin{equation}
L\Pi(\alpha,\beta)R=L'\Pi(\alpha',\beta')R'
\end{equation}
then
\begin{equation}
(r,\alpha,\beta,L,R)=(r',\alpha',\beta',L',R')
\end{equation}
\end{enumerate}
\end{theorem}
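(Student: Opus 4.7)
For part 1 (existence), I will iterate the elimination step from Lemma \ref{lemma:RightSymplecticEliminationStep}. Starting from $A_0 = A$, with empty $\alpha_0, \beta_0$, and $L_0 = \id{m}$, $R_0 = \id{2n}$, at each step I apply Lemma \ref{lemma:RightSymplecticEliminationStep} to extend $\alpha_s, \beta_s$ by the next pivot position in the left-and-down order, producing a new $A_{s+1} = \g{m,u_s,\alpha_{s+1}(s+1)} A_s \s{2n,\beta_{s+1}(s+1),v_s^{\top}}$, and updating $L_{s+1} = L_s \g{m,u_s,\alpha_{s+1}(s+1)}^{-1}$ and $R_{s+1} = \s{2n,\beta_{s+1}(s+1),v_s^{\top}}^{-1} R_s$. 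The algorithm halts when $A_s = \Pi(\alpha_s,\beta_s)$, which happens after exactly $r = \text{rank}(A)$ steps.

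The crucial bookkeeping is that the running $L_s$ stays in $\L{m}{\tl{\alpha_s}}$ and the running $R_s$ stays in $\B{2n}{\ttcr{\beta_s}}$. The new left factor $\g{m,u_s,\alpha_{s+1}(s+1)}$ lies in $\L{m}{\tl{\alpha_{s+1}}}$ by part 1 of Lemma \ref{lemma:RightSymplecticEliminationStep}, and the previously accumulated $L_s \in \L{m}{\tl{\alpha_s}} \subset \L{m}{\tl{\alpha_{s+1}}}$ by the monotonicity of $\tl{\cdot}$ under extensions; closure of the group $\L{m}{\tl{\alpha_{s+1}}}$ under multiplication and inversion then keeps $L_{s+1}$ in the right group. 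The analogous argument for $R_s$ uses part 3 of Lemma \ref{lemma:RightSymplecticEliminationStep} together with the inclusion $\ttcr{\beta_s} \subset \ttcr{\beta_{s+1}}$ established in the preceding lemma. Each of the $r$ iterations costs $O(mn)$ by part 5 of Lemma \ref{lemma:RightSymplecticEliminationStep}, giving total cost $O(mnr)$.

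For part 2 (uniqueness), suppose $L\Pi(\alpha,\beta)R = L'\Pi(\alpha',\beta')R'$. Rearranging gives $(L')^{-1}L \, \Pi(\alpha,\beta) = \Pi(\alpha',\beta') \, R'R^{-1}$. Since $L,L' \in \L{m}{\orderedpairs{m}}$ and $R,R' \in \B{2n}{\ttcr{\beta}}\cap\B{2n}{\ttcr{\beta'}} \subset \L{2n}{\orderedpairs{2n}}$ (the symplectic groups $\B{2n}{T}$ are subgroups of the lower triangular group by definition), both of the surrounding factors are lower triangular with unit diagonal. Lemma \ref{lemma:UniquenessOfIncompletePermutation} then gives $\Pi(\alpha,\beta) = \Pi(\alpha',\beta')$, which forces $r = r'$, $\alpha = \alpha'$, $\beta = \beta'$. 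Now $R'R^{-1}$ lies in $\B{2n}{\ttcr{\beta}}$ and satisfies $(L')^{-1}L \, \Pi(\alpha,\beta) = \Pi(\alpha,\beta) \, R'R^{-1}$, so Lemma \ref{lemma:UniquenessOfSymplecticRightFactor} yields $R'R^{-1} = \id{2n}$, i.e.\ $R = R'$. Finally, $(L')^{-1}L \in \L{m}{\tl{\alpha}}$ satisfies $(L')^{-1}L \, \Pi(\alpha,\beta) = \Pi(\alpha,\beta)$, so Lemma \ref{lemma:UniquenessOfLeftFactor} gives $L = L'$.

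I do not expect a genuine obstacle here: the existence part is a straightforward induction over the elimination steps once the inclusion $\ttcr{\beta} \subset \ttcr{\beta'}$ and the group structure of $\B{2n}{\ttcr{\cdot}}$ are in hand, and the uniqueness part is a mechanical chaining of the three uniqueness lemmas. The one spot that deserves care is verifying that $R'R^{-1}$ is genuinely available as input to Lemma \ref{lemma:UniquenessOfSymplecticRightFactor} — that requires knowing both $R,R' \in \B{2n}{\ttcr{\beta}}$, which uses the already-established equality $\beta = \beta'$ and the fact that $\B{2n}{\ttcr{\beta}}$ is closed under products and inverses (as an intersection of the two groups $\L{2n}{\ttcr{\beta}}$ and $\symplecticgroup{2n}$).
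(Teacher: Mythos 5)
Your proposal is correct and follows essentially the same route as the paper: part 1 by iterating Lemma \ref{lemma:RightSymplecticEliminationStep} (with the group closure and the monotonicity $\ttcr{\beta}\subset\ttcr{\beta'}$ keeping the accumulated factors in the right groups), and part 2 by chaining Lemmas \ref{lemma:UniquenessOfIncompletePermutation}, \ref{lemma:UniquenessOfSymplecticRightFactor} and \ref{lemma:UniquenessOfLeftFactor} applied to $(L')^{-1}L\,\Pi(\alpha,\beta)=\Pi(\alpha',\beta')\,R'R^{-1}$. The paper's own proof is just a terser version of exactly this argument, so there is nothing to add beyond noting that your parenthetical claim $R,R'\in\B{2n}{\ttcr{\beta}}\cap\B{2n}{\ttcr{\beta'}}$ is only justified after $\beta=\beta'$ is established (as you yourself point out at the end); before that, all you need and all you actually use is that both are lower triangular.
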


\begin{proof}
\emph{Part 1} follows by repeated application of Lemma \ref{lemma:RightSymplecticEliminationStep}. The elimination step in Lemma \ref{lemma:RightSymplecticEliminationStep} needs to be performed $r$ times, so the total time is $O(mnr)$. It remains to consider the time to compute $L$ and $R$. It was shown in the proof of theorem \ref{thm:GaussianElimination} that the non-zero off-diagonal entries of $L \in \L{n}{\tl{\alpha}}$ can be computed in time $O(mr^2)$. Next, consider the computation of 
\begin{equation}
R= \s{2n,\beta(r),v_r^\top}\s{2n,\beta(r-1),v_{r-1}^\top}\dots\s{2n,\beta(1),v_1^\top}\in\B{2n}{\ttcr{\beta}}
\end{equation}
where $v_1^\top, \dots, v_r^\top$ is the sequence of vectors that arises from the $r$ applications of Lemma \ref{lemma:RightSymplecticEliminationStep}. Consider the sequence $R_1=\s{2n,\beta(1),v_1^\top}$ and $R_i = \s{2n,\beta(i),v_i^\top} R_{i-1}$ for $i=2,\dots,r$. Let $\tilde{R}_i$ be such that $R_i = \id{2n}+\tilde{R}_i$ and let $\tilde{S}_i$ be such that $\s{2n,\beta(i),v_i^\top}=\id{2n}+\tilde{S}_i$. Then, 
\begin{equation}
\tilde{R}_i = \tilde{R}_{i-1} + \tilde{S}_i + \tilde{S}_i \tilde{R}_{i-1}
\end{equation}
Note that $\tilde{R}_{i-1}, \tilde{R}_i$ are supported on some subsets of $\ttcr{\beta}$, which in turn is a subset of $\orderedpairs{2n} \cap ((Im(\beta) \times [2n]) \cup ([2n] \times (2n+1-Im(\beta))))$. Note also that $\tilde{S}_i$ has non-zero entries only in row $\beta(i)$ and column $2n+1-\beta(i)$. Therefore,

\textbf{Claim:} $\tilde{R}_i$ can be computed from $(\tilde{R}_{i-1},\tilde{S}_i)$ in time $O(nr)$. 

\textbf{Proof of claim:} Partition $\ttcr{\beta}$ into three sets: 
\begin{align}
T_1 &= \curlybr{(k,l)\in\ttcr{\beta}:k\neq\beta(i)}\\
T_2 &= \curlybr{(k,l)\in\ttcr{\beta}:k=\beta(i), l \notin (2n+1-Im(\beta))}\\
T_3 &= \curlybr{(k,l)\in\ttcr{\beta}:k=\beta(i), l \in (2n+1-Im(\beta)}
\end{align} 
and note that $|T_1|=O(nr)$, $|T_2| = O(n)$, $|T_3|=O(r)$. For each $(k,l) \in T_1$, 
\begin{multline}
\e{2n,k}^\top \tilde{R}_i \e{2n,l} \\
= \e{2n,k}^\top \tilde{R}_{i-1} \e{2n,l} + \e{2n,k}^\top \tilde{S}_i \e{2n,l} + \e{2n,k}^\top \tilde{S}_i \e{2n,2n+1-\beta(i)}\e{2n,2n+1-\beta(i)}^\top \tilde{R}_{i-1} \e{2n,l}
\end{multline}
can be computed in time $O(1)$. For each $(k,l) \in T_2$, 
\begin{equation}
\e{2n,k}^\top \tilde{R}_i \e{2n,l} = \e{2n,k}^\top \tilde{R}_{i-1} \e{2n,l} + \e{2n,k}^\top \tilde{S}_i \e{2n,l} + \sum_{j \in Im(\beta)}\e{2n,k}^\top \tilde{S}_i  \e{2n,j}\e{2n,j}^\top \tilde{R}_{i-1} \e{2n,l}
\end{equation}
can be computed in time $O(r)$. For each $(k,l) \in T_3$, 
\begin{equation}
\e{2n,k}^\top \tilde{R}_i \e{2n,l} = \e{2n,k}^\top \tilde{R}_{i-1} \e{2n,l} + \e{2n,k}^\top \tilde{S}_i \e{2n,l} + \sum_{j \in [2n]}\e{2n,k}^\top \tilde{S}_i  \e{2n,j}\e{2n,j}^\top \tilde{R}_{i-1} \e{2n,l}
\end{equation}
can be computed in time $O(n)$. This completes the proof of the claim. 

Using the claim, deduce that the non-zero off-diagonal positions of $R$ can be computed from $(\beta(1),\dots,\beta(r),v_1^\top, \dots, v_r^\top)$ in time $O(nr^2)$. This completes the proof of Part 1. 

\emph{Part 2:} First, Lemma \ref{lemma:UniquenessOfIncompletePermutation} implies $\Pi(\alpha,\beta)=\Pi(\alpha',\beta')$. Next, Lemma \ref{lemma:UniquenessOfSymplecticRightFactor} implies $R=R'$. Finally, Lemma \ref{lemma:UniquenessOfLeftFactor} implies $L=L'$. 
\end{proof}

\emph{Example:} Consider the $[[5,1,3]]$ stabilizer code with generators $\x{5,1}\z{5,2}\z{5,3}\x{5,4}$, $\x{5,2}\z{5,3}\z{5,4}\x{5,5}$, $\x{5,1}\x{5,3}\z{5,4}\z{5,5}$ and $\z{5,1}\x{5,2}\x{5,4}\z{5,5}$. Consider also the product of the first two generators $\x{5,1}\z{5,2}\x{5,2}\x{5,4}\z{5,4}\x{5,5}$. This gives the stabilizer parity check matrix 
\begin{equation}
A=
\begin{pmatrix}
1&0&0&1&0&0&0&1&1&0 \\
0&1&0&0&1&0&1&1&0&0 \\
1&1&0&1&1&0&1&0&1&0 \\
1&0&1&0&0&1&1&0&0&0 \\
0&1&0&1&0&1&0&0&0&1
\end{pmatrix}
\end{equation}
where the third row is the sum of the first two. 

The first pivot in the left and down order is in the first row and ninth column, so $\alpha(1)=1$, $\beta(1)=9$. To clear the ninth column, take 
\begin{equation}
u=
\begin{pmatrix}
0\\0\\1\\0\\0
\end{pmatrix}
\; \text{ and } \; \g{5, u, 1} =
\begin{pmatrix}
1&0&0&0&0\\
0&1&0&0&0\\
1&0&1&0&0\\
0&0&0&1&0\\
0&0&0&0&1
\end{pmatrix}
\end{equation}
At this stage, $\tl{\alpha}=\{(2,1),(3,1),(4,1),(5,1)\}$ and $\g{5,u,1}$ has a non-zero off-diagonal position only at $(3,1)$. Left multiplication by $\g{5,u,1}$ gives
\begin{equation}
\g{5,u,1}A=
\begin{pmatrix}
1&0&0&1&0&0&0&1&1&0 \\
0&1&0&0&1&0&1&1&0&0 \\
0&1&0&0&1&0&1&1&0&0 \\
1&0&1&0&0&1&1&0&0&0 \\
0&1&0&1&0&1&0&0&0&1
\end{pmatrix}
\end{equation}
To clear the first row, take 
\begin{equation}
v^\top = \begin{pmatrix}1&0&0&1&0&0&0&1&0&0\end{pmatrix}
\end{equation}
and
\begin{equation}
\s{10,9,v^\top}=
\begin{pmatrix}
1&0&0&0&0&0&0&0&0&0\\
0&1&0&0&0&0&0&0&0&0\\
0&1&1&0&0&0&0&0&0&0\\
0&0&0&1&0&0&0&0&0&0\\
0&0&0&0&1&0&0&0&0&0\\
0&0&0&0&0&1&0&0&0&0\\
0&1&0&0&0&0&1&0&0&0\\
0&0&0&0&0&0&0&1&0&0\\
1&0&0&1&0&0&0&1&1&0\\
0&1&0&0&0&0&0&0&0&1
\end{pmatrix}
\end{equation}
At this stage, $\tm{\beta}=\{(9,j):j<9\}$, $\tmr{\beta}=\{(i,2):i>2\}$, $\ttcr{\beta}=\tm{\beta}\cup\tmr{\beta}$ and the non-zero off-diagonal positions of $\s{10,9,v^\top}$ are only in the second column and the ninth row. Right multiplication by $\s{10,9,v^\top}$ gives
\begin{equation}
\g{5,u,1}A\s{10,9,v^\top}=
\begin{pmatrix}
0&0&0&0&0&0&0&0&1&0 \\
0&0&0&0&1&0&1&1&0&0 \\
0&0&0&0&1&0&1&1&0&0 \\
1&0&1&0&0&1&1&0&0&0 \\
0&0&0&1&0&1&0&0&0&1
\end{pmatrix}
\end{equation}
At this point, the first row and ninth column contain only the first pivot. The second column, which corresponds to the same qubit as the ninth column, is entirely cleared. 

The second pivot in the left and down order is in the second row and eigth column. Extend $\alpha, \beta$ to $\alpha', \beta'$ with $\alpha'(2)=2$, $\beta'(2)=8$. To clear the eight column, take 
\begin{equation}
u'=
\begin{pmatrix}
0\\0\\1\\0\\0
\end{pmatrix}
\; \text{ and } \; \g{5, u', 2} =
\begin{pmatrix}
1&0&0&0&0\\
0&1&0&0&0\\
0&1&1&0&0\\
0&0&0&1&0\\
0&0&0&0&1
\end{pmatrix}
\end{equation}
At this stage, $\tl{\alpha'}=\{(2,1),(3,1),(4,1),(5,1),(3,2),(4,2),(5,2)\}$ and the non-zero off-diagonal positions of $\g{5,u',2}$ belong to this set. Left multiplication by $\g{5,u',2}$ gives
\begin{equation}
\g{5,u',2}\g{5,u,1}A\s{10,9,v^\top}=
\begin{pmatrix}
0&0&0&0&0&0&0&0&1&0 \\
0&0&0&0&1&0&1&1&0&0 \\
0&0&0&0&0&0&0&0&0&0 \\
1&0&1&0&0&1&1&0&0&0 \\
0&0&0&1&0&1&0&0&0&1
\end{pmatrix}
\end{equation}
To clear the second row, take
\begin{equation}
{v'}^\top = \begin{pmatrix}0&0&0&0&1&0&1&0&0&0\end{pmatrix}
\end{equation}
and
\begin{equation}
\s{10,8,{v'}^\top}=
\begin{pmatrix}
1&0&0&0&0&0&0&0&0&0\\
0&1&0&0&0&0&0&0&0&0\\
0&0&1&0&0&0&0&0&0&0\\
0&0&1&1&0&0&0&0&0&0\\
0&0&0&0&1&0&0&0&0&0\\
0&0&1&0&0&1&0&0&0&0\\
0&0&0&0&0&0&1&0&0&0\\
0&0&0&0&1&0&1&1&0&0\\
0&0&0&0&0&0&0&0&1&0\\
0&0&0&0&0&0&0&0&0&1
\end{pmatrix}
\end{equation}
At this stage, 
\begin{align}
\tm{\beta'}&=\{(9,j):j<9\}\cup\{(8,j):j<8,j \neq 2\} \\
\tmr{\beta'}&=\{(i,2):i>2\}\cup\{(i,3):i>3, i \neq 9\}
\end{align}
and $\ttcr{\beta'}=\tm{\beta'}\cup\tmr{\beta'}$. The non-zero off-diagonal positions of $\s{10,8,{v'}^\top}$ belong to this set. Right multiplication by $\s{10,8,{v'}^\top}$ gives
\begin{equation}
\g{5,u',2}\g{5,u,1}A\s{10,9,v^\top}\s{10,8,{v'}^\top}=
\begin{pmatrix}
0&0&0&0&0&0&0&0&1&0 \\
0&0&0&0&0&0&0&1&0&0 \\
0&0&0&0&0&0&0&0&0&0 \\
1&0&0&0&0&1&1&0&0&0 \\
0&0&0&1&0&1&0&0&0&1
\end{pmatrix}
\end{equation}
At this point, the second row and eight column contain only the second pivot. The third column, which corresponds to the same qubit as the eight, is completely clear. 

The third pivot in the left and down order is in the fourth row and seventh column. Extend $\alpha',\beta'$ to $\alpha'',\beta''$ where $\alpha''(3)=4$, $\beta''(3)=7$. The seventh column is already clear except for the pivot. To clear the fourth row, take
\begin{equation}
{v''}^\top = \begin{pmatrix}1&0&0&0&0&1&0&0&0&0\end{pmatrix}
\end{equation}
and
\begin{equation}
\s{10,7,{v''}^\top}=
\begin{pmatrix}
1&0&0&0&0&0&0&0&0&0\\
0&1&0&0&0&0&0&0&0&0\\
0&0&1&0&0&0&0&0&0&0\\
0&0&0&1&0&0&0&0&0&0\\
0&0&0&1&1&0&0&0&0&0\\
0&0&0&0&0&1&0&0&0&0\\
1&0&0&0&0&1&1&0&0&0\\
0&0&0&0&0&0&0&1&0&0\\
0&0&0&0&0&0&0&0&1&0\\
0&0&0&1&0&0&0&0&0&1
\end{pmatrix}
\end{equation}
At this stage, 
\begin{align}
\tm{\beta''}&=\{(9,j):j<9\}\cup\{(8,j):j<8,j \neq 2\}\cup\{(7,j):j<7,j \neq 2,3\}\\
\tmr{\beta''}&=\{(i,2):i>2\}\cup\{(i,3):i>3, i \neq 9\}\cup\{(i,4):i>4,i\neq 8,9\}
\end{align}
and $\ttcr{\beta''}=\tm{\beta''}\cup\tmr{\beta''}$. The non-zero off-diagonal positions of $\s{10,7,{v''}^\top}$ belong to this set. Right multiplication by $\s{10,7,{v''}^\top}$ gives
\begin{equation}
\g{5,u',2}\g{5,u,1}A\s{10,9,v^\top}\s{10,8,{v'}^\top}\s{10,7,{v''}^\top}=
\begin{pmatrix}
0&0&0&0&0&0&0&0&1&0 \\
0&0&0&0&0&0&0&1&0&0 \\
0&0&0&0&0&0&0&0&0&0 \\
0&0&0&0&0&0&1&0&0&0 \\
0&0&0&0&0&1&0&0&0&1
\end{pmatrix}
\end{equation}
At this point, the fourth row and seventh column contain only the third pivot. The fourth column, which corresponds to the same qubit as the seventh, is completely clear. 

The fourth pivot in the left and down order is in the fifth row and tenth column. Extend $\alpha'',\beta''$ to $\alpha''',\beta'''$ where $\alpha'''(4)=5$, $\beta'''(4)=10$. The tenth column is already clear except for the pivot. To clear the fifth row, take
\begin{equation}
{v'''}^\top = \begin{pmatrix}0&0&0&0&0&1&0&0&0&0\end{pmatrix}
\end{equation}
and
\begin{equation}
\s{10,10,{v'''}^\top}=
\begin{pmatrix}
1&0&0&0&0&0&0&0&0&0\\
0&1&0&0&0&0&0&0&0&0\\
0&0&1&0&0&0&0&0&0&0\\
0&0&0&1&0&0&0&0&0&0\\
1&0&0&0&1&0&0&0&0&0\\
0&0&0&0&0&1&0&0&0&0\\
0&0&0&0&0&0&1&0&0&0\\
0&0&0&0&0&0&0&1&0&0\\
0&0&0&0&0&0&0&0&1&0\\
0&0&0&0&0&1&0&0&0&1
\end{pmatrix}
\end{equation}
At this stage, 
\begin{align}
\tm{\beta'''}&=\{(9,j):j<9\}\cup\{(8,j):j<8,j \neq 2\}\cup\{(7,j):j<7,j \neq 2,3\}\nonumber\\
&\cup\{(10,j):j<10, j \neq 2,3,4,7,8,9\}\\
\tmr{\beta'''}&=\{(i,2):i>2\}\cup\{(i,3):i>3, i \neq 9\}\cup\{(i,4):i>4,i\neq 8,9\}\nonumber\\
&\cup\{(i,1):i>1, i \neq 2,3,4,7,8,9\}
\end{align}
and $\ttcr{\beta'''}=\tm{\beta'''}\cup\tmr{\beta'''}$. The non-zero off-diagonal positions of $\s{10,10,{v'''}^\top}$ belong to this set. Right multiplication by $\s{10,10,{v'''}^\top}$ gives
\begin{multline}
\g{5,u',2}\g{5,u,1}A\s{10,9,v^\top}\s{10,8,{v'}^\top}\s{10,7,{v''}^\top}\s{10,10,{v'''}^\top}\\
=
\begin{pmatrix}
0&0&0&0&0&0&0&0&1&0 \\
0&0&0&0&0&0&0&1&0&0 \\
0&0&0&0&0&0&0&0&0&0 \\
0&0&0&0&0&0&1&0&0&0 \\
0&0&0&0&0&0&0&0&0&1
\end{pmatrix}
\end{multline}
which is the incomplete permutation matrix $\Pi(\alpha''',\beta''')$. Therefore, the canonical form of $A$ is $A=L \Pi(\alpha''', \beta''') R$, with
\begin{equation}
L = \g{5,u,1}\g{5,u',2} = 
\begin{pmatrix}
1&0&0&0&0\\
0&1&0&0&0\\
1&1&1&0&0\\
0&0&0&1&0\\
0&0&0&0&1
\end{pmatrix}
\in \L{5}{\tl{\alpha'''}}
\end{equation}
and
\begin{multline}
R=\s{10,10,{v'''}^\top}\s{10,7,{v''}^\top}\s{10,8,{v'}^\top}\s{10,9,v^\top} \\
=
\begin{pmatrix}
1&0&0&0&0&0&0&0&0&0\\
0&1&0&0&0&0&0&0&0&0\\
0&1&1&0&0&0&0&0&0&0\\
0&1&1&1&0&0&0&0&0&0\\
1&1&1&1&1&0&0&0&0&0\\
0&1&1&0&0&1&0&0&0&0\\
1&0&1&0&0&1&1&0&0&0\\
0&1&0&0&1&0&1&1&0&0\\
1&0&0&1&0&0&0&1&1&0\\
0&1&0&1&0&1&0&0&0&1
\end{pmatrix}
\in \B{10}{\ttcr{\beta'''}}
\end{multline}

\subsection{Canonical form for the symplectic and Clifford groups}\label{sec:canonical_form_for_symplectic_group}

For an element of $\symplecticgroup{2n}$, it is natural to apply Gaussian elimination with symplectic moves both on the left and on the right, in order to preserve the symplectic products of the rows and of the columns. This algorithm produces a canonical form for $\symplecticgroup{2n}$ in time $O(n^3)$. The tableau version of the algorithm produces a canonical form for $\cliffordgroup{n}$ in time $O(n^3)$. Details follow. 

Since elements of $\symplecticgroup{2n}$ are invertible, they have a pivot in every row. Therefore, the increasing function $\alpha$ that was used previously to describe the row indices of pivots becomes superfluous. Thus, for injective $\beta:[r]\rightarrow[2n]$, let 
\begin{equation}
\Pi(\beta)=\sum_{i=1}^r \e{2n,i}\e{2n,\beta(i)}^{\top}
\end{equation}
and say that $A\in\symplecticgroup{2n}$ is $\beta$ partially reduced if rows $1,\dots, r$ and columns $\beta(1), \dots, \beta(r)$ of $A$ coincide with the corresponding rows and columns of $\Pi(\beta)$. 

As in the case of stabilizer parity check matrices, the symplectic constraint can be used to extract additional information about $\beta$ reduced elements of $\symplecticgroup{2n}$:

\begin{lemma}\label{lemma:PartiallyReducedSymplecticMatrices}
Take $r \leq n$ and let $\beta:[r]\rightarrow[2n]$ be injective. Let $A \in \symplecticgroup{2n}$ be $\beta$ partially reduced, i.e. 
\begin{equation}
\forall i \in [r]: \;\; A \e{2n,\beta(i)} = \e{2n,i} \;\; \wedge \;\; \e{2n,i}^{\top} A=\e{2n,\beta(i)}^{\top}
\end{equation}
Then, $\beta$ is qubit-injective and the following rows and columns of $A$ are also reduced:
\begin{equation}
\forall i \in [r]: \;\; A \e{2n,2n+1-\beta(i)} = \e{2n,2n+1-i} \;\; \wedge \;\; \e{2n,2n+1-i}^{\top} A = \e{2n,2n+1-\beta(i)}^{\top}
\end{equation}
\end{lemma}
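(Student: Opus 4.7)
The plan is to derive everything from the symplectic identity $A^\top \revdiag{2n} A = \revdiag{2n}$, which combined with invertibility of $A$ gives $A^{-1} = \revdiag{2n} A^\top \revdiag{2n}$, and equivalently $(A^{-1})^\top = \revdiag{2n} A \revdiag{2n}$. These identities let one swap between ``column $k$'' and ``row $2n+1-k$'' statements at will.

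First I would prove qubit-injectivity by contradiction. Suppose there exist $i \neq j \in [r]$ with $\qubit{n}(\beta(i)) = \qubit{n}(\beta(j))$, which means $\beta(j) = 2n+1-\beta(i)$. Computing $\e{2n,\beta(i)}^\top \revdiag{2n} \e{2n,\beta(j)} = 1$ on the one hand, and using the symplectic identity on the other:
\begin{equation}
\e{2n,\beta(i)}^\top \revdiag{2n} \e{2n,\beta(j)} = \e{2n,\beta(i)}^\top A^\top \revdiag{2n} A \e{2n,\beta(j)} = \e{2n,i}^\top \revdiag{2n} \e{2n,j}
\end{equation}
which equals $1$ only if $j = 2n+1-i$. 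Since $i,j \in [r]$ with $r \leq n$, we have $2n+1-i > n \geq j$, contradiction.

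Next I would derive the extra reduced rows and columns directly from $A^{-1} = \revdiag{2n} A^\top \revdiag{2n}$. From $A \e{2n,\beta(i)} = \e{2n,i}$ I get $\e{2n,\beta(i)} = A^{-1} \e{2n,i} = \revdiag{2n} A^\top \revdiag{2n} \e{2n,i} = \revdiag{2n} A^\top \e{2n,2n+1-i}$, hence
\begin{equation}
A^\top \e{2n,2n+1-i} = \revdiag{2n} \e{2n,\beta(i)} = \e{2n,2n+1-\beta(i)},
\end{equation}
which is exactly the reduced-row claim $\e{2n,2n+1-i}^\top A = \e{2n,2n+1-\beta(i)}^\top$. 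Symmetrically, starting from the row condition $\e{2n,i}^\top A = \e{2n,\beta(i)}^\top$ I get $A^\top \e{2n,i} = \e{2n,\beta(i)}$, hence $\e{2n,i} = (A^{-1})^\top \e{2n,\beta(i)} = \revdiag{2n} A \revdiag{2n} \e{2n,\beta(i)} = \revdiag{2n} A \e{2n,2n+1-\beta(i)}$, yielding $A \e{2n,2n+1-\beta(i)} = \e{2n,2n+1-i}$.

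There is no real obstacle here: the lemma is essentially a bookkeeping exercise in applying $A^{-1} = \revdiag{2n} A^\top \revdiag{2n}$. The only potential pitfall is keeping indices straight (confusing $2n+1-i$ with $2n+1-\beta(i)$), which is why I prefer to rewrite each step as a single equation showing both sides live in $\F^{2n}$ before extracting the component identity.
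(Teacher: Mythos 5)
Your proposal is correct and follows essentially the same route as the paper: qubit-injectivity from the vanishing symplectic products of the already-reduced rows/columns (forced by $i,j\le r\le n$), and the mirrored rows and columns from the identity $A^{-1}=\revdiag{2n}A^{\top}\revdiag{2n}$. The index bookkeeping in each step checks out.
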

\begin{proof}
$\beta$ is qubit-injective because rows $1, \dots, r$ of $A$ have pairwise symplectic products 0. 

Next, use $A\revdiag{2n}A^{\top}=A^{\top} \revdiag{2n} A=\revdiag{2n}$ and $\e{2n,\beta(i)}^{\top} A^{\top}=\e{2n,i}^{\top}, A^{\top} \e{2n,i} = \e{2n,\beta(i)}$ as follows:
\begin{equation}
A \e{2n,2n+1-\beta(i)} = A\revdiag{2n}\e{2n,\beta(i)}=\revdiag{2n}(A^{\top})^{-1} \e{2n,\beta(i)}=\revdiag{2n}\e{2n,i}=\e{2n,2n+1-i}
\end{equation}
and
\begin{equation}
\e{2n,2n+1-i}^{\top} A = \e{2n,i}^{\top}\revdiag{2n}A=\e{2n,i}^{\top} (A^{\top})^{-1}\revdiag{2n} = \e{2n,\beta(i)}^{\top} \revdiag{2n} = \e{2n,2n+1-\beta(i)}^{\top}
\end{equation}
This completes the proof. 
\end{proof}

This Lemma shows that the pivots in the first $n$ rows uniquely determine all the rest and motivates the following:
\begin{definition}
For $r \leq n$ and qubit-injective $\beta:[r]\rightarrow[2n]$, let
\begin{equation}
\Pi^{sym}(\beta)=\sum_{i=1}^r \left(\e{2n,i}\e{2n,\beta(i)}^{\top} + \e{2n,2n+1-i}\e{2n,2n+1-\beta(i)}^{\top}\right)
\end{equation}
\end{definition}

As before, Gaussian elimination with symplectic moves both on the left and on the right takes a partially reduced matrix and simplifies it further. The typical step of this algorithm is:

\begin{lemma}\label{lemma:FullySymplecticGaussianStep}
Take $r < n$, qubit-injective $\beta:[r]\rightarrow[2n]$, and $\beta$ partially reduced $A \in \symplecticgroup{2n}$. Let $\beta'$ extend $\beta$ to $[r+1]$ so that $(r+1,\beta'(r+1))$ is the position of the next pivot in the left and down order. Let $u,v\in\F^{2n}$ be such that 
\begin{align}
A \e{2n,\beta'(r+1)} &= \e{2n,r+1}+u \\
\e{2n,r+1}^{\top} A & = \e{2n,\beta'(r+1)} + v^{\top}
\end{align}
and let 
\begin{equation}
A'=\s{2n,u,r+1} A \s{2n,\beta'(r+1), v^{\top}}
\end{equation}
Then:
\begin{enumerate}
\item $\s{2n,u,r+1} \in \B{2n}{\orderedpairs{2n}}$. 
\item $\beta'$ is qubit injective. 
\item $\s{2n,\beta'(r+1),v^{\top}} \in \B{2n}{\ttcr{\beta'}}$.
\item $A'\in\symplecticgroup{2n}$ is $\beta'$ partially reduced. 
\item $(\beta',u,v,A')$ can be computed from $(A,\beta)$ in time $O(n^2)$. 
\end{enumerate}
\end{lemma}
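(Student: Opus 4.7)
The plan is to proceed by analogy with Lemma \ref{lemma:RightSymplecticEliminationStep}, with two key differences: a symplectic move is now applied on the left as well as on the right, and $A$ is an invertible symplectic matrix rather than a stabilizer parity check matrix. The crucial new input is Lemma \ref{lemma:PartiallyReducedSymplecticMatrices}: when $A$ is $\beta$ partially reduced, not only rows $1,\dots,r$ and columns $\beta(1),\dots,\beta(r)$, but also rows $2n+1-r,\dots,2n$ and columns $2n+1-\beta(1),\dots,2n+1-\beta(r)$ of $A$ agree with $\Pi^{sym}(\beta)$. This tightens the support analysis for both $u$ and $v$.

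Parts 1, 2 and 3 all follow from this support analysis. Combining the $\beta$ reduction, the symplectic reduction from Lemma \ref{lemma:PartiallyReducedSymplecticMatrices}, and the fact that $(r+1,\beta'(r+1))$ is the new pivot in the left-and-down order shows that $u$ is supported on $\{r+2,\dots,2n-r\}$ and that $v$ is supported on $\{j:j<\beta'(r+1),\ \qubit{n}(j)\notin\{\qubit{n}(\beta(k))\}_{k=1}^r\}$. Expanding $\s{2n,u,r+1}$ as in equation \eqref{eq:SymplecticGaussianWholeColumnMoves} writes it as a product of generators $\s{2n,j,r+1}$ with $j>r+1$, all in $\B{2n}{\orderedpairs{2n}}$, giving Part 1. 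Qubit-injectivity of $\beta'$ (Part 2) holds because $\qubit{n}(\beta'(r+1))$ cannot coincide with any $\qubit{n}(\beta(i))$: both columns $\beta(i)$ and $2n+1-\beta(i)$ of $A$ are already standard basis vectors and so cannot house the new pivot in row $r+1$. The support condition on $v$ matches $\{j:(\beta'(r+1),j)\in\tm{\beta'}\}\subset\ttcr{\beta'}$, so the transpose of $\s{2n,v,\beta'(r+1)}$ has off-diagonal entries only at positions in $\ttcr{\beta'}$ (using closure under reversal for the partner positions); combined with symplecticity this yields Part 3.

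For Part 4, symplecticity of $A'$ is immediate from closure of $\symplecticgroup{2n}$ under products; to verify $\beta'$ partial reduction, I would treat the previously reduced rows and columns and the new pivot row and column separately. The previously reduced rows $i\leq r$ of $A$ pass through $\s{2n,u,r+1}$ unchanged, since $u_i=0$ and $i\neq 2n-r$ when $r<n$, and then through $\s{2n,\beta'(r+1),v^\top}$ unchanged since $v$ vanishes at both $\beta(i)$ and $2n+1-\beta(i)$; an analogous argument handles the previously reduced columns. The new pivot row reduces to $\e{2n,\beta'(r+1)}^\top$ by applying Lemma \ref{lemma:PropertiesOfSymplecticMoves} part 5 in transposed form together with the involution property of $\s{2n,\beta'(r+1),v^\top}$. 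The new pivot column reduces to $\e{2n,r+1}$ via $\s{2n,u,r+1}(\e{2n,r+1}+u)=\e{2n,r+1}$, combining parts 5 and 6 of Lemma \ref{lemma:PropertiesOfSymplecticMoves}; the hypothesis $u^\top\revdiag{2n}u=0$ needed to invoke part 6 holds automatically over $\F$ because $\revdiag{2n}$ is symmetric and $2n+1$ is odd, so each $u_j u_{2n+1-j}$ appears twice in the sum. Part 5 is a routine count: extracting $\beta',u,v$ from $A$ takes $O(n)$, and each of the two sparse multiplications runs in $O(n^2)$. The main bookkeeping hurdle is tracking the symplectic correction terms that live in row $2n-r$ of $\s{2n,u,r+1}$ and in column $2n+1-\beta'(r+1)$ of $\s{2n,\beta'(r+1),v^\top}$; the assumption $r<n$ is precisely what keeps these corrections disjoint from the previously reduced rows and columns, so that the two-sided multiplication does not spoil the reduction already present in $A$.
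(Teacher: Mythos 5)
Your proposal is correct and follows essentially the same route as the paper: Lemma \ref{lemma:PartiallyReducedSymplecticMatrices} pins down the supports of $u$ and $v$, which yields Parts 1--3, and Part 4 is verified by checking the reduced rows and columns using the properties of the symplectic Gaussian moves (with the involution/part-6 identities handling the new pivot row and column exactly as you describe). Your additional remarks on $u^{\top}\revdiag{2n}u=0$ and on why $r<n$ keeps the correction row $2n-r$ away from the reduced block are details the paper leaves implicit but are accurate.
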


\begin{proof}
\emph{Parts 1, 2 and 3:} Since $A$ is $\beta$ partially reduced, Lemma \ref{lemma:PartiallyReducedSymplecticMatrices} implies $\qubit{n}(\beta'(r+1))\notin \{\qubit{n}(\beta(1),\dots,\qubit{n}(\beta(r))\}$, $u$ is supported on a subset of $\{j:r+1<j<2n+1-r\}$, and $v$ is supported on a subset of $\{j:j<\beta'(r+1), \qubit{n}(j) \notin \{\qubit{n}(\beta(1)), \dots \qubit{n}(\beta(r))\}\}$. 

\emph{Part 4:} $A'$ is a product of three elements of $\symplecticgroup{2n}$, so $A'\in\symplecticgroup{2n}$. Next, check the rows: for $i \in [r+1]$:
\begin{multline}
\e{2n,i}^{\top} A' = \e{2n,i}^{\top}\s{2n,u,r+1}A\s{2n,\beta'(r+1),v^{\top}}=\e{2n,i}^{\top} A \s{2n,\beta'(r+1),v^{\top}} \\
=\begin{cases}
\e{2n,\beta(i)}^{\top}\s{2n,\beta'(r+1),v^{\top}}=\e{2n,\beta(i)}^{\top} & \text{if } i \leq r\\
(\e{2n,\beta'(r+1)}^{\top} + v^{\top})\s{2n,\beta'(r+1),v^{\top}}=\e{2n,\beta'(r+1)}^{\top} & \text{if } i=r+1
\end{cases}
\end{multline}
Next, check the columns: for $i \in [r+1]$:
\begin{multline}
A' \e{2n,\beta'(i)} = \s{2n,u,r+1} A \s{2n,\beta'(r+1),v^{\top}} \e{2n,\beta'(i)}=\s{2n,u,r+1}A\e{2n,\beta'(i)}\\
=\begin{cases}
\s{2n,u,r+1}\e{2n,i}=\e{2n,i} & \text{if } i \leq r\\
\s{2n,u,r+1}(\e{2n,r+1}+u) = \e{2n,r+1} & \text{if } i=r+1
\end{cases}
\end{multline}

\emph{Part 5:} As before, the sparsity of $\s{2n,u,r+1}$ and $\s{2n,\beta'(r+1),v^{\top}}$ implies that multiplication by them can be performed in time $O(n^2)$. 
\end{proof}

Now, all ingredients are in place to show that Gaussian elimination with symplectic moves on both sides produces a canonical form for $\symplecticgroup{2n}$ in time $O(n^3)$. To state the theorem, it is convenient to introduce the following terminology:
\begin{definition}
Say that $(\beta,L,R)$ is $\symplecticgroup{2n}$ allowed if $\beta:[n]\rightarrow[2n]$ is qubit-injective, $L\in\B{2n}{\orderedpairs{2n}}$ and $R\in\B{2n}{\ttcr{\beta}}$. 
\end{definition}
\begin{theorem}\label{thm:clifford_group_canonical_form}
\begin{enumerate}
\item For every $A \in \symplecticgroup{2n}$, Gaussian elimination with symplectic moves on both sides produces $\symplecticgroup{2n}$ allowed $(\beta,L,R)$ such that 
\begin{equation}
A=L\Pi^{sym}(\beta)R
\end{equation}
Moreover, the algorithm runs in time $O(n^3)$. 
\item If $(\beta,L,R)$ and $(\beta',L',R')$ are $\symplecticgroup{2n}$ allowed and if 
\begin{equation}
L\Pi^{sym}(\beta)R=L'\Pi^{sym}(\beta')R'
\end{equation}
then 
\begin{equation}
(\beta,L,R)=(\beta',L',R')
\end{equation}
\end{enumerate}
\end{theorem}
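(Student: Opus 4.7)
For Part 1, my plan is to iterate Lemma \ref{lemma:FullySymplecticGaussianStep} exactly $n$ times, setting $A_0 = A$ and producing $A_1, \dots, A_n$ with $A_k$ being $\beta_k$ partially reduced for some qubit-injective $\beta_k: [k] \to [2n]$. After $n$ iterations, $A_n$ is $\beta$ partially reduced with $\beta = \beta_n : [n] \to [2n]$; Lemma \ref{lemma:PartiallyReducedSymplecticMatrices} then forces rows $n+1, \dots, 2n$ and columns $2n+1-\beta(i)$ to be reduced as well, so $A_n = \Pi^{sym}(\beta)$. Collecting the accumulated left and right symplectic Gaussian moves gives $A = L \Pi^{sym}(\beta) R$ with $L \in \B{2n}{\orderedpairs{2n}}$ by part 1 of the lemma, and $R \in \B{2n}{\ttcr{\beta}}$ using the monotonicity $\ttcr{\beta_k} \subset \ttcr{\beta_{k+1}}$ together with closure of the group under products. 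The runtime bound is immediate: part 5 of the lemma gives $O(n^2)$ per step, and there are $n$ steps.

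For Part 2, suppose $L \Pi^{sym}(\beta) R = L' \Pi^{sym}(\beta') R'$. The first move is to observe that $\Pi^{sym}(\beta)$ is itself an incomplete permutation matrix in the sense of Section \ref{sec:CanonicalFormForUnrestrictedMatrices}: $\Pi^{sym}(\beta) = \Pi(\alpha_0, \beta_0)$, where $\alpha_0 : [2n] \to [2n]$ is the identity and $\beta_0: [2n] \to [2n]$ lists $\beta(1), \dots, \beta(n)$ followed by $2n+1-\beta(n), \dots, 2n+1-\beta(1)$. Setting $\tilde L = (L')^{-1} L$ and $\tilde R = R' R^{-1}$, which are both lower triangular (in $\L{2n}{\orderedpairs{2n}}$), I would rewrite the hypothesis as $\tilde L \Pi^{sym}(\beta) = \Pi^{sym}(\beta') \tilde R$ and apply Lemma \ref{lemma:UniquenessOfIncompletePermutation} to conclude $\Pi^{sym}(\beta) = \Pi^{sym}(\beta')$, hence $\beta = \beta'$ and $\tilde R \in \B{2n}{\ttcr{\beta}}$.

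To then show $R = R'$, I would project the equation $\tilde L \Pi^{sym}(\beta) = \Pi^{sym}(\beta) \tilde R$ onto its first $n$ rows. Because $\tilde L$ is lower triangular, its top $n \times 2n$ block has the form $[\tilde L_{11} \mid 0]$ with $\tilde L_{11} \in \L{n}{\orderedpairs{n}}$, and the top $n$ rows of $\Pi^{sym}(\beta)$ coincide with $\Pi(\alpha, \beta)$ for $\alpha : [n] \to [n]$ the identity; so the projection reads $\tilde L_{11} \Pi(\alpha, \beta) = \Pi(\alpha, \beta) \tilde R$. Lemma \ref{lemma:UniquenessOfSymplecticRightFactor} then yields $\tilde R = I$, so $R = R'$. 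Finally, the residual equation $\tilde L \Pi^{sym}(\beta) = \Pi^{sym}(\beta)$ falls to Lemma \ref{lemma:UniquenessOfLeftFactor} applied with $(\alpha_0, \beta_0)$: since $Im(\alpha_0) = [2n]$ we have $\tl{\alpha_0} = \orderedpairs{2n}$, so $\tilde L \in \B{2n}{\orderedpairs{2n}} \subset \L{2n}{\tl{\alpha_0}}$ and the lemma delivers $\tilde L = I$, hence $L = L'$.

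The main subtlety is recognizing that all three uniqueness lemmas from Sections \ref{sec:CanonicalFormForUnrestrictedMatrices} and \ref{sec:canonical_form_for_stabilizer_pcm} apply essentially verbatim to $\Pi^{sym}(\beta)$ once one views it as an ordinary $2n \times 2n$ incomplete permutation matrix with full row image. The only slightly nontrivial manipulation is the projection onto the first $n$ rows used to invoke Lemma \ref{lemma:UniquenessOfSymplecticRightFactor}; this works precisely because $\tilde L$ is lower triangular (so its top block has no support in the last $n$ columns) and because the top $n$ rows of $\Pi^{sym}(\beta)$ are exactly $\Pi(\beta)$.
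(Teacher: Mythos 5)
Your proposal is correct and follows essentially the same route as the paper: repeated application of Lemma \ref{lemma:FullySymplecticGaussianStep} for Part 1, then Lemma \ref{lemma:UniquenessOfIncompletePermutation} for $\beta=\beta'$ and Lemma \ref{lemma:UniquenessOfSymplecticRightFactor} applied to the first $n$ rows of $(L')^{-1}L\,\Pi^{sym}(\beta)=\Pi^{sym}(\beta)R'R^{-1}$ for $R=R'$. The only cosmetic difference is that you finish with Lemma \ref{lemma:UniquenessOfLeftFactor} where the paper simply notes $L=L'$ follows (e.g.\ since $\Pi^{sym}(\beta)$ is invertible); both are fine.
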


\begin{proof}
\emph{Part 1} follows from repeated application of Lemma \ref{lemma:FullySymplecticGaussianStep}.

\emph{Part 2:} Lemma \ref{lemma:UniquenessOfIncompletePermutation} implies $\Pi^{sym}(\beta)=\Pi^{sym}(\beta')$, so $\beta=\beta'$. Then, Lemma \ref{lemma:UniquenessOfSymplecticRightFactor}, applied to the first $n$ rows of $(L')^{-1} L \Pi^{sym}(\beta) = \Pi^{sym}(\beta)R'R^{-1}$, gives $R=R'$. $L=L'$ follows. 
\end{proof}

\emph{Example:} Consider the following matrix in $\symplecticgroup{6}$: 
\begin{equation}
A=
\begin{pmatrix}
0&1&1&0&1&0\\
0&0&0&1&1&1\\
0&1&1&0&1&1\\
1&1&0&1&0&0\\
0&0&1&1&1&0\\
1&1&0&1&1&1
\end{pmatrix}
\end{equation}

The first pivot in the left and down order is in the first row and fifth column. Take $\beta(1)=5$. To clear the first row, take 
\begin{equation}
v^\top=\begin{pmatrix} 0&1&1&0&0&0 \end{pmatrix}
\end{equation}
and
\begin{equation}
\s{6,5,v^\top}=
\begin{pmatrix}
1&0&0&0&0&0\\
0&1&0&0&0&0\\
0&0&1&0&0&0\\
0&1&0&1&0&0\\
0&1&1&0&1&0\\
0&0&0&0&0&1
\end{pmatrix}
\end{equation}
At this stage, $\tm{\beta}=\{(5,j):j<5\}$, $\tmr{\beta}=\{(i,2):i>2\}$, $\ttcr{\beta}=\tm{\beta}\cup\tmr{\beta}$. $\s{6,5,v^\top}$ has non-zero off-diagonal entries only in the second column and fifth row. Right multiplication by $\s{6,5,v^\top}$ gives
\begin{equation}
A\s{6,5,v^\top} = 
\begin{pmatrix}
0&0&0&0&1&0\\
0&0&1&1&1&1\\
0&0&0&0&1&1\\
1&0&0&1&0&0\\
0&0&0&1&1&0\\
1&1&1&1&1&1
\end{pmatrix}
\end{equation}
To clear the fifth column, take
\begin{equation}
u=
\begin{pmatrix}0\\1\\1\\0\\1\\1\end{pmatrix}
\; \text{ and } \; 
\s{6,u,1}=
\begin{pmatrix}
1&0&0&0&0&0\\
1&1&0&0&0&0\\
1&0&1&0&0&0\\
0&0&0&1&0&0\\
1&0&0&0&1&0\\
1&1&0&1&1&1
\end{pmatrix}
\end{equation}
Left multiplication by $\s{6,u,1}$ gives
\begin{equation}
\s{6,u,1}A\s{6,5,v^\top} = 
\begin{pmatrix}
0&0&0&0&1&0\\
0&0&1&1&0&1\\
0&0&0&0&0&1\\
1&0&0&1&0&0\\
0&0&0&1&0&0\\
0&1&0&0&0&0
\end{pmatrix}
\end{equation}
At this point, rows 1 and 6 and columns 2 and 5 are reduced. 

The second pivot in the left and down order is in the second row and sixth column. Extend $\beta$ to $\beta'$ with $\beta'(2)=6$. To clear the second row, take 
\begin{equation}
{v'}^\top=\begin{pmatrix} 0&0&1&1&0&0 \end{pmatrix}
\end{equation}
and
\begin{equation}
\s{6,6,{v'}^\top}=
\begin{pmatrix}
1&0&0&0&0&0\\
0&1&0&0&0&0\\
1&0&1&0&0&0\\
1&0&0&1&0&0\\
0&0&0&0&1&0\\
0&0&1&1&0&1
\end{pmatrix}
\end{equation}
At this stage, 
\begin{align}
\tm{\beta'}&=\{(5,j):j<5\}\cup\{(6,j):j<6,j\neq 2,5\}\\
\tmr{\beta'}&=\{(i,2):i>2\}\cup\{(i,1):i>1,i\neq 2,5\}
\end{align}
and $\ttcr{\beta'}=\tm{\beta'}\cup\tmr{\beta'}$. The positions of non-zero off-diagonal entries of $\s{6,6,{v'}^\top}$ belong to this set. Right multiplication by $\s{6,6,{v'}^\top}$ gives
\begin{equation}
\s{6,u,1}A\s{6,5,v^\top}\s{6,6,{v'}^\top} = 
\begin{pmatrix}
0&0&0&0&1&0\\
0&0&0&0&0&1\\
0&0&1&1&0&1\\
0&0&0&1&0&0\\
1&0&0&1&0&0\\
0&1&0&0&0&0
\end{pmatrix}
\end{equation}
To clear the sixth column, take
\begin{equation}
u'=
\begin{pmatrix}0\\0\\1\\0\\0\\0\end{pmatrix}
\; \text{ and } \; 
\s{6,u',2}=
\begin{pmatrix}
1&0&0&0&0&0\\
0&1&0&0&0&0\\
0&1&1&0&0&0\\
0&0&0&1&0&0\\
0&0&0&1&1&0\\
0&0&0&0&0&1
\end{pmatrix}
\end{equation}
Left multiplication by $\s{6,u',2}$ gives
\begin{equation}
\s{6,u'2}\s{6,u,1}A\s{6,5,v^\top}\s{6,6,{v'}^\top} = 
\begin{pmatrix}
0&0&0&0&1&0\\
0&0&0&0&0&1\\
0&0&1&1&0&0\\
0&0&0&1&0&0\\
1&0&0&0&0&0\\
0&1&0&0&0&0
\end{pmatrix}
\end{equation}
At this point, rows 1, 2, 5 and 6 and columns 1, 2, 5 and 6 are reduced. 

The third pivot in the left and down order is in the third row and fourth column. Extend $\beta'$ to $\beta''$ with $\beta''(3)=4$. To clear the third row, take 
\begin{equation}
{v''}^\top= \begin{pmatrix} 0&0&1&0&0&0 \end{pmatrix}
\end{equation}
and
\begin{equation}
\s{6,4,{v''}^\top} = 
\begin{pmatrix}
1&0&0&0&0&0\\
0&1&0&0&0&0\\
0&0&1&0&0&0\\
0&0&1&1&0&0\\
0&0&0&0&1&0\\
0&0&0&0&0&1
\end{pmatrix}
\end{equation}
At this stage, 
\begin{align}
\tm{\beta''}&=\{(5,j):j<5\}\cup\{(6,j):j<6,j\neq 2,5\}\cup\{(4,3)\}\\
\tmr{\beta''}&=\{(i,2):i>2\}\cup\{(i,1):i>1,i\neq 2,5\}\cup\{(4,3)\}
\end{align}
and $\ttcr{\beta''}=\tm{\beta''}\cup\tmr{\beta''}$. The positions of non-zero off-diagonal entries of $\s{6,4,{v''}^\top}$ belong to this set. Right multiplication by $\s{6,4,{v''}^\top}$ gives
\begin{equation}
\s{6,u'2}\s{6,u,1}A\s{6,5,v^\top}\s{6,6,{v'}^\top}\s{6,4,{v''}^\top} = 
\begin{pmatrix}
0&0&0&0&1&0\\
0&0&0&0&0&1\\
0&0&0&1&0&0\\
0&0&1&1&0&0\\
1&0&0&0&0&0\\
0&1&0&0&0&0
\end{pmatrix}
\end{equation}
To clear the fourth column, take 
\begin{equation}
u''=
\begin{pmatrix}0\\0\\0\\1\\0\\0\end{pmatrix}
\; \text{ and } \; 
\s{6,u'',3}=
\begin{pmatrix}
1&0&0&0&0&0\\
0&1&0&0&0&0\\
0&0&1&0&0&0\\
0&0&1&1&0&0\\
0&0&0&0&1&0\\
0&0&0&0&0&1
\end{pmatrix}
\end{equation}
Left multiplication by $\s{6,u'',3}$ gives
\begin{equation}
\s{6,u'',3}\s{6,u'2}\s{6,u,1}A\s{6,5,v^\top}\s{6,6,{v'}^\top}\s{6,4,{v''}^\top} = 
\begin{pmatrix}
0&0&0&0&1&0\\
0&0&0&0&0&1\\
0&0&0&1&0&0\\
0&0&1&0&0&0\\
1&0&0&0&0&0\\
0&1&0&0&0&0
\end{pmatrix}
\end{equation}
which is $\Pi^{sym}(\beta'')$. Therefore, the canonical form of $A$ is $A=L\Pi^{sym}(\beta'')R$ with 
\begin{equation}
L=\s{6,u,1}\s{6,u',2}\s{6,u'',3}=
\begin{pmatrix}
1&0&0&0&0&0\\
1&1&0&0&0&0\\
1&1&1&0&0&0\\
0&0&1&1&0&0\\
1&0&1&1&1&0\\
1&1&0&0&1&1
\end{pmatrix}
\in \B{2n}{\orderedpairs{2n}}
\end{equation}
and
\begin{equation}
R=\s{6,4,{v''}^\top}\s{6,6,{v'}^\top}\s{6,5,v^\top}=
\begin{pmatrix}
1&0&0&0&0&0\\
0&1&0&0&0&0\\
1&0&1&0&0&0\\
0&1&1&1&0&0\\
0&1&1&0&1&0\\
0&1&1&1&0&1
\end{pmatrix}
\in \B{2n}{\ttcr{\beta''}}
\end{equation}

\section{Finite blocklength bounds for stabilizer codes and Pauli noise}\label{sec:finite_blocklength_bounds}

The preparatory subsection \ref{sec:uniform_distribution} establishes a property of the uniform distribution over the symplectic group that is used to derive the achievability bound. Another preparatory subsection \ref{sec:notation_for_sorting} establishes notation necessary to state the bounds. Next, subsection \ref{sec:general_bounds} states and proves achievability and converse bounds on $\errorguesserrorprob{p_{UV},r}$ and $\errorguessrate{p_{UV},\epsilon}$ for arbitrary $p_{UV}$. Subsection \ref{sec:bounds_for_erasure_channel} and subsection \ref{sec:bounds_for_depolarizing_channel} show how the general bounds can be computed efficiently in the cases of the qubit erasure and depolarizing channels respectively. Moreover, it is seen there that the achievability and converse bounds on the rate differ by $O(1/n)$ for $n$ qubits. Subsection \ref{sec:n_independent_identical_channels} shows that the achievability and converse bounds can be computed in polynomial time in the case of $n$ independent identical Pauli channels with side information. 

\subsection{The uniform distribution over symplectic matrices}\label{sec:uniform_distribution}

The following Lemma will be useful later on:

\begin{lemma}\label{lemma:uniform_symplectic_matrix}
Let $C$ be uniformly distributed over $\symplecticgroup{2n}$, and let $u$ be a non-zero vector in $\F^{2n}$. Then, $Cu$ is uniformly distributed over non-zero vectors. 
\end{lemma}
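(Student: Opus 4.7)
The plan is to reduce the statement to two orthogonal claims: (i) the group $\symplecticgroup{2n}$ acts transitively on $\F^{2n}\setminus\{0\}$, and (ii) the uniform distribution on $\symplecticgroup{2n}$ pushes forward to the uniform distribution on any orbit of its action.

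Step (ii) is a short argument from left-invariance. For any fixed $C_0 \in \symplecticgroup{2n}$, the map $C \mapsto C_0 C$ is a bijection of the finite group $\symplecticgroup{2n}$ onto itself, so if $C$ is uniform then $C_0 C$ is uniform as well. Consequently $C_0(Cu)$ has the same distribution as $Cu$ for every $C_0$, which forces the distribution of $Cu$ to be constant on each orbit. Combined with (i), this gives the uniform distribution on $\F^{2n}\setminus\{0\}$.

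For step (i), I would prove transitivity by showing that for every non-zero $u$ there is an element of $\symplecticgroup{2n}$ sending $\e{2n,1}$ to $u$, using the symplectic Gaussian moves already developed in Section \ref{sec:generators_symplectic_case}. First, if $u_1 = 1$, set $v = u + \e{2n,1}$; then $v_1 = 0$ and part 5 of Lemma \ref{lemma:PropertiesOfSymplecticMoves} gives $\s{2n,v,1}\e{2n,1} = \e{2n,1} + v = u$. If $u_1 = 0$, I first apply a qubit-SWAP combination $\s{2n,i,j}\s{2n,j,i}\s{2n,i,j}$ with $i,j \in [n]$ to move a non-zero $X$-type component of $u$ into position $1$; if all $X$-type components of $u$ vanish, then some $Z$-type component $u_{2n+1-k}$ is non-zero, and a Hadamard-like combination $\s{2n,2n+1-k,k}\s{2n,k,2n+1-k}\s{2n,2n+1-k,k}$ followed by a qubit-SWAP reduces again to the case $u_1 = 1$. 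In every subcase, the resulting composition is a product of elements of $\symplecticgroup{2n}$ and therefore lies in $\symplecticgroup{2n}$.

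The main obstacle is the case analysis in (i). A cleaner but less self-contained alternative would be to invoke Witt's extension theorem: since $\revdiag{2n}$ is non-degenerate, the linear functional $w \mapsto u^{\top} \revdiag{2n} w$ is non-zero, so a symplectic partner $w$ of $u$ (with $u^{\top}\revdiag{2n} w = 1$) exists, and by induction on dimension the pair $(u,w)$ extends to a full symplectic basis of $\F^{2n}$; the matrix whose columns are this basis in the order dictated by the symplectic form $\revdiag{2n}$ lies in $\symplecticgroup{2n}$ and sends $\e{2n,1}$ to $u$. Either approach completes (i) and hence the lemma.
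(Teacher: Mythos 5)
Your proof is correct and follows essentially the same route as the paper: left-invariance of the uniform distribution on the finite group reduces the lemma to transitivity of $\symplecticgroup{2n}$ on $\F^{2n}\backslash\{0\}$, which the paper also establishes via the symplectic Gaussian moves and Lemma \ref{lemma:PropertiesOfSymplecticMoves}. The only cosmetic difference is that you prove transitivity by mapping the base point $\e{2n,1}$ to an arbitrary non-zero $u$, whereas the paper maps $u$ to $v$ directly via a case split on whether $u$ and $v$ share a common support position.
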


\begin{proof}
Take non-zero vectors $v,v'$. Take symplectic matrix $C'$ such that $C'v=v'$. Then, $\prob{Cu=v}=\prob{C'Cu=v'}=\prob{Cu=v'}$, because $C'C$ is also uniformly distributed over symplectic matrices. 
\end{proof}

The proof of Lemma \ref{lemma:uniform_symplectic_matrix} uses the fact that $\symplecticgroup{2n}$ acts transitively on $\F^{2n}\backslash\{0\}$. Previous expositions of the hashing bound also implicitly establish and use this transitive action; see for example \cite[Section 3.2.3]{bennett1996mixed}. A short proof based on the symplectic Gaussian moves is as follows: 

\begin{lemma}
Let $u,v$ be non-zero vectors in $\F^{2n}$. Then, there exists $C \in \symplecticgroup{2n}$ such that $Cu=v$. 
\end{lemma}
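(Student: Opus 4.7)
The plan is to reduce to a canonical representative: I will show that for every non-zero $u \in \F^{2n}$ there exists $C_u \in \symplecticgroup{2n}$ with $C_u u = \e{2n,1}$. Granting this, for any non-zero $u,v$ the matrix $C = C_v^{-1} C_u$ lies in $\symplecticgroup{2n}$ and satisfies $Cu = v$, proving the lemma.

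The first step collapses a general non-zero vector to a standard basis vector using a single symplectic whole-column move. Since $u \neq 0$, pick some $j \in [2n]$ with $u_j = 1$ and set $w = u - \e{2n,j}$, so that $w_j = 0$. Then $\s{2n,w,j}$ is symplectic because equation \eqref{eq:SymplecticGaussianWholeColumnMoves} expresses it as a product of symplectic Gaussian moves, and by parts 3 and 5 of Lemma \ref{lemma:PropertiesOfSymplecticMoves} it is an involution sending $\e{2n,j}$ to $\e{2n,j} + w = u$. Consequently $\s{2n,w,j} u = \e{2n,j}$.

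The second step moves $\e{2n,j}$ to $\e{2n,1}$ using the symplectic images of a SWAP and possibly a Hadamard, both of which were identified in Section \ref{sec:generators_symplectic_case} (items 3 and 5 of the list of Clifford Gaussian moves). If $j \leq n$, the SWAP of qubits $1$ and $j$, whose symplectic image is $\s{2n,1,j}\s{2n,j,1}\s{2n,1,j}$, interchanges $\e{2n,1} \leftrightarrow \e{2n,j}$ and $\e{2n,2n} \leftrightarrow \e{2n,2n+1-j}$, which is exactly what is needed. If $j > n$, first apply the symplectic image of the Hadamard on qubit $2n+1-j$, namely $\s{2n,j,2n+1-j}\s{2n,2n+1-j,j}\s{2n,j,2n+1-j}$, to swap $\e{2n,j} \leftrightarrow \e{2n,2n+1-j}$, and then apply the SWAP above (with $j$ replaced by $2n+1-j$) if $2n+1-j \neq 1$. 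Composing the transformations from the two steps gives $C_u$.

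There is no real obstacle here; every ingredient has already been set up in Section \ref{sec:groups_of_lower_triangular_matrices}, and the only care needed is the case split between $j \leq n$ and $j > n$ when constructing the standard-basis permutation in step two.
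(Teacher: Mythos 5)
Your proof is correct and uses essentially the same mechanism as the paper: the whole-column moves $\s{2n,w,j}$ together with parts 3 and 5 of Lemma \ref{lemma:PropertiesOfSymplecticMoves} to collapse a non-zero vector onto a standard basis vector. The only organizational difference is that you normalize both $u$ and $v$ to the canonical representative $\e{2n,1}$ via an extra SWAP/Hadamard permutation step and then compose, whereas the paper connects $u$ to $v$ directly through a coordinate where both are supported (inserting a single $\s{2n,j,i}$ first if no such coordinate exists), thereby avoiding the permutation matrices altogether.
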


\begin{proof}
If there is $i$ such that $\e{2n,i}^{\top}u=\e{2n,i}^{\top}v=1$, then let $u'=u-\e{2n,i}$, $v'=v-\e{2n,i}$. Then, Lemma \ref{lemma:PropertiesOfSymplecticMoves} implies that $\s{2n,v',i}\s{2n,u',i}u=\s{2n,v',i}\e{2n,i}=v$. 

If there is no $i$ such that $\e{2n,i}^{\top}u=\e{2n,i}^{\top}v=1$, then let $i\neq j$ be such that $u$ has an $i$-th but not a $j$-th component and $v$ has a $j$-th but not an $i$-th component. Then, $\s{2n,j,i}u$ has a $j$-th component, which reduces this case to the previous one. 
\end{proof}

\subsection{Some additional notation}\label{sec:notation_for_sorting}

Take a distribution $p_{UV}$ of a random vector $U$ in $\F^{2n}$ and another discrete random variable $V$. For each value $v$ that $V$ can take, sort the vectors $u \in \F^{2n}$ in decreasing order according to the probability of the event $U=u,V=v$, resolving ties arbitrarily. Summarize this in a funciton $\orderfunction$: for each $v$, $\orderfunction(1,v)$, $\orderfunction(2,v)$, \dots, $\orderfunction(2^{2n},v)$ are the $2^{2n}$ vectors in $\F^{2n}$ sorted so that 
\begin{equation}
p_{UV}(\orderfunction(1,v),v) \geq p_{UV}(\orderfunction(2,v),v) \geq \dots \geq p_{UV}(\orderfunction(2^{2n},v),v)
\end{equation}

Let random variable $J$ taking values in $\left[2^{2n}\right]$ be such that for each $j$ and $v$, the event $J=j,V=v$ is the same as the event $U=\orderfunction(j,v),V=v$.

\subsection{General bounds}\label{sec:general_bounds}

Using the notation in section \ref{sec:notation_for_sorting}, it is possible to formulate the following general bounds: 
 
\begin{theorem}\label{thm:general_bounds}
For every distribution $p_{UV}$ and rate $r=k/n$, let
\begin{align}
\errorguesserrorprobconverse{p_{UV},r}&=\prob{J>2^m} \\
\errorguesserrorprobachievability{p_{UV},r} &=\prob{J>2^m} + \expect{\indicator{J\leq 2^m} (J-1)2^{-m}}
\end{align}
where $m=n-k$ and $\indicator{\cdot}$ is the indicator of an event. Then,
\begin{equation}
\errorguesserrorprobconverse{p_{UV},r} \leq \errorguesserrorprob{p_{UV},r} \leq \errorguesserrorprobachievability{p_{UV},r}
\end{equation}

Moreover, let
\begin{align}
\errorguessrateachievability{p_{UV},\epsilon}&=\max \left\{r:\errorguesserrorprobachievability{p_{UV},r}\leq \epsilon\right\} \\
\errorguessrateconverse{p_{UV},\epsilon}&=\min \left\{r: \errorguesserrorprobconverse{p_{UV},r} > \epsilon \right\}
\end{align}
Then,
\begin{equation}
\errorguessrateachievability{p_{UV},\epsilon} \leq \errorguessrate{p_{UV},\epsilon} < \errorguessrateconverse{p_{UV},\epsilon}
\end{equation}
\end{theorem}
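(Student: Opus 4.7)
The plan is to prove the two bounds on $\errorguesserrorprob{p_{UV},r}$ directly; the two bounds on $\errorguessrate{p_{UV},\epsilon}$ then follow mechanically from the definitions.

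For the converse, fix any $C\in\symplecticgroup{2n}$ and decoder $D$. The syndrome map $u\mapsto \sum_{i=1}^m \e{m,i}\e{2n,i}^{\top} Cu$ is a surjective linear map $\F^{2n}\to\F^m$ whose fibers partition $\F^{2n}$ into $2^m$ cosets of a subspace of dimension $2n-m$. For each value $v$ of the side information, $D(v,\cdot)$ is constant on each coset, so at most one vector per coset is decoded correctly and thus at most $2^m$ distinct vectors per $v$ in total. The success probability conditioned on $V=v$ is then upper bounded by the sum of the $2^m$ largest values of $p_{UV}(\cdot,v)$; summing over $v$ gives overall success at most $\prob{J\leq 2^m}$ and hence error at least $\prob{J>2^m}=\errorguesserrorprobconverse{p_{UV},r}$, independently of $C$ and $D$.

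For the achievability, I would use a random coding argument: draw $C$ uniformly from $\symplecticgroup{2n}$ and apply MAP decoding given $V$ and the syndrome. Given $V=v$ and $U=\orderfunction(j,v)$, MAP fails only if some earlier $\orderfunction(i,v)$, $i<j$, has the same syndrome as $U$, equivalently, $Cw$ has zero first $m$ coordinates for the nonzero vector $w=\orderfunction(i,v)+\orderfunction(j,v)$. By Lemma~\ref{lemma:uniform_symplectic_matrix}, $Cw$ is uniform on $\F^{2n}\setminus\{0\}$, so this event has probability $(2^{2n-m}-1)/(2^{2n}-1)<2^{-m}$. A union bound over $i=1,\dots,j-1$, clipped at $1$, yields conditional failure probability at most $\min(1,(j-1)2^{-m})$. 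Taking expectation and splitting on $\{J\leq 2^m\}$ versus $\{J>2^m\}$ gives
\begin{equation*}
\expect{\min(1,(J-1)2^{-m})}=\prob{J>2^m}+\expect{\indicator{J\leq 2^m}(J-1)2^{-m}}=\errorguesserrorprobachievability{p_{UV},r},
\end{equation*}
so some deterministic choice of $C$ attains error at most $\errorguesserrorprobachievability{p_{UV},r}$.

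The rate bounds then follow immediately: at $r=\errorguessrateachievability{p_{UV},\epsilon}$, the achievability bound supplies a code with error $\leq\epsilon$, so $r\leq\errorguessrate{p_{UV},\epsilon}$; at $r=\errorguessrateconverse{p_{UV},\epsilon}$, the converse forces every code of rate $r$ to have error $>\epsilon$, so $\errorguessrate{p_{UV},\epsilon}<r$. The main subtlety I anticipate is on the converse side: one must justify that dropping the coset-representative constraint (pretending the decoder can pick \emph{any} $2^m$ vectors rather than one per coset) only inflates the optimum success probability; this is the correct direction because enlarging the feasible set can only raise the maximum, so the top-$2^m$ sum really is an upper bound. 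The achievability is largely routine once Lemma~\ref{lemma:uniform_symplectic_matrix} is invoked as a hashing bound, with the only minor care being to clip the union bound at $1$ before taking expectation.
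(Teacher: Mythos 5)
Your proposal is correct and follows essentially the same route as the paper: the converse bounds the success probability by the top-$2^m$ sum of $p_{UV}(\cdot,v)$ for each $v$ because the decoder's image has at most $2^m$ elements, and the achievability draws $C$ uniformly from $\symplecticgroup{2n}$, decodes by trying $\orderfunction(1,v),\orderfunction(2,v),\dots$ in order, and applies the union bound together with Lemma~\ref{lemma:uniform_symplectic_matrix} to get the conditional failure probability $\min(1,(j-1)2^{-m})$. Your extra remarks (the explicit collision probability $(2^{2n-m}-1)/(2^{2n}-1)<2^{-m}$ and the observation that relaxing the one-representative-per-coset constraint only helps the converse) are correct refinements of steps the paper leaves implicit.
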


\begin{proof}
First, consider the converse bound on the optimal error probability. Without loss of generality, the optimal strategy is deterministic, so there is a fixed symplectic matrix $C$ and the decoder $D$ is a fixed mapping of pairs $(s,v)$ of syndrome and side information to vectors $u\in \F^{2n}$. For every $v$, if $U$ takes a value outside the image of $D$, then an error occurs. In the best case, for every $v$, the $2^m$ possible syndromes are mapped to the $2^m$ most likely vectors $\orderfunction(1,v)$, \dots, $\orderfunction(2^m,v)$. Then, the probability of error is at least
\begin{equation}
\sum_v \sum_{j=2^m+1}^{2^{2n}} p_{UV}(\orderfunction(j,v),v)=\prob{J>2^m}
\end{equation}

Now, consider the achievability bound on the optimal error probability. Let $C$ be uniformly distributed over symplectic matrices. Let the decoder $D$ proceed as follows: try the vectors $\orderfunction(1,v)$, $\orderfunction(2,v)$, $\dots$ in order until one matches the syndrome. Conditional on $J=j,V=v$, the union bound and Lemma \ref{lemma:uniform_symplectic_matrix} imply that the probability that for some $i<j$, $\orderfunction(i,v)$ produces the same syndrome as $\orderfunction(j,v)$ is at most $\min(1,(j-1)2^{-m})$. This proves the achievability bound. 

Finally, bounds on the optimal error probability lead to bounds on the optimal rate.

If $\errorguesserrorprobachievability{p_{UV},r}\leq \epsilon$, then there is a strategy of rate $r$ and error probability at most $\epsilon$, so $\errorguessrate{p_{UV},\epsilon}\geq r$. The best lower bound that can be obtained in this way is $\errorguessrateachievability{p_{UV},\epsilon}$. 

If $\errorguesserrorprobconverse{p_{UV},r} > \epsilon$, then no strategy of rate $r$ has error probability at most $\epsilon$, so $\errorguessrate{p_{UV},\epsilon} < r$. The best upper bound that can be obtained in this way is $\errorguessrateconverse{p_{UV},\epsilon}$. 
\end{proof}

\subsection{The qubit erasure channel}\label{sec:bounds_for_erasure_channel}

As a first example, consider the qubit erasure channel \eqref{eq:erasure_channel}. Let $p_{UV}$ correspond to $n$ independent erasure channels with parameter $\delta$. The side information $V$ is a vector of $n$ independent $Bernoulli(\delta)$ random variables; it specifies the erased qubits. Conditional on $V=v$, $U$ is uniformly distributed over the $2^{2|v|}$ vectors supported on the erased qubits, and $J$ is uniformly distributed on $\left[2^{2|v|}\right]$.

These observations show that for the erasure channel, the general bounds in Theorem \ref{thm:general_bounds} can be expressed in terms of the cumulative distribution function of the binomial distribution. Since the CDF of the binomial can be computed efficiently and only a few calls to this function are needed, this gives an efficient algorithm for computing the bounds. Moreover, the resulting bounds on the rate differ by at most $O(1/n)$ from a simple function of $n$, $\delta$ and $\epsilon$. 

\begin{theorem}\label{thm:erasure_channel_bounds}
Let $\bincdf{n,p,i}$ denote the probability that a $Binomial(n,p)$ random variable is at most $i$:
\begin{equation}\label{eq:binomial_cdf}
\bincdf{n,p,i}=\sum_{i'=0}^i {n \choose i'} p^{i'}(1-p)^{n-i'}
\end{equation}

Let $p_{UV}$ correspond to $n$ independent erasure channels with parameter $\delta$. Then,
\begin{multline}
\errorguesserrorprobconverse{p_{UV},r}\\
=\bincdf{n,1-\delta,n-\floor{\frac{m}{2}}-1}  - 2^m \left(\frac{4-3\delta}{4}\right)^n \bincdf{n,\frac{4-4\delta}{4-3\delta},n-\floor{\frac{m}{2}} -1} 
\end{multline}
and
\begin{multline}
\errorguesserrorprobachievability{p_{UV},r}=\left(1+\frac{1}{2^{m+1}}\right)\bincdf{n,1-\delta,n-\floor{\frac{m}{2}}-1} - \frac{1}{2^{m+1}} \\
- \frac{2^m+1}{2} \left(\frac{4-3\delta}{4}\right)^n \bincdf{n,\frac{4-4\delta}{4-3\delta},n-\floor{\frac{m}{2}}-1} 
+\frac{(1+3\delta)^n}{2^{m+1}} \bincdf{n,\frac{4\delta}{1+3\delta},\floor{\frac{m}{2}}} 
\end{multline} 
where $r=k/n$ and $m=n-k$. 

Moreover, for fixed $\delta$, $\epsilon$, and for sufficiently large $n$, 
\begin{equation}
\errorguessrate{p_{UV},\epsilon}=1-2\delta + \frac{2\normalcdfinv{\epsilon}\sqrt{\delta(1-\delta)}}{\sqrt{n}}+O\left(\frac{1}{n}\right)
\end{equation}
where $\normalcdf{x}=1/\sqrt{2\pi} \int_{-\infty}^x e^{-t^2/2}dt$ is the cumulative distribution function of the standard normal distribution. 
\end{theorem}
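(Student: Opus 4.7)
The plan is to derive both closed-form expressions by conditioning on the erasure weight $W=|V|\sim\text{Binomial}(n,\delta)$, noting that conditional on $W=w$ the Pauli error $U$ is uniform over the $2^{2w}$ vectors supported on the erased qubits, so conditionally $J$ is uniform on $\{1,\dots,2^{2w}\}$. Write $w^{*}=\floor{m/2}$. For the converse, $\prob{J>2^{m}\mid W=w}$ equals $0$ if $w\le w^{*}$ and $1-2^{m-2w}$ otherwise, so $\prob{J>2^{m}}$ splits into two sums against the binomial weights. The first, $\prob{W>w^{*}}$, equals $\bincdf{n,1-\delta,n-w^{*}-1}$ after the reflection $W\leftrightarrow n-W$. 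The second carries an extra factor $4^{-w}$ and can be reparametrised via the identity
\begin{equation*}
\left(\tfrac{\delta}{4}\right)^{w}(1-\delta)^{n-w}=\left(\tfrac{4-3\delta}{4}\right)^{n}\left(\tfrac{\delta}{4-3\delta}\right)^{w}\left(\tfrac{4-4\delta}{4-3\delta}\right)^{n-w},
\end{equation*}
which produces the factor $2^{m}((4-3\delta)/4)^{n}\,\bincdf{n,(4-4\delta)/(4-3\delta),n-w^{*}-1}$ and completes the converse formula.

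For the achievability I would additionally evaluate $\expect{\indicator{J\le 2^{m}}(J-1)2^{-m}\mid W=w}$, which by direct calculation equals $(2^{2w}-1)/2^{m+1}$ when $2w\le m$ and $(2^{m}-1)/2^{2w+1}$ when $2w>m$. The $w>w^{*}$ piece combines with the converse's $-2^{m}$ into the coefficient $-(2^{m}+1)/2$ in front of the same $((4-3\delta)/4)^{n}\,\bincdf{\cdots}$ term. The $w\le w^{*}$ piece produces a sum with factor $4^{w}\cdot 2^{-m-1}$, which I would reparametrise with $a=4\delta$, $b=1-\delta$ to get $(1+3\delta)^{n}/2^{m+1}\cdot\bincdf{n,4\delta/(1+3\delta),w^{*}}$, plus a constant piece that simplifies via $\bincdf{n,\delta,w^{*}}=1-\bincdf{n,1-\delta,n-w^{*}-1}$. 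Collecting everything reconstructs the stated coefficients $1+2^{-m-1}$ and $-2^{-m-1}$.

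For the asymptotic, the dominant term in both bounds is $\prob{W>\floor{m/2}}$, which by Berry-Esseen equals $\normalcdf{(n\delta-m/2)/\sqrt{n\delta(1-\delta)}}+O(1/\sqrt{n})$. The correction terms with prefactors $2^{m}((4-3\delta)/4)^{n}$ and $(1+3\delta)^{n}/2^{m+1}$ are controlled by a sharpened Cram\'er bound: at the critical window $m\approx 2n\delta$, the KL-divergence rate functions $D(\delta\|\delta/(4-3\delta))$ and $D(\delta\|4\delta/(1+3\delta))$ of the corresponding lower binomial tails cancel the exponential growth of the prefactors, so each correction contributes $O(1/\sqrt{n})$ by the local CLT. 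Setting the total error to $\epsilon$ and inverting using the local Lipschitz continuity of $\normalcdfinv{\cdot}$ at $\epsilon$ then gives $m=2n\delta-2\normalcdfinv{\epsilon}\sqrt{n\delta(1-\delta)}+O(1)$, and $r=1-m/n$ yields the stated expansion with error $O(1/n)$ for both achievability and converse. The main obstacle is precisely this last step: a crude bound only shows that the auxiliary terms are $O(1)$ at the critical $m$, so pairing the exponential prefactors with the local-CLT refinement of each binomial tail, and verifying that the KL rate functions do cancel the prefactor logarithms, is the key technical computation.
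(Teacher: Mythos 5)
Your proposal follows essentially the same route as the paper: conditioning on the erasure weight, the same conditional expressions for $\prob{J>2^m}$ and $\expect{\indicator{J\leq 2^m}(J-1)2^{-m}}$, the same exponential reparametrisations into tilted binomial CDFs, and Berry--Esseen plus inversion of $\normalcdfinv{\cdot}$ for the rate expansion. The one step you flag as the main obstacle is handled in the paper more simply than your local-CLT/Cram\'er sketch suggests: a ratio test shows each correction sum is dominated by a convergent geometric series anchored at its extreme term, and after absorbing the prefactor (which is $2^{m-2i}\leq 1$ resp.\ $2^{2i-m-1}\leq 1$ at that term) the extreme term is a single binomial probability with the original parameter $\delta$, hence $O(1/\sqrt{n})$ --- which is precisely the rate-function cancellation you anticipate.
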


\begin{proof}
For the converse bound on the error probability,
\begin{equation}
\prob{J>2^m}=\sum_v \prob{V=v}\prob{J>2^m \Big| V=v}
\end{equation}
and $\prob{J>2^m \Big| V=v}$ is zero when $2|v| \leq m$ and is $1-2^{m-2|v|}$ otherwise. This gives
\begin{equation}
\errorguesserrorprobconverse{p_{UV},r}=\expect{\indicator{2|V|>m}(1-2^{m-2|V|})} 
\end{equation}
Next, 
\begin{equation}
\expect{\indicator{2|V|>m}}=\bincdf{n,1-\delta,n-\floor{\frac{m}{2}}-1}
\end{equation}
and
\begin{multline}\label{eq:er_ch_pr_conv_term_2}
-\expect{\indicator{2|V|>m}2^{m-2|V|}} \\
=-2^m \sum_{i=0}^n {n \choose i} \delta^i 4^{-i} (1-\delta)^{n-i} \indicator{2i>m} \\
=-2^m \left(\frac{4-3\delta}{4}\right)^n \sum_{i=0}^n {n \choose i} \left(\frac{\delta}{4-3\delta}\right)^i\left(\frac{4-4\delta}{4-3\delta}\right)^{n-i}\indicator{2i>m} \\
=-2^m \left(\frac{4-3\delta}{4}\right)^n \bincdf{n,\frac{4-4\delta}{4-3\delta},n-\floor{\frac{m}{2}}-1}
\end{multline}

For the achievability bound on the error probability, 
\begin{equation}
\expect{\indicator{J\leq 2^m}(J-1)2^{-m}}
=\sum_v \prob{V=v} \expect{\indicator{J\leq 2^m}(J-1)2^{-m}\Big|V=v}
\end{equation}
and
\begin{equation}
\expect{\indicator{J\leq 2^m}(J-1)2^{-m}\Big|V=v}=
\begin{cases}
2^{2|v|-m-1}-2^{-m-1} & \text{if } 2|v| \leq m \\
2^{m-2|v|-1}-2^{-2|v|-1} & \text{if } 2|v| > m
\end{cases}
\end{equation}
This gives
\begin{multline}
\errorguesserrorprobachievability{p_{UV},r} \\
=\expect{\indicator{2|V|>m}\left(1-2^{m-2|V|-1}-2^{-2|V|-1}\right)}  \\
 + \expect{\indicator{2|V|\leq m}\left(2^{2|V|-m-1}-2^{-m-1}\right)} 
\end{multline}
Next,
\begin{multline}
\expect{\indicator{2|V|>m}\left(1-2^{m-2|V|-1}-2^{-2|V|-1}\right)}\\
=\bincdf{n,1-\delta,n-\floor{\frac{m}{2}}-1} 
- \frac{2^m+1}{2} \left(\frac{4-3\delta}{4}\right)^n \bincdf{n,\frac{4-4\delta}{4-3\delta},n-\floor{\frac{m}{2}}-1}
\end{multline}
and
\begin{multline}\label{eq:er_ch_er_pr_ach}
\expect{\indicator{2|V|\leq m}2^{2|V|-m-1}}\\
=2^{-m-1} \sum_{i=0}^n {n \choose i} \delta^i 4^i (1-\delta)^{n-i} \indicator{2i\leq m} \\
=2^{-m-1} (1+3\delta)^n \sum_{i=0}^n {n \choose i} \left(\frac{4\delta}{1+3\delta}\right)^i \left(\frac{1-\delta}{1+3\delta}\right)^{n-i} \indicator{2i \leq m} \\
=2^{-m-1}(1+3\delta)^n \bincdf{n,\frac{4\delta}{1+3\delta},\floor{\frac{m}{2}}}
\end{multline}
and
\begin{multline}
-2^{-m-1} \expect{\indicator{2|V|\leq m}} = -2^{-m-1} \bincdf{n,\delta,\floor{\frac{m}{2}}} \\
=-2^{-m-1}\left(1-\bincdf{n,1-\delta,n-\floor{\frac{m}{2}}-1}\right)
\end{multline}

For the converse bound on the rate, return to the expression from equation \eqref{eq:er_ch_pr_conv_term_2}:
\begin{equation}\label{eq:er_ch_pr_conv_sum}
\expect{\indicator{2|V|>m}2^{m-2|V|}} \\
=\sum_{2i>m} {n \choose i} \delta^i  (1-\delta)^{n-i} 2^{m-2i}
\end{equation}
The ratio of the $(i+1)$-st to the $i$-th term is
\begin{equation}
\frac{(n-i)\delta}{4(i+1)(1-\delta)}<\frac{(n-m/2)\delta}{4(m/2+1)(1-\delta)} < 1
\end{equation}
for $m>(2n\delta - 8(1-\delta))/(4-3\delta)$. Therefore, for $m$ in this range the sum in \eqref{eq:er_ch_pr_conv_sum}
is upper bounded by the first binomial probability times a convergent geometric series. Therefore, 
\begin{equation}
\errorguesserrorprobconverse{p_{UV},r} = \bincdf{n,1-\delta,n-\floor{\frac{m}{2}}-1} + O\parenth{\frac{1}{\sqrt{n}}}
\end{equation}
Next, the Berry-Esseen theorem \cite{berry1941accuracy,esseen1942liapounoff} gives
\begin{equation}
\errorguesserrorprobconverse{p_{UV},r} = \normalcdf{\frac{n\delta-\floor{m/2}-1}{\sqrt{n\delta(1-\delta)}}} + O\parenth{\frac{1}{\sqrt{n}}}
\end{equation}
Finally, Taylor expansion of $\normalcdfinv{\epsilon+O\parenth{1/\sqrt{n}}}$ shows that choosing 
\begin{equation}
m=2n\delta - 2\sqrt{n\delta(1-\delta)}\normalcdfinv{\epsilon}+O(1)
\end{equation}
suffices to ensure $\errorguesserrorprobconverse{p_{UV},r}>\epsilon$. 

For the achievability bound on the rate, 
\begin{multline}
\left(1+\frac{1}{2^{m+1}}\right)\bincdf{n,1-\delta,n-\floor{\frac{m}{2}}-1} - \frac{1}{2^{m+1}} \\
- \frac{2^m+1}{2} \left(\frac{4-3\delta}{4}\right)^n \bincdf{n,\frac{4-4\delta}{4-3\delta},n-\floor{\frac{m}{2}}-1}  \\ \leq \normalcdf{\frac{n\delta-\floor{m/2}-1}{\sqrt{n\delta(1-\delta)}}} + O\parenth{\frac{1}{\sqrt{n}}}
\end{multline}
similarly to the argument for the converse bound. Moreover, the remaining term from equation \eqref{eq:er_ch_er_pr_ach} is
\begin{equation}\label{eq:er_ch_er_pr_ach_remaining_term}
\expect{\indicator{2|V|\leq m}\left(2^{2|V|-m-1}\right)} =\sum_{2i\leq m} {n \choose i} \delta^i (1-\delta)^{n-i}2^{2i-m-1}
\end{equation}
The ratio of the $(i-1)$-st to the $i$-th term is 
\begin{equation}
\frac{i(1-\delta)}{4(n-i-1)\delta}\leq \frac{(m/2)(1-\delta)}{4\delta(n-m/2+1)} < 1
\end{equation}
when $m < 8\delta(n+1)/(1+3\delta)$. Therefore, for $m$ in this range, the sum in \eqref{eq:er_ch_er_pr_ach_remaining_term} is upper bounded by $O(1/\sqrt{n})$, again by the binomial probability times convergent geometric series argument. Thus, for $m < 8\delta(n+1)/(1+3\delta)$, 
\begin{equation}
\errorguesserrorprobachievability{p_{UV},r}\leq \normalcdf{\frac{n\delta-\floor{m/2}-1}{\sqrt{n\delta(1-\delta)}}} + O\parenth{\frac{1}{\sqrt{n}}}
\end{equation}
Therefore, choosing 
\begin{equation}
m=2n\delta - 2\sqrt{n\delta(1-\delta)}\normalcdfinv{\epsilon}+O(1)
\end{equation}
suffices to ensure $\errorguesserrorprobachievability{p_{UV},r}\leq \epsilon$. This completes the proof. 
\end{proof}

\subsection{The qubit depolarizing channel}\label{sec:bounds_for_depolarizing_channel}

As a second example, consider the qubit depolarizing channel \eqref{eq:depolarizing_channel}. Consider $n$ independent qubit depolarizing channels with parameter $\delta$. In this case, there is no side information. The distribution of $U$ is $\prob{U=u}=(\delta/3)^{|u|} (1-\delta)^{n-|u|}$, where $|u|$ is the number of qubits for which the corresponding two entries of $u$ are not $00$. Conditional on $|U|=i$, $J$ is a uniformly distributed integer between 
\begin{equation}
\left|\left\{u:|u|\leq i-1\right\}\right|+1 =4^n \bincdf{n,\frac{3}{4},i-1} +1
\end{equation}
and
\begin{equation}
\left|\left\{u:|u|\leq i\right\}\right| =4^n \bincdf{n,\frac{3}{4},i}
\end{equation}
where the notation for the binomial CDF from equation \eqref{eq:binomial_cdf} is used. 

These observations show that the cumulative distribution function of $J$ can be computed in terms of the CDF of the binomial.

\begin{lemma}
For every $n,p$, extend the binomial CDF $\bincdf{n,p,\cdot}$ from equation \eqref{eq:binomial_cdf} to a piecewise linear function whose graph connects the points $(-1,0)$, $(0,\bincdf{n,p,0})$, \dots, $(n,\bincdf{n,p,n})$. This extension is an increasing bijection that maps $[-1,n]$ to $[0,1]$; let $\bincdfinv{n,p,\cdot}$ be the inverse function. 

With this notation, the CDF of $J$ for $n$ independent depolarizing channels with parameter $\delta$ is 
\begin{equation}
\prob{J\leq j}=\bincdf{n,\delta,\bincdfinv{n,\frac{3}{4},\frac{j}{4^n}}}
\end{equation}
\end{lemma}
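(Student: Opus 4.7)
The plan is to compute $\prob{J\leq j}$ by conditioning on the weight $|U|$, using the fact that the sort order groups vectors into blocks of equal weight, and then to match the resulting expression to the claimed formula via the piecewise-linear extensions.

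First I would observe that for $n$ independent depolarizing channels with $\delta\leq 3/4$, $\prob{U=u}=(\delta/3)^{|u|}(1-\delta)^{n-|u|}$ depends only on $|u|$ and is nonincreasing in $|u|$, so the sort $\orderfunction$ lists all weight-$0$ vectors first, then all weight-$1$ vectors, and so on. The count $\sum_{i'=0}^{i}\binom{n}{i'}3^{i'}=4^n\bincdf{n,3/4,i}$ identifies the positions $4^n\bincdf{n,3/4,i-1}+1,\dots,4^n\bincdf{n,3/4,i}$ as the weight-$i$ block. Ties within a block are broken arbitrarily, and since all block members are equiprobable, conditional on $|U|=i$ the index $J$ is uniform over this block of size $\binom{n}{i}3^i$.

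Next, for an integer $j$, I would pick the unique $i$ with $4^n\bincdf{n,3/4,i-1}<j\leq 4^n\bincdf{n,3/4,i}$ and condition on whether $|U|\leq i-1$ or $|U|=i$ to get
\begin{equation}
\prob{J\leq j}=\bincdf{n,\delta,i-1}+\bigl(j-4^n\bincdf{n,3/4,i-1}\bigr)(\delta/3)^i(1-\delta)^{n-i}.
\end{equation}

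Finally I would match this to the claim. Setting $\lambda=\bigl(j-4^n\bincdf{n,3/4,i-1}\bigr)/\bigl(\binom{n}{i}3^i\bigr)\in(0,1]$ rewrites $j/4^n$ as the convex combination $(1-\lambda)\bincdf{n,3/4,i-1}+\lambda\bincdf{n,3/4,i}$, so by definition of the inverse piecewise-linear extension, $\bincdfinv{n,3/4,j/4^n}=(i-1)+\lambda$. Evaluating the piecewise-linear extension of $\bincdf{n,\delta,\cdot}$ at this point gives $\bincdf{n,\delta,i-1}+\lambda\binom{n}{i}\delta^i(1-\delta)^{n-i}$, and the identity $\lambda\binom{n}{i}3^i=j-4^n\bincdf{n,3/4,i-1}$ makes this equal to the expression above. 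The argument is essentially bookkeeping, so I do not expect a hard obstacle; the main care will be boundary cases such as $j=0$, where $\bincdfinv{n,3/4,0}=-1$ and $\bincdf{n,\delta,-1}=0=\prob{J\leq 0}$, and $j$ landing on an integer breakpoint, where the piecewise-linear interpolation reduces to the exact integer CDF values on both sides.
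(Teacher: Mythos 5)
Your proof is correct and follows essentially the same route as the paper: both exploit that $J$ is conditionally uniform on the weight-$i$ block so that its CDF agrees at integer points with the piecewise-linear interpolation through $(\mathcal{F}(n,3/4,i),\mathcal{F}(n,\delta,i))$, which factors as the claimed composition. You simply carry out the interpolation algebra explicitly (via $\lambda$) where the paper states the composition of the two piecewise-linear functions directly.
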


\begin{proof}
For each $i=0$, \dots, $n$, the events $J/4^n \leq \bincdf{n,3/4,i}$ and $|U|\leq i$ coincide. Therefore, the CDF of $J/4^n$ can be extended to a piecewise linear function $[0,1] \rightarrow [0,1]$ whose graph connects the points $(0,0)$, $(\bincdf{n,3/4,0},\bincdf{n,\delta,0})$, \dots, $(\bincdf{n,3/4,n},\bincdf{n,\delta,n})$. This can equivalently be expressed as the composition of two piecewise linear functions: $\bincdfinv{n,3/4,\cdot}$, whose graph connects the points $(0,-1)$, $(\bincdf{n,3/4,0},0)$, \dots, $(\bincdf{n,3/4,n},n)$, and $\bincdf{n,\delta,\cdot}$, whose graph connects the points $(-1,0)$, $(0,\bincdf{n,\delta,0})$, \dots, $(n,\bincdf{n,\delta,n})$.
\end{proof}

Knowing the CDF of $J$ in terms of the CDF of the binomial gives an efficient algorithm for computing the general bounds of Theorem \ref{thm:general_bounds} in the case of $n$ independent depolarizing channels. Moreover, the resulting bounds on the rate differ by at most $O(1/n)$ from a simple function of $n$, $\delta$ and $\epsilon$. 

\begin{theorem}\label{thm:depolarizing_channel_bounds}
Let $p_U$ correspond to $n$ indepedent depolarizing channels with parameter $\delta$. Then,
\begin{equation}
\errorguesserrorprobconverse{p_U,r} = \bincdf{n,1-\delta,n-1-\ell}
\end{equation}
and
\begin{multline}
\errorguesserrorprobachievability{p_U,r} = \left(1+\frac{1}{2^{m+1}}\right)\bincdf{n,1-\delta,n-1-\ell} - \frac{1}{2^{m+1}}\\
+2^{m-1}(1-\delta)^n\fracexp{\delta}{3-3\delta}{\floor{\ell}+1}\\
+\frac{(16-16\delta)^n}{2^{m+1}}\parenth{\frac{3-4\delta}{3-3\delta}}\sum_{i=0}^{\floor{\ell}}\fracexp{\delta}{3-3\delta}{i}\bincdf{n,\frac{3}{4},i}^2
\end{multline}
where $r=k/n$, $m=n-k$ and $\ell=\bincdfinv{n,\frac{3}{4},\frac{2^m}{4^n}}$. 

Moreover, for fixed $\delta$, $\epsilon$ and for sufficiently large $n$, 
\begin{multline}
\errorguessrate{p_U,\epsilon}=1-h(\delta)-\delta\log_2(3)
-\sqrt{\frac{\delta(1-\delta)}{n}} \normalcdfinv{\epsilon} \logpar{\frac{\delta}{3(1-\delta)}}\\
 + \frac{\log_2(n)}{2n}+O\parenth{\frac{1}{n}}
\end{multline}
where $h(\cdot)$ is the binary entropy. 
\end{theorem}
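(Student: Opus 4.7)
The plan is to follow the template of Theorem \ref{thm:erasure_channel_bounds}: combine the general bounds of Theorem \ref{thm:general_bounds} with the CDF formula for $J$ stated in the lemma just above, simplify the resulting expressions by conditioning on $|U|$, and extract the asymptotic rate via the Berry-Esseen theorem and Taylor expansion. The new ingredient, absent from the erasure case, is a Laplace-type expansion of $\log_2\bincdf{n,3/4,\ell}$, which is responsible for the $\log_2(n)/(2n)$ correction.

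For the exact converse expression, I would write $\errorguesserrorprobconverse{p_U,r}=\prob{J>2^m}=1-\bincdf{n,\delta,\ell}$ using the CDF formula at $j=2^m$, and then apply the complementation identity $1-\bincdf{n,p,k}=\bincdf{n,1-p,n-1-k}$, which is valid for integer $k$ and extends to the piecewise-linear interpolation because both sides are linear between integer nodes.

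For the exact achievability expression, the main step is to evaluate $\expect{\indicator{J\le 2^m}(J-1)}$ by conditioning on $|U|=i$. Here $|U|$ is $Binomial(n,\delta)$, and conditional on $|U|=i$ the random variable $J$ is uniform on the integer range $\{N_{i-1}+1,\dots,N_i\}$, with $N_i=4^n\bincdf{n,3/4,i}$ and $N_i-N_{i-1}=\binom{n}{i}3^i$. I would split the sum over $i$ into three regimes: the fully-included regime $i\le\floor{\ell}$, where the conditional expectation of $J-1$ equals $(N_{i-1}+N_i)/2-1/2$; the boundary $i=\floor{\ell}+1$, where only $J\in\{N_{i-1}+1,\dots,2^m\}$ contributes, with probability $(2^m-N_{i-1})/(N_i-N_{i-1})$ and conditional expectation $(N_{i-1}+2^m)/2-1/2$; and the excluded regime $i>\floor{\ell}+1$. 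Rewriting $\binom{n}{i}\delta^i(1-\delta)^{n-i}N_i$ and $\binom{n}{i}^2\delta^i(1-\delta)^{n-i}3^i$ in terms of the weight $(1-\delta)^n(\delta/(3-3\delta))^i$ and the binomial CDF $\bincdf{n,3/4,i}$, and then telescoping the boundary term against the last term of the full sum, would collapse everything into the two stated closed-form corrections.

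For the asymptotic rate, the first step is to check that the converse and achievability error bounds have the same leading-order behavior $\normalcdf{(n\delta-\ell-1)/\sqrt{n\delta(1-\delta)}}+O(1/\sqrt{n})$. The converse side follows directly from the Berry-Esseen theorem applied to $\bincdf{n,1-\delta,n-1-\ell}$, and a geometric-series-tail estimate analogous to the one in the proof of Theorem \ref{thm:erasure_channel_bounds} controls the achievability correction terms in the relevant range of $m$. Inverting via Taylor expansion of $\normalcdfinv{\cdot}$ yields $\ell=n\delta-\sqrt{n\delta(1-\delta)}\normalcdfinv{\epsilon}+O(1)$. The main obstacle is then to translate $\ell$ back into $m$ through $m=2n+\log_2\bincdf{n,3/4,\ell}$. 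Since $\delta<3/4$, the lower tail $\bincdf{n,3/4,n\delta}$ is dominated by its largest term with convergent geometric ratio $\delta/(3(1-\delta))<1$, so a Stirling expansion of $\binom{n}{\lfloor n\delta\rfloor}(3/4)^{\lfloor n\delta\rfloor}(1/4)^{n-\lfloor n\delta\rfloor}$ gives $\log_2\bincdf{n,3/4,n\delta}=-nD(\delta\|3/4)-(1/2)\log_2(n)+O(1)$ with $D(\delta\|3/4)=2-h(\delta)-\delta\log_2(3)$; this produces the constant term $1-h(\delta)-\delta\log_2(3)$ and the $\log_2(n)/(2n)$ term. The derivative of $\log_2\bincdf{n,3/4,\ell}$ at $\ell=n\delta$ equals $\logpar{3(1-\delta)/\delta}$ (the log of the same geometric ratio), and multiplying by $(\ell-n\delta)/n$ gives exactly the advertised $-\sqrt{\delta(1-\delta)/n}\,\normalcdfinv{\epsilon}\logpar{\delta/(3(1-\delta))}$ second-order term.
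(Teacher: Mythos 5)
Your proposal follows essentially the same route as the paper's proof: the same CDF identity for the converse, the same conditioning on $|U|=i$ with a telescoping plus boundary-term split for the exact achievability expression, and the same Berry--Esseen argument combined with a largest-term-times-convergent-geometric-series expansion of $\log_2\bincdf{n,\frac{3}{4},\ell}$ and the identity $2-\relent{\delta}{\frac{3}{4}}=h(\delta)+\delta\log_2(3)$ for the rate asymptotics. The only quibble is a sign-bookkeeping slip in your final sentence --- the product of the derivative $\logpar{3(1-\delta)/\delta}$ with $(\ell-n\delta)/n$ is the second-order contribution to $m/n$, and the rate term is its negative --- but this washes out once the computation is written carefully.
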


\begin{proof}
For the converse bound on the error probability, 
\begin{multline}
\prob{J>2^m}=1-\prob{J\leq 2^m}=1-\bincdf{n,\delta,\bincdfinv{n,\frac{3}{4},\frac{2^m}{4^n}}}\\=1-\bincdf{n,\delta,\ell}=\bincdf{n,1-\delta,n-1-\ell}
\end{multline}

For the achievability bound on the error probability, 
\begin{multline}
\prob{J>2^m}+\expect{\indicator{J\leq 2^m} (J-1)2^{-m}}\\
=\prob{J>2^m} - \frac{1}{2^{m+1}}\prob{J\leq 2^m} + \frac{1}{2^m} \expect{\indicator{J\leq 2^m}\left(J-\frac{1}{2}\right)}\\
=\left(1+\frac{1}{2^{m+1}}\right)\bincdf{n,1-\delta,n-1-\ell}-\frac{1}{2^{m+1}} 
+\frac{1}{2^m}\expect{\indicator{J\leq 2^m}\left(J-\frac{1}{2}\right)}
\end{multline}
Next, partition the event $J\leq 2^m$ into the disjoint events
\begin{equation}
4^n \bincdf{n,\frac{3}{4},i-1} < J  \leq 4^n \bincdf{n,\frac{3}{4},i}, \;\; i=0,\dots,\floor{\ell}
\end{equation}
and 
\begin{equation}
4^n\bincdf{n,\frac{3}{4},\floor{\ell}} < J \leq 4^n\bincdf{n,\frac{3}{4},\ell}=2^m
\end{equation}
and note that for $i=0$, \dots, $\floor{\ell}$, 
\begin{multline}
\expect{\indicator{4^n\bincdf{n,\frac{3}{4},i-1}<J\leq 4^n \bincdf{n,\frac{3}{4},i}}\left(J-\frac{1}{2}\right)}\\
=\prob{4^n\bincdf{n,\frac{3}{4},i-1}<J\leq 4^n\bincdf{n,\frac{3}{4},i}} \\
* \expect{J-\frac{1}{2}\Bigg|4^n\bincdf{n,\frac{3}{4},i-1}<J\leq 4^n\bincdf{n,\frac{3}{4},i}}\\
= \parenth{4^n\bincdf{n,\frac{3}{4},i}-4^n\bincdf{n,\frac{3}{4},i-1}} \\
* \fracexp{\delta}{3}{i}(1-\delta)^{n-i}
\frac{4^n\bincdf{n,3/4,i-1}+4^n\bincdf{n,3/4,i}}{2} \\
=\frac{(16-16\delta)^n}{2} \fracexp{\delta}{3-3\delta}{i} \parenth{\bincdf{n,\frac{3}{4},i}^2-\bincdf{n,\frac{3}{4},i-1}^2}
\end{multline}
and, similarly, 
\begin{multline}
\expect{\indicator{4^n\bincdf{n,\frac{3}{4},\floor{\ell}}<J\leq 4^n\bincdf{n,\frac{3}{4},\ell}}\parenth{J-\frac{1}{2}}}\\
=\frac{(16-16\delta)^n}{2} \fracexp{\delta}{3-3\delta}{\floor{\ell}+1} \parenth{\bincdf{n,\frac{3}{4},\ell}^2-\bincdf{n,\frac{3}{4},\floor{\ell}}^2}
\end{multline}
Rearrange slightly to conclude that 
\begin{multline}\label{eq:dep_ch_ach_correction}
\frac{1}{2^m}\expect{\indicator{J\leq 2^m}\parenth{J-\frac{1}{2}}} \\= \frac{(16-16\delta)^n}{2^{m+1}} \fracexp{\delta}{3-3\delta}{\floor{\ell}+1}\bincdf{n,\frac{3}{4},\ell}^2 \\
+ \frac{(16-16\delta)^n}{2^{m+1}} \parenth{\frac{3-4\delta}{3-3\delta}} \sum_{i=0}^{\floor{\ell}}  \fracexp{\delta}{3-3\delta}{i}\bincdf{n,\frac{3}{4},i}^2
\end{multline}
and use $\bincdf{n,3/4,\ell}=2^m/4^n$ in the first term.

For the converse bound on the rate, note that the Berry-Esseen theorem implies
\begin{equation}
\errorguesserrorprobconverse{p_U,r} = \normalcdf{\frac{n\delta-1-\ell}{\sqrt{n\delta(1-\delta)}}} + O\parenth{\frac{1}{\sqrt{n}}}
\end{equation}
Next, Taylor expansion of $\normalcdfinv{\epsilon+O\parenth{1/\sqrt{n}}}$ shows that choosing 
\begin{equation}
\ell = n\delta - \sqrt{n\delta(1-\delta)}\normalcdfinv{\epsilon}+O(1)
\end{equation}
suffices to ensure $\errorguesserrorprobconverse{p_{U},r}>\epsilon$. According to Lemma \ref{lemma:estimates_for_dep_ch} below, this choice of $\ell$ corresponds to 
\begin{equation}
m=n(h(\delta)+\delta\log_2(3))+\sqrt{n\delta(1-\delta)}\normalcdfinv{\epsilon}\logpar{\frac{\delta}{3(1-\delta)}}
-\frac{1}{2}\log_2(n) + O(1)
\end{equation}

For the achievability bound on the rate, \eqref{eq:upper_bound_on_dep_ch_ach_correction} in Lemma \ref{lemma:estimates_for_dep_ch} combined with the Berry-Esseen theorem and Taylor expansion of $\normalcdfinv{\epsilon+O(1/\sqrt{n})}$ imply that choosing 
\begin{equation}
\ell = n\delta - \sqrt{n\delta(1-\delta)}\normalcdfinv{\epsilon}+O(1)
\end{equation}
suffices to ensure that 
\begin{multline}
\errorguesserrorprobachievability{p_U,r} \\ =\prob{J>2^m} - \frac{1}{2^{m+1}}\prob{J\leq 2^m} + \frac{1}{2^m} \expect{\indicator{J\leq 2^m}\left(J-\frac{1}{2}\right)}\\
\leq \prob{J>2^m}  + \frac{1}{2^m} \expect{\indicator{J\leq 2^m}\left(J-\frac{1}{2}\right)}\\
\leq \bincdf{n,1-\delta,n-1-\ell} +O\parenth{\frac{1}{\sqrt{n}}} \leq \epsilon
\end{multline}
As before, Lemma \ref{lemma:estimates_for_dep_ch} implies that this choice of $\ell$ corresponds to 
\begin{equation}
m=n(h(\delta)+\delta\log_2(3))+\sqrt{n\delta(1-\delta)}\normalcdfinv{\epsilon}\logpar{\frac{\delta}{3(1-\delta)}}
-\frac{1}{2}\log_2(n) + O(1)
\end{equation}
This completes the proof. 
\end{proof}

The next Lemma establishes estimates that are used in the proof of Theorem \ref{thm:depolarizing_channel_bounds}.

\begin{lemma}\label{lemma:estimates_for_dep_ch}
For fixed $\delta$, $\epsilon$ and for sufficiently large $n$, if
\begin{equation}
\bincdf{n,\frac{3}{4},\ell}=\frac{2^m}{4^n}
\end{equation}
and
\begin{equation}
\ell = n\delta - \sqrt{n\delta(1-\delta)}\normalcdfinv{\epsilon}+O(1)
\end{equation}
then
\begin{equation}\label{eq:estimate_for_m}
m=n(h(\delta)+\delta\log_2(3))+\sqrt{n\delta(1-\delta)}\normalcdfinv{\epsilon}\logpar{\frac{\delta}{3(1-\delta)}}\\
-\frac{1}{2}\log_2(n) + O(1)
\end{equation}

Moreover, 
\begin{equation}\label{eq:upper_bound_on_dep_ch_ach_correction}
\frac{1}{2^m}\expect{\indicator{J\leq 2^m}\parenth{J-\frac{1}{2}}} \leq O\parenth{\frac{1}{\sqrt{n}}}
\end{equation}
where the implied constant does not depend on $\ell$.
\end{lemma}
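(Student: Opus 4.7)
The plan is to first estimate $m$ from the equation $\bincdf{n,3/4,\ell}=2^m/4^n$, and then reuse this estimate to bound the expectation in \eqref{eq:upper_bound_on_dep_ch_ach_correction}. For the estimate of $m$: since $\delta < 3/4$ (required for the rate to be positive) and $\ell = n\delta+O(\sqrt{n})$ lies strictly below the mean $3n/4$ of $Binomial(n,3/4)$, the ratio of consecutive pmf values $p_{j-1}/p_j = j/(3(n-j+1))$ is bounded above by a constant strictly less than $1$ uniformly for $j \leq \ell$. Hence $\bincdf{n,3/4,\ell}$ equals its largest summand $p_{\floor{\ell}} = \binom{n}{\floor{\ell}}(3/4)^{\floor{\ell}}(1/4)^{n-\floor{\ell}}$ up to a multiplicative constant depending only on $\delta$. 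Applying Stirling's formula gives $\log_2 \binom{n}{\floor{\ell}}= n\,h(\ell/n) - \frac{1}{2}\log_2 n + O(1)$, so $m = 2n + \log_2 \bincdf{n,3/4,\ell} = n\,h(\ell/n) + \ell\log_2 3 - \frac{1}{2}\log_2 n + O(1)$.

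Next I would Taylor expand $h$ around $\delta$. Writing $s := \ell - n\delta = -\sqrt{n\delta(1-\delta)}\normalcdfinv{\epsilon} + O(1)$ and using $h'(\delta) = \log_2((1-\delta)/\delta)$ together with the fact that $s^2/n = O(1)$ gives $n\,h(\ell/n) = n\,h(\delta) + s\log_2((1-\delta)/\delta) + O(1)$. Combining with $\ell\log_2 3 = n\delta\log_2 3 + s\log_2 3$ and substituting the value of $s$ collects the terms into the form \eqref{eq:estimate_for_m}.

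For the bound \eqref{eq:upper_bound_on_dep_ch_ach_correction}, I would start from \eqref{eq:dep_ch_ach_correction}, which writes $\frac{1}{2^m}\expect{\indicator{J\leq 2^m}(J-1/2)}$ as a leading piece proportional to $(\delta/(3(1-\delta)))^{\floor{\ell}+1}\bincdf{n,3/4,\ell}^2$ plus a sum piece involving $\sum_{i=0}^{\floor{\ell}}(\delta/(3(1-\delta)))^i \bincdf{n,3/4,i}^2$. Substituting $\bincdf{n,3/4,\ell} = 2^m/4^n$ in the leading piece reduces it to $2^{m-1}(1-\delta)^n(\delta/(3(1-\delta)))^{\floor{\ell}+1}$. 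Taking $\log_2$ and applying the estimate of $m$, the $n$-linear terms cancel via the identity $h(\delta) + \delta\log_2 3 + \log_2(1-\delta) + \delta\log_2(\delta/(3(1-\delta))) = 0$, while the $\sqrt{n}$ contribution of $m$ is cancelled exactly by $s\log_2(\delta/(3(1-\delta)))$ coming from the exponent $\floor{\ell}$. The net logarithm is $-\frac{1}{2}\log_2 n + O(1)$, so the leading piece is $O(1/\sqrt{n})$.

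For the sum piece, the strategy is to show that its summand grows geometrically in $i$: its ratio is $(\delta/(3(1-\delta)))(\bincdf{n,3/4,i+1}/\bincdf{n,3/4,i})^2$, which by the same "CDF $\approx$ largest term" approximation is approximately $(\delta/(3(1-\delta)))(3(n-i)/(i+1))^2 \geq 3(1-\delta)/\delta > 1$ for $i \leq \ell$. Hence the sum is dominated by its last term ($i=\floor{\ell}$) times a convergent geometric series with ratio depending only on $\delta$, which reduces the sum piece to a constant multiple of the leading piece and hence also $O(1/\sqrt{n})$. The main technical care is in making the "CDF $\approx$ largest term" estimate, and the $O(1)$ constants in Stirling and Taylor expansion, uniform in $\ell$ over its $O(1)$ window; but this is automatic because the key geometric ratio $\delta/(3(1-\delta))$ is a function of $\delta$ alone that is strictly less than $1$, so all hidden constants depend only on $\delta$ and not on $\ell$.
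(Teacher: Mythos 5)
Your proposal is correct and follows essentially the same architecture as the paper's proof: bound the binomial CDF above and below by its largest term using the geometric decay of consecutive pmf ratios, apply the Stirling/entropy estimate for $\log_2\binom{n}{\ell}$ plus a Taylor expansion to get \eqref{eq:estimate_for_m}, and then control the sum in \eqref{eq:dep_ch_ach_correction} by geometric domination from its last term. The one place you diverge is the final bound on the leading piece: the paper avoids invoking the asymptotic estimate for $m$ there, instead substituting $2^m/4^n=\bincdf{n,\frac{3}{4},\ell}\le\frac{3}{2}\binom{n}{\ceiling{\ell}}\parenth{\frac{3}{4}}^{\ceiling{\ell}}\parenth{\frac{1}{4}}^{n-\ceiling{\ell}}$ a second time so that everything collapses to a constant times a single $Binomial(n,\delta)$ pmf value, which is $O(1/\sqrt{n})$ uniformly in $\ell$ with no further argument. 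Your route --- taking $\log_2$, plugging in the estimate for $m$, and checking that the $n$-linear terms cancel via $h(\delta)+\delta\logpar{3}+\logpar{1-\delta}+\delta\logpar{\frac{\delta}{3(1-\delta)}}=0$ and the $\sqrt{n}$ terms cancel exactly --- is also valid (I checked the identity and the cancellation), but it makes the claimed uniformity of the implied constant in \eqref{eq:upper_bound_on_dep_ch_ach_correction} contingent on the uniformity of all the $O(1)$ errors in \eqref{eq:estimate_for_m} over the $O(1)$ window of $\ell$; you address this correctly, though the paper's reduction to a bare binomial probability makes that point for free.
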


\begin{proof}
\textbf{Claim 1:}
\begin{equation}\label{eq:estimates_for_dep_ch_claim1}
{n \choose \floor{\ell}}\fracexp{3}{4}{\floor{\ell}}\fracexp{1}{4}{n-\floor{\ell}} \leq \bincdf{n,\frac{3}{4},\ell} \leq \frac{3}{2}{n \choose \ceiling{\ell}}\fracexp{3}{4}{\ceiling{\ell}}\fracexp{1}{4}{n-\ceiling{\ell}}
\end{equation}
\textbf{Proof of Claim 1:} The lower bound follows immediately. For the upper bound, note that
\begin{equation}
\bincdf{n,\frac{3}{4},\ell}\leq \sum_{i=0}^{\ceiling{\ell}} {n \choose i}\fracexp{3}{4}{i}\fracexp{1}{4}{n-i}
\end{equation}
and the ratio of the $(i-1)$-st to the $i$-th term in the sum is 
\begin{equation}\label{eq:ratio_of_bin_prob}
\frac{{n \choose i-1}\fracexp{3}{4}{i-1}\fracexp{1}{4}{n-i+1}}{{n \choose i}\fracexp{3}{4}{i}\fracexp{1}{4}{n-i}}=\frac{i}{3(n-i+1)}\leq\frac{2\delta}{3(1-2\delta)} \leq \frac{1}{3}
\end{equation}
for the given range of $\ell$ and for $\delta \leq 1/4$ (which can be assumed without loss of generality because the hashing bound is negative already at $\delta=1/5$). Therefore, the sum can be upper bounded by the last term times a convergent geometric series. 

\bigskip
\textbf{Claim 2:}
\begin{equation}\label{eq:estimates_for_dep_ch_claim2}
\log_2 \bincdf{n,\frac{3}{4},\ell} = -n\relent{\frac{\ell}{n}}{\frac{3}{4}}-\frac{1}{2}\log_2(n) + O(1)
\end{equation}
where $\relent{\cdot}{\cdot}$ is the relative entropy. 

\textbf{Proof of Claim 2:} Follows from \eqref{eq:estimates_for_dep_ch_claim1} and the estimate \cite[Chapter 10, Lemma 7]{macwilliams1977theory}:
\begin{equation}
\log_2 {n \choose \lambda n} = nh(\lambda) -\frac{1}{2} \log_2(n) - \frac{1}{2}\log_2(\lambda(1-\lambda)) + O(1)
\end{equation}
Note also that for the given range of $\ell$, the $\log_2(\lambda(1-\lambda))$ can be absorbed in the $O(1)$. Similarly, the differences between $\ell$, $\floor{\ell}$ and $\ceiling{\ell}$ amount to no more than $O(1)$. 

\bigskip
\textbf{Claim 3:}
\begin{equation}\label{eq:estimates_for_dep_ch_claim3}
\relent{\frac{\ell}{n}}{\frac{3}{4}} = \relent{\delta}{\frac{3}{4}}-\sqrt{\frac{\delta(1-\delta)}{n}}\normalcdfinv{\epsilon}\logpar{\frac{\delta}{3(1-\delta)}} + O\parenth{\frac{1}{n}}
\end{equation}

\textbf{Proof of Claim 3:} Follows from the Taylor expansion
\begin{equation}
\relent{x}{y}=\relent{x_0}{y}+(x-x_0)\logpar{\frac{x_0(1-y)}{(1-x_0)y}} + O\parenth{(x-x_0)^2}
\end{equation}
and the given estimate for $\ell$. 

\bigskip
\textbf{Claim 4:} \eqref{eq:estimates_for_dep_ch_claim2} and \eqref{eq:estimates_for_dep_ch_claim3} and the identity $2-\relent{\delta}{3/4}=h(\delta)+\delta\log_2(3)$ prove \eqref{eq:estimate_for_m}. 

\textbf{Proof of Claim 4:} follows by basic rearrangement. 

\bigskip
\textbf{Claim 5:}
\begin{equation}\label{eq:estimates_for_dep_ch_claim5}
\sum_{i=0}^{\floor{\ell}}  \fracexp{\delta}{3-3\delta}{i}\bincdf{n,\frac{3}{4},i}^2 \leq \frac{27}{11}\fracexp{\delta}{3-3\delta}{\floor{\ell}}\bincdf{n,\frac{3}{4},\floor{\ell}}^2
\end{equation}

\textbf{Proof of Claim 5:} From \eqref{eq:ratio_of_bin_prob} deduce that
\begin{equation}
\frac{\bincdf{n,3/4,i-1}}{\bincdf{n,3/4,i}} \leq \frac{2\delta}{3(1-2\delta)}
\end{equation}
then bound the sum by the last term times a convergent geometric series with ratio
\begin{equation}
\frac{3(1-\delta)}{\delta}\frac{4\delta^2}{9(1-2\delta)^2}\leq \frac{16}{27}
\end{equation}
for $\delta \leq 1/5$ (which can be assumed without loss of generality because the hashing bound is negative at $\delta=1/5$). 

\bigskip
\textbf{Claim 6:}
\begin{multline}
\frac{1}{2^m}\expect{\indicator{J\leq 2^m}\parenth{J-\frac{1}{2}}}\\ \leq \frac{(16-16\delta)^n}{2^{m+1}}\fracexp{\delta}{3-3\delta}{\floor{\ell}+1}\bincdf{n,\frac{3}{4},\ell}^2\parenth{1+\frac{27}{11}\cdot\frac{3-4\delta}{\delta}} \\
\leq\parenth{\frac{1}{2}+\frac{27(3-4\delta)}{22\delta}}2^m (1-\delta)^{n-\floor{\ell}-1}\fracexp{\delta}{3}{\floor{\ell}+1}\\
\leq\parenth{\frac{1}{2}+\frac{27(3-4\delta)}{22\delta}} \frac{2^m}{4^n}\frac{{n\choose \floor{\ell}+1}\delta^{\floor{\ell}+1}(1-\delta)^{n-\floor{\ell}-1}}{{n\choose \floor{\ell}+1}\fracexp{3}{4}{\floor{\ell}+1}\fracexp{1}{4}{n-\floor{\ell}-1}}\\
\leq \parenth{\frac{1}{2}+\frac{27(3-4\delta)}{22\delta}} \frac{3}{2} {n\choose \floor{\ell}+1}\delta^{\floor{\ell}+1}(1-\delta)^{n-\floor{\ell}-1}\\
\leq O\parenth{\frac{1}{\sqrt{n}}}
\end{multline}
where the implied constant in the last line does not depend on $\ell$. This proves \eqref{eq:upper_bound_on_dep_ch_ach_correction}. 

\textbf{Proof of Claim 6:} The first step follows from \eqref{eq:estimates_for_dep_ch_claim5} and \eqref{eq:dep_ch_ach_correction}. The second and third step follow by rearrangement. The fourth step follows from \eqref{eq:estimates_for_dep_ch_claim1}. The last step follows from the upper bound $O(1/\sqrt{n})$ on individual binomial probabilities. 
\end{proof}

\subsection{Finite blocklength bounds for $n$ independent identical Pauli channels}\label{sec:n_independent_identical_channels}

This subsection shows that the achievability and converse bounds of Theorem \ref{thm:general_bounds} can be computed in polynomial time in the case of $n$ independent identical channels. After some preliminaries, the result is stated in Theorem \ref{thm:computing_the_bounds} and proved in a sequence of Lemmas. 

\subsubsection{Preliminaries}\label{subsec:background_for_n_independent_identical_channels}

\paragraph{A Lemma about orbits:}

\begin{lemma}\label{lemma:stabilizer_subgroup_over_an_orbit}
Let $G$ be a finite group that acts on a finite set $T$. For $t \in T$, let $G_t=\{g \in G : gt=t\}$ be the subgroup that stabilizes $t$. If $t'=gt$ for some $g \in G$, then $G_{t'} = g G_t g^{-1}$
\end{lemma}

\begin{proof}
$h \in G_{t}$ is equivalent to $ht=t$, which is equivalent to $ghg^{-1}gt=gt$, which is equivalent to $ghg^{-1}t'=t'$, which is equivalent to $ghg^{-1} \in G_{t'}$. 
\end{proof}

\begin{lemma}\label{lemma:orbits_of_the_diagonal_action}
Let $G$ be a finite group that acts on finite sets $S,T$. Consider the action of $G$ on $S \times T$ given by $g(s,t) = (gs,gt)$. Take any $(s,t) \in S \times T$. Partition the orbit $G(s,t)$ of $(s,t)$ according to the second element:
\begin{equation}
G(s,t) = \cup_{t' \in Gt} \parenth{\parenth{S \times \{t'\}} \cap G(s,t)}
\end{equation}
Then, the size of all sets in the partition is the size of the orbit $G_t s$ of $s$ under the subgroup that stabilizes $t$. 
\end{lemma}

\begin{proof}
The elements of the orbit $G(s,t)$ that have second element $t$ have the form $(gs,t)$ with $g \in G_t$. Therefore, $|(S\times\{t\})\cap G(s,t)| = |G_ts|$. 

Now, take any other $(s',t')=g(s,t) \in G(s,t)$. Similarly to before, deduce $|(S\times\{t'\}) \cap G(s',t')| = |G_{t'} s'|$. 

Next, $G(s',t')=G(s,t)$. Moreover, Lemma \ref{lemma:stabilizer_subgroup_over_an_orbit} gives $|G_{t'} s'| = |g G_t g^{-1} gs| = |g G_t s| = |G_t s|$. This completes the proof. 
\end{proof}

\paragraph{Types and type classes:}

For a finite set $S$, let $\simplex{S}$ be the set of probability mass functions on $S$. For $n \in \mathbb{N}$, let $\discretesimplex{S}{n}$ be the discrete subset of $\simplex{S}$ where each element has probability of the form $k/n, k \in \{0,1,\dots,n\}$. The number of elements of $\discretesimplex{S}{n}$ is ${n+|S|-1 \choose |S|-1}$.

The type of an $n$-tuple $s=(s_1,\dots, s_n) \in S^n$ is the probability mass function $\type{s} \in \discretesimplex{S}{n}$ that assigns to each element $s' \in S$ the probability $\type{s}(s')=\sum_{i=1}^n \indicator{s_i=s'} /n$. The type class of a probability mass function $q \in \discretesimplex{S}{n}$ is the subset $\typeclass{q} \subset S^n$ consisting of all $n$ tuples whose type is $q$. The number of elements of the type class $\typeclass{q}$ is $n!/\left(\prod_{s' \in S} (nq(s'))!\right)$. The collection of all type classes $\{\typeclass{q}:q \in \discretesimplex{S}{n}\}$ is a partition of $S^n$. 

\subsubsection{Computing the bounds}\label{subsec:computing_the_bounds}

\begin{theorem}\label{thm:computing_the_bounds}
Take a subset $\supportu$ of $\F^{2a}$. Take a finite set $\supportv$. Take $p \in \simplex{\supportu\times\supportv}$; $p$ specifies a Pauli channel with side information $\paulich{p}$ acting on $a$ qubits (as in equation \eqref{eq:generic_pauli_channel}). 

Let $\tilde{p}_{UV}(u,v) = \prod_{i=1}^n p(u_i,v_i)$; $\tilde{p}$ specifies the Pauli channel with side information $\paulich{\tilde{p}}=\paulich{p}^{\otimes n}$ acting on $N=na$ qubits. Then:
\begin{enumerate}
\item $\errorguesserrorprobconverse{\tilde{p}_{UV},r}$ and $\errorguesserrorprobachievability{\tilde{p}_{UV},r}$ for all $r=k/N$, $k=1,\dots,N$ can be computed in time $O\parenth{n^{|\supportu||\supportv|}}$. 
\item The marginal cost of computing $\errorguessrateconverse{\tilde{p}_{UV},\epsilon}$ and $\errorguessrateachievability{\tilde{p}_{UV},\epsilon}$ for a given $\epsilon >0$ is $O\parenth{\log(n)}$.
\end{enumerate}
\end{theorem}

The proof of Theorem \ref{thm:computing_the_bounds} has a number of steps. 

First, given the points where the CDF of $J$ changes slope, the bounds of Theorem \ref{thm:general_bounds} can be computed as follows: 

\begin{lemma}\label{lemma:computing_the_bounds_from_points}
Let the CDF of $J$ be a piecewise linear function whose graph connects the points $(\alpha_0,\beta_0)=(0,0)$, $(\alpha_i,\beta_i)$ for $i=1, \dots, L$, and $(\alpha_{L+1},\beta_{L+1})=(4^N,1)$, and suppose the points $(\alpha_i,\beta_i), i=1,\dots,L$ have already been computed. Then:
\begin{enumerate}
\item $\errorguesserrorprobconverse{\tilde{p}_{UV},r}$ for all $r=k/N$, $k=1,\dots,N$ can be computed in time $O\parenth{N\log L}$. 
\item $\errorguesserrorprobachievability{\tilde{p}_{UV},r}$ for all $r=k/N$, $k=1,\dots,N$ can be computed in time $O\parenth{N L}$. 
\item The marginal cost of computing $\errorguessrateconverse{\tilde{p}_{UV},\epsilon}$ for a given $\epsilon >0$ is $O\parenth{\log(N)}$.
\item The marginal cost of computing $\errorguessrateachievability{\tilde{p}_{UV},\epsilon}$ for a given $\epsilon >0$ is $O\parenth{\log(N)}$.
\end{enumerate}
\end{lemma}

\begin{proof}
\emph{Part 1:} For each $k$, let $m=N-k$. First, find $i$ such that $\alpha_i \leq 2^m < \alpha_{i+1}$; this takes time $O\parenth{\log L}$ using binary search. Then, compute
\begin{equation}
\prob{J > 2^m} = 1- \prob{J \leq 2^m} = 1 - \parenth{\beta_i + (2^m - \alpha_i) \frac{\beta_{i+1}-\beta_i}{\alpha_{i+1}-\alpha_i}}
\end{equation}
This takes time $O(1)$. 

\emph{Part 2:} Continuing from Part 2, note that
\begin{multline}
\expect{\indicator{J\leq 2^m} (J-1) 2^{-m}} \\
= \expect{\indicator{\alpha_i < J \leq 2^m}(J-1)2^{-m}}+ \sum_{i'=0}^{i-1} \expect{\indicator{\alpha_{i'} < J \leq \alpha_{i'+1}} (J-1) 2^{-m}} \\
=\prob{\alpha_i < J \leq 2^m} \expect{(J-1)2^{-m} | \alpha_i < J \leq 2^m} \\
+ \sum_{i'=0}^{i-1} \prob{\alpha_{i'} < J \leq \alpha_{i'+1}} \expect{(J-1)2^{-m}|\alpha_{i'} < J \leq \alpha_{i'+1}} \\
= (2^m - \alpha_i)\frac{\beta_{i+1}-\beta_i}{\alpha_{i+1}-\alpha_i} \frac{\alpha_i + 2^m-1}{2^{m+1}} + \sum_{i'=0}^{i-1} (\beta_{i'+1}-\beta_{i'}) \frac{\alpha_{i'} + \alpha_{i'+1} -1}{2^{m+1}}
\end{multline}
and this can be computed in time $O\parenth{L}$. 

\emph{Part 3:} Given $\errorguesserrorprobconverse{p_U,r}$ for all $r$, the smallest rate such that $$\errorguesserrorprobconverse{p_U,r}>\epsilon$$ can be found in time $O\parenth{\log N}$ by binary search. 

\emph{Part 4:} Given $\errorguesserrorprobachievability{p_U,r}$ for all $r$, the largest rate such that $$\errorguesserrorprobconverse{p_U,r} \leq \epsilon$$ can be found in time $O\parenth{\log N}$ by binary search. 
\end{proof}

Therefore, it suffices to compute the points where the CDF of $J$ changes slope. Equivalently, it suffices to compute the $x$ coordinates and the slope of each line segment: 

\begin{lemma}\label{lemma:points_from_interval_slope}
Let $0=\alpha_0 < \alpha_1 < \dots < \alpha_L < \alpha_{L+1}=4^N$ and $\gamma_1, \gamma_2, \dots, \gamma_{L+1}$ be such that the CDF of $J$ has slope $\gamma_i$ on the interval $(\alpha_{i-1}, \alpha_i)$. Then, $\beta_0, \dots, \beta_{L+1}$ as in Lemma \ref{lemma:computing_the_bounds_from_points} can be computed in time $O(L)$. 
\end{lemma}

\begin{proof}
Let $\beta_0 = 0$ and for $i=1,\dots L+1$, let $\beta_i = \beta_{i-1} + \gamma_i (\alpha_i - \alpha_{i-1})$. 
\end{proof}

The representation of the CDF of $J$ via the pairs $\alpha_i,\gamma_i$ as in Lemma \ref{lemma:points_from_interval_slope} may be called an interval-slope representation of a piecewise linear function. It is convenient when computing a sum of such functions: 

\begin{lemma}\label{lemma:sum_of_piecewise_linear_functions}
Let $f,f'$ be piecewise linear functions on $[0,4^N]$ such that $f(0)=f'(0)=0$ and whose interval-slope representations are $\alpha_i,\gamma_i, i=1,\dots L+1$ and $\alpha'_{i'},\gamma'_{i'}, i=1,\dots L'+1$. Let $f''= f + f'$. Then, the interval-slope representation of $f''$ contains at most $L+L'+1$ pairs $\alpha''_{i''},\gamma''_{i''}$ and can be computed in time $O\parenth{L+L'}$. 

More generally, for all $i$ in some finite range, let $f_i$ be a piecewise linear function on $[0,4^N]$ with $f_i(0)=0$ and with $L_i+1$ pairs in its interval-slope representation. Then, the interval-slope representation of the sum $\sum_i f_i$ has at most $1+ \sum_i L_i$ pairs and can be computed in time $O\parenth{\sum_i L_i}$. 
\end{lemma}

\begin{proof}
Consider first the stamement about two functions $f$ and $f'$. Initialize $i,i',i'':=1$. While $i \leq L+1$ and $i' \leq L'+1$: 
\begin{enumerate}
\item $\gamma''_{i''}:=\gamma_i + \gamma'_{i'}$;
\item If $\alpha_i < \alpha'_{i'}$, then $\alpha''_{i''} := \alpha_i$,  $i:=i+1$;
\item If $\alpha_i > \alpha'_{i'}$, then $\alpha''_{i''} := \alpha'_{i'}$, $i':=i'+1$;
\item If $\alpha_i=\alpha'_{i'}$, then $\alpha''_{i''} : =\alpha_i$, $i:=i+1$, $i':=i'+1$;
\item $i'':=i''+1$;
\end{enumerate}
Note that $\alpha_{L+1} = \alpha'_{L'+1}=4^N$, so the while loop is executed at most $L+L'+1$ times.  At the end, the arrays $\alpha'', \gamma''$ contain the interval-slope representation of $f''$. 

The statement about a sum of more than two functions follows by induction. 
\end{proof}

In order to use Lemma \ref{lemma:sum_of_piecewise_linear_functions} to compute the CDF of $J$, note that 
\begin{equation}
\prob{J \leq j} = \sum_{q' \in \discretesimplex{\supportv}{n}} \prob{J \leq j, V \in \typeclass{q'}}
\end{equation}
Therefore, it suffices to compute the interval-slope representation of $\prob{J \leq j, V \in \typeclass{q'}}$ as a function of $j$ for each $q' \in \discretesimplex{\supportv}{n}$. This is achieved in the following:

\begin{lemma}\label{lemma:computing_interval_slope_representations}
Consider the following algorithm:
\begin{enumerate}
\item Pre-processing: For each $q \in \discretesimplex{\supportu\times\supportv}{n}$ compute: 
\begin{enumerate}
\item The size of the type class $\typeclass{q} \subset \parenth{\supportu\times\supportv}^n$, namely $|\typeclass{q}| = n! / \parenth{\prod_{(s,t)\in\supportu\times\supportv}(nq(s,t))!}$. 
\item The probability that $U,V$ equals any particular element in $\typeclass{q}$, namely $2^{-n\parenth{H(q) + \relent{q}{p}}} $.
\item The marginal $q'(\cdot) = \sum_{s \in \supportu} q(s,\cdot) \in \discretesimplex{\supportv}{n}$.
\item The size of the type class $\typeclass{q'} \subset \supportv^n$, namely $|\typeclass{q'}|=n! / \parenth{\prod_{t \in \supportv} (nq'(t))!}$. 
\end{enumerate}
Then, sort the elements $q \in \discretesimplex{\supportu\times\supportv}{n}$ in decreasing order according to the probability that $U,V$ equals any particular element of $\typeclass{q}$. 
\item Construction of the interval-slope representations: go over the $q \in \discretesimplex{\supportu\times\supportv}{n}$ in the established order. For each $q$, update the interval-slope representation corresponding to its marginal $q'$. Specifically, append to the interval-slope representation of $\prob{J\leq j,V\in\typeclass{q'}}$ an interval of length $|\typeclass{q}|/|\typeclass{q'}|$ and slope $|\typeclass{q'}|2^{-n\parenth{H(q)+\relent{q}{p}}}$. 
\end{enumerate}
Then, this algorithm runs in time $O\parenth{n^{|\supportu||\supportv|}}$, and outputs the interval-slope representation of $\prob{J\leq j, V\in\typeclass{q'}}$ as a function of $j$ for all $q' \in \discretesimplex{\supportv}{n}$. Moreover, the total number of intervals in all the representations is $O\parenth{n^{|\supportu||\supportv|-1}}$
\end{lemma}

\begin{proof}
First, consider the total number of intervals. One interval is added for each $q \in \discretesimplex{\supportu\times\supportv}{n}$, so the total number of intervals is 
\begin{equation}
\absval{\discretesimplex{\supportu\times\supportv}{n}} = {n+|\supportu\times\supportv|-1 \choose |\supportu\times\supportv|-1}= O\parenth{n^{|\supportu||\supportv|-1}}
\end{equation}
If $|\supportu|^n < 4^N$, then after the completion of the main loop, it is necessary to append the interval $(|S|^n, 4^N)$ with zero slope in order to match the form required by Lemma \ref{lemma:sum_of_piecewise_linear_functions}. This adds $\absval{\discretesimplex{\supportv}{n}}$ intervals.

Next, consider the running time. Computation of type class sizes, marginal and probability of a particular element takes time $O(n)$ for each $q$, so the total time is $O\parenth{n^{|\supportu||\supportv|}}$. Sorting takes time $O\parenth{|\discretesimplex{\supportu\times\supportv}{n}| \log |\discretesimplex{\supportu\times\supportv}{n}| } = O\parenth{n^{|\supportu||\supportv|-1}\log n}$. Construction of the interval-slope representations takes time $O\parenth{n^{|\supportu||\supportv|-1}}$. 

Next, consider the correctness of the algorithm. Consider a three-level partition of $(\supportu\times\supportv)^n$. First,
\begin{equation}\label{eq:first_level_of_partition}
(\supportu\times\supportv)^n = \cup_{q' \in \discretesimplex{\supportv}{n}} \supportu^n \times \typeclass{q'}
\end{equation}
At the second level, for each $q' \in \discretesimplex{\supportv}{n}$, 
\begin{equation}\label{eq:second_level_of_partition}
\supportu^n \times \typeclass{q'} = \cup_{q \in \discretesimplex{\supportu\times\supportv}{n}:\sum_{s\in\supportu} q(s,\cdot)=q'} \typeclass{q}
\end{equation}
At the third level, for each $q \in \discretesimplex{\supportu\times\supportv}{n}$ with marginal $q'\in\discretesimplex{\supportv}{n}$,
\begin{equation}\label{eq:third_level_of_partition}
\typeclass{q} = \cup_{v \in \typeclass{q'}} \parenth{\typeclass{q} \cap \parenth{\supportu^n \times \{v\}}}
\end{equation}
Note that the sizes of the sets on the right-hand-side of \eqref{eq:third_level_of_partition} are all the same. This follows from Lemma \ref{lemma:orbits_of_the_diagonal_action}, because the type classes $\typeclass{q}$ and $\typeclass{q'}$ are orbits of the group of permutations of $n$ elements acting on $\supportu^n \times \supportv^n$ and on $\supportv^n$ respectively. Therefore, for each $q \in \discretesimplex{\supportu\times\supportv}{n}$ with marginal $q' \in \discretesimplex{\supportv}{n}$ and for each $v \in \typeclass{q'}$, 
\begin{equation}\label{eq:joint_type_class_and_particular_v}
\absval{\typeclass{q} \cap\parenth{\supportu^n \times \{v\}}} = \frac{|\typeclass{q}|}{|\typeclass{q'}|}
\end{equation}

Consider now the piecewise linear function $\prob{J\leq j, V=v}$  for any $v$. It depends only on the tuple $\parenth{p_{UV}(u,v)}_{u \in \supportu^n}$. The value $p_{UV}(u,v)=2^{-n\parenth{H(\type{u,v})+\relent{\type{u,v}}{p} } }$ depends only on the joint type $\type{u,v}$ of $u,v$. The number of times that $2^{-n\parenth{H(\type{u,v})+\relent{\type{u,v}}{p} } }$ appears in the tuple $\parenth{p_{UV}(u,v)}_{u \in \supportu^n}$ is $|\typeclass{\type{u,v}}|/|\typeclass{\type{v}}|$, according to \eqref{eq:joint_type_class_and_particular_v}. This has two consequences: 
\begin{enumerate}
\item The interval-slope representation of $\prob{J\leq j, V=v}$ contains intervals of length $|\typeclass{q}|/|\typeclass{\type{v}}|$ and slope $2^{-n\parenth{H(q)+\relent{q}{p}}}$, sorted according to the slope, where $q$ ranges over all joint types whose marginal is $\type{v}$. 
\item If $v,v'$ have the same type, then $\prob{J\leq j,V=v}=\prob{J\leq j, V=v'}$ for all $j$. 
\end{enumerate}
Therefore, for any $q' \in \discretesimplex{\supportv}{n}$ and any $v \in \typeclass{q'}$, 
\begin{equation}\label{eq:cdf_for_type_class_and_cdf_for_particular_element}
\prob{J\leq j, V \in \typeclass{q'}} = |\typeclass{q'}| \prob{J\leq j, V=v}
\end{equation}
so the algorithm outputs the correct interval-slope representations. 
\end{proof}

This completes the proof of Theorem \ref{thm:computing_the_bounds}. In summary:
\begin{enumerate}
\item Lemma \ref{lemma:computing_interval_slope_representations} gives an $O\parenth{n^{|\supportu||\supportv|}}$ time algorithm to compute the interval-slope representations of $\prob{J\leq j, V \in \typeclass{q'}}$ as a function of $j$, for all $q' \in \discretesimplex{\supportv}{n}$. The total number of intervals is $O\parenth{n^{|S||T|-1}}$.
\item Lemma \ref{lemma:sum_of_piecewise_linear_functions} gives an $O\parenth{n^{|\supportu||\supportv|-1}}$ time algorithm to combine these into the interval-slope representation of the CDF of $J$, with $L +1= O\parenth{n^{|\supportu||\supportv|-1}}$ intervals. Lemma \ref{lemma:points_from_interval_slope} shows that the coordinates of the $L$ points where the CDF of $J$ changes slope can be computed in $O\parenth{n^{|\supportu||\supportv|-1}}$ additional time. 
\item Given the coordinates of these $L$ points, Lemma \ref{lemma:computing_the_bounds_from_points} gives an $O(NL)=O\parenth{n^{|\supportu||\supportv|}}$ time algorithm to compute $\errorguesserrorprobconverse{\tilde{p}_{UV},r}$ and $\errorguesserrorprobachievability{\tilde{p}_{UV},r}$ for all $r=k/N$, $k=1,\dots,N$. After these are computed and stored, an $O(\log N)=O(\log n)$ time binary search can be used to compute $\errorguessrateconverse{\tilde{p}_{UV},\epsilon}$ and $\errorguessrateachievability{\tilde{p}_{UV},\epsilon}$ for a given $\epsilon >0$. 
\end{enumerate}

\emph{Remark:} In the special cases of the erasure and depolarizing channels, Theorems \ref{thm:erasure_channel_bounds} and \ref{thm:depolarizing_channel_bounds} give a more efficient way to compute the bounds then Theorem \ref{thm:computing_the_bounds}, because the former take into account the special structure of the erasure and depolarizing channels. 

\section{Conclusion}\label{sec:conclusion}

This article defines a family of groups of lower triangular matrices and uses them to establish a canonical form for unrestricted and stabilizer parity check matrices of a given size and rank. It also shows that the canonical form for the Clifford group can be computed in time $O(n^3)$, which improves upon the previously known time $O(n^6)$. Finally, the present article establishes a finite blocklength refinement of the hashing bound for stabilizer codes and Pauli noise, and shows that this achievability bound is nearly optimal among the class of arguments that use guessing the error as a substitute for guessing the coset. 

A possible direction for future work is to investigate the relation between guessing the error and guessing the coset. The concatenated coding examples of \cite{shor1996quantumerror,divincenzo1998quantumchannel,fern2008correctable,fern2008lowerbounds}
show that very noisy depolarizing channels can have positive capacity even though the hashing bound is non-positive at those values of the channel parameter. It would be interesting to see whether the techniques in the present article can be used to construct further examples of strict separation between guessing the error and guessing the coset for stabilizer codes.  

\section*{Acknowledgement}

This work is supported by the Bavarian Ministry of Economic Affairs, Regional Development and Energy through the project High-Efficiency Stabilizer Codes.


\begin{thebibliography}{10}

\bibitem{ashikhmin2014fidelitylowerbounds}
Alexei Ashikhmin.
\newblock Fidelity lower bounds for stabilizer and css quantum codes.
\newblock {\em IEEE Transactions on Information Theory}, 60(6):3104--3116, June
  2014.

\bibitem{bassler2023synthesisof}
Pascal Ba{\ss{}}ler, Matthias Zipper, Christopher Cedzich, Markus Heinrich,
  Patrick~H. Huber, Michael Johanning, and Martin Kliesch.
\newblock Synthesis of and compilation with time-optimal multi-qubit gates.
\newblock {\em {Quantum}}, 7:984, April 2023.

\bibitem{bennett1996mixed}
Charles~H. Bennett, David~P. DiVincenzo, John~A. Smolin, and William~K.
  Wootters.
\newblock Mixed-state entanglement and quantum error correction.
\newblock {\em Phys. Rev. A}, 54:3824--3851, Nov 1996.

\bibitem{berry1941accuracy}
Andrew~C. Berry.
\newblock The accuracy of the {G}aussian approximation to the sum of
  independent variates.
\newblock {\em Trans. Amer. Math. Soc.}, 49:122--136, 1941.

\bibitem{bravyi2021hadamardfree}
Sergey Bravyi and Dmitri Maslov.
\newblock Hadamard-free circuits expose the structure of the clifford group.
\newblock {\em IEEE Transactions on Information Theory}, 67(7):4546--4563, July
  2021.

\bibitem{calderbank1997quantum}
A.~R. Calderbank, E.~M. Rains, P.~W. Shor, and N.~J.~A. Sloane.
\newblock Quantum error correction and orthogonal geometry.
\newblock {\em Phys. Rev. Lett.}, 78:405--408, Jan 1997.

\bibitem{divincenzo1998quantumchannel}
David~P. DiVincenzo, Peter~W. Shor, and John~A. Smolin.
\newblock Quantum-channel capacity of very noisy channels.
\newblock {\em Phys. Rev. A}, 57:830--839, Feb 1998.

\bibitem{duncan2020graphtheoretic}
Ross Duncan, Aleks Kissinger, Simon Perdrix, and John van~de Wetering.
\newblock Graph-theoretic {S}implification of {Q}uantum {C}ircuits with the
  {ZX}-calculus.
\newblock {\em {Quantum}}, 4:279, June 2020.

\bibitem{esseen1942liapounoff}
Carl-Gustav Esseen.
\newblock On the {L}iapounoff limit of error in the theory of probability.
\newblock {\em Ark. Mat. Astr. Fys.}, 28A,(9):19, 1942.

\bibitem{fern2008correctable}
Jesse Fern.
\newblock Correctable noise of quantum-error-correcting codes under adaptive
  concatenation.
\newblock {\em Phys. Rev. A}, 77:010301, Jan 2008.

\bibitem{fern2008lowerbounds}
Jesse Fern and K.~Birgitta Whaley.
\newblock Lower bounds on the nonzero capacity of pauli channels.
\newblock {\em Phys. Rev. A}, 78:062335, Dec 2008.

\bibitem{gottesman1996class}
Daniel Gottesman.
\newblock Class of quantum error-correcting codes saturating the quantum
  hamming bound.
\newblock {\em Phys. Rev. A}, 54:1862--1868, Sep 1996.

\bibitem{khesin2023graphical}
Andrey~Boris Khesin, Jonathan~Z. Lu, and Peter~W. Shor.
\newblock Graphical quantum clifford-encoder compilers from the zx calculus,
  2023.

\bibitem{macwilliams1977theory}
Florence~Jessie MacWilliams and Neil James~Alexander Sloane.
\newblock {\em The theory of error-correcting codes}, volume~16.
\newblock Elsevier, 1977.

\bibitem{maslov2018shorterstabilizer}
Dmitri Maslov and Martin Roetteler.
\newblock Shorter stabilizer circuits via bruhat decomposition and quantum
  circuit transformations.
\newblock {\em IEEE Transactions on Information Theory}, 64(7):4729--4738,
  2018.

\bibitem{nielsen2012quantum}
Michael~A. Nielsen and Isaac~L. Chuang.
\newblock {\em Quantum Computation and Quantum Information}.
\newblock Cambridge University Press, jun 2012.

\bibitem{ostrev2024quantum}
Dimiter Ostrev.
\newblock Quantum ldpc codes from intersecting subsets.
\newblock {\em IEEE Transactions on Information Theory}, 70(8):5692--5709, Aug
  2024.

\bibitem{polyanskiy2010channel}
Yury Polyanskiy, H.~Vincent Poor, and Sergio Verdu.
\newblock Channel coding rate in the finite blocklength regime.
\newblock {\em IEEE Transactions on Information Theory}, 56(5):2307--2359,
  2010.

\bibitem{shor1996quantumerror}
Peter~W. {Shor} and John~A. {Smolin}.
\newblock {Quantum Error-Correcting Codes Need Not Completely Reveal the Error
  Syndrome}.
\newblock {\em arXiv e-prints}, pages quant--ph/9604006, April 1996.

\bibitem{strang2015algebra}
Gilbert Strang.
\newblock The algebra of elimination.
\newblock In {\em Excursions in Harmonic Analysis, Volume 3: The February
  Fourier Talks at the Norbert Wiener Center}, pages 3--22. Springer, 2015.

\bibitem{tomamichel2016quantum}
Marco Tomamichel, Mario Berta, and Joseph~M. Renes.
\newblock Quantum coding with finite resources.
\newblock {\em Nature Communications}, 7(1), may 2016.

\bibitem{wilde2011from}
Mark~M. Wilde.
\newblock From classical to quantum shannon theory.
\newblock {\em ArXiv}, abs/1106.1445, 2011.

\end{thebibliography}
\end{document}